\newcommand\numberthis{\addtocounter{equation}{1}\tag{\theequation}}	
\newcommand{\CC}{{\mathbb{C}}}
\newcommand{\BLACK}{\color[rgb]{0,0,0}}
\newcommand{\fv}{{\bf f}}
\newcommand{\gv}{{\bf g}}
\newcommand{\hv}{{\bf h}}
\newcommand{\sv}{{\bf s}}
\newcommand{\xv}{{\bf x}}
\newcommand{\yv}{{\bf y}}
\newcommand{\zerov}{{\bf 0}}
\newcommand{\Am}{{\bf A}}
\newcommand{\Bm}{{\bf B}}
\newcommand{\Cm}{{\bf C}}
\newcommand{\Dm}{{\bf D}}
\newcommand{\Fm}{{\bf F}}
\newcommand{\Gm}{{\bf G}}
\newcommand{\Hm}{{\bf H}}
\newcommand{\Id}{{\bf I}}
\newcommand{\Mm}{{\bf M}}
\newcommand{\Pm}{{\bf P}}
\newcommand{\Qm}{{\bf Q}}
\newcommand{\Rm}{{\bf R}}
\newcommand{\Tm}{{\bf T}}
\newcommand{\Xm}{{\bf X}}
\newcommand{\Zm}{{\bf Z}}
\newcommand{\Psim}{\hbox{\boldmath$\Psi$}}
\newcommand{\supp}{{\hbox{Supp}\,}}
\newcommand{\diag}{{\hbox{diag}}}
\newcommand{\trace}{{\hbox{\rm tr}\,}}
\newcommand{\rank}{{\hbox{rank}}}
\newtheorem{theorem}{Theorem}
\newtheorem{definition}{Definition}
\newtheorem{lemma}{Lemma}
\newtheorem{cor}{Corollary}
\newtheorem{remark}{Remark} 
\newtheorem*{remark*}{Remark}
\newtheorem{assumption}{Assumption~A-\kern-0pt} 
\DeclareMathOperator{\tr}{tr}
\newcommand{\eqtextt}[1]{\ensuremath{\stackrel{\smash{\scriptscriptstyle{\text{#1}}}}{=}}}
\newcommand\independent{\protect\mathpalette{\protect\independenT}{\perp}}
\def\independenT#1#2{\mathrel{\rlap{$#1#2$}\mkern2mu{#1#2}}}
\newcommand{\CCC}{{\mathcal C}}
\newcommand{\NNN}{{\mathcal N}}
\newcommand{\betato}{\overset{\rm \beta\to\infty}{\longrightarrow}}
\newcommand{\alphato}{\overset{\rm \alpha\to\infty}{\longrightarrow}}
\newcommand{\betaxto}{\overset{\rm \beta_x\to\infty}{\longrightarrow}}
\newcommand{\alphaxto}{\overset{\rm \alpha_x\to\infty}{\longrightarrow}}
\newcommand{\trans}{{\sf T}}
\newcommand{\eqdef}{{\overset{\Delta}{=}}}
\renewcommand{\H}{{\sf H}}
\newcommand{\mo}{{-1}}
\newcommand{\mt}{{-2}}
\newcommand{\OH}{{\frac{1}{2}}}
\newcommand{\lr}{\left( }
\newcommand{\rr}{\right) }
\newcommand{\ls}{\left[ }
\newcommand{\rs}{\right] }
\newcommand{\lc}{\left\{ }
\newcommand{\rc}{\right\} }
\newcommand{\expect}{{\mathbb E}}
\title{Interference-Aware RZF Precoding for \\ Multi Cell Downlink Systems}
\author{Axel M\"uller,
	Romain Couillet,
	Emil~Bj\"ornson,
	Sebastian Wagner,
	and~M\'erouane~Debbah
	\thanks{A.~M\"uller was with Intel Mobile Communications, Sophia Antipolis, France and with the Alcatel-Lucent Chair on Flexible Radio, Sup\'elec, Gif-sur-Yvette, France. He is currently with the Mathematical and Algorithmic Sciences Lab, France Research Center, Huawei Technologies Co. Ltd., Boulogne-Billancourt, France (email: axel.mueller@huawei.com).}
	\thanks{R.~Couillet is with Laboratoire des Signaux et Systèmes (L2S, UMR CNRS 8506), CentraleSupelec - CNRS - Universit\'e Paris-Sud, Gif-sur-Yvette, France (email: romain.couillet@centralesupelec.fr).}
	\thanks{E.~Bj\"ornson was with the Alcatel-Lucent Chair on Flexible Radio, Sup\'elec, Gif-sur-Yvette, France, and with the Department of Signal Processing, KTH Royal Institute of Technology, Stockholm, Sweden. He is currently with the Department of Electrical Engineering (ISY), Link\"{o}ping University, Link\"{o}ping, Sweden (email: emil.bjornson@liu.se).}
	\thanks{S.~Wagner is with Intel Mobile Communications, Sophia Antipolis, France (email: sebastian.wagner@intel.com).}
	\thanks{M.~Debbah is with Laboratoire des Signaux et Syst\`emes (L2S, UMR CNRS 8506), CentraleSupelec - CNRS - Universit\'e Paris-Sud, Gif-sur-Yvette, France, and with the Mathematical and Algorithmic Sciences Lab, France Research Center, Huawei Technologies Co. Ltd., Boulogne-Billancourt, France (email: merouane.debbah@centralesupelec.fr).}
	\thanks{E.~Bj\"ornson was funded by the International Postdoc Grant 2012-228 from The Swedish Research Council. This research has been supported by the ERC Starting Grant 305123 MORE (Advanced Mathematical Tools for Complex Network Engineering).}
}
\begin{document}

\maketitle

\begin{abstract} 
Recently, a structure of an optimal linear precoder for multi cell downlink systems has been described in \cite[Eq~(3.33)]{Bjornson2013d}. Other references (e.g., \cite{Jose2011b,Hoydis2013c}) have used simplified versions of the precoder to obtain promising performance gains. These gains have been hypothesized to stem from the additional degrees of freedom that allow for interference mitigation through interference relegation to orthogonal subspaces. However, no conclusive or rigorous understanding has yet been developed.

In this paper, we build on an intuitive interference induction trade-off and the aforementioned precoding structure to propose an interference aware RZF (iaRZF) precoding scheme for multi cell downlink systems and we analyze its rate performance. Special emphasis is placed on the induced interference mitigation mechanism of iaRZF. For example, we will verify the intuitive expectation that the precoder structure can either completely remove induced inter-cell or intra-cell interference.
We state new results from large-scale random matrix theory that make it possible to give more intuitive and insightful explanations of the precoder behavior, also for cases involving imperfect channel state information (CSI).
We remark especially that the interference-aware precoder makes use of all available information about interfering channels to improve performance. Even very poor CSI allows for significant sum-rate gains.
Our obtained insights are then used to propose heuristic precoder parameters for arbitrary systems, whose effectiveness are shown in more involved system scenarios. Furthermore, calculation and implementation of  these parameters does not require explicit inter base station cooperation.
\end{abstract}

\section{Introduction}
\label{sec:iaRZF_Intro}

The growth of data traffic and the number of user terminals (UTs) in cellular networks will likely persist for the foreseeable future \cite{CISCO2013}. In order to deal with the resulting demand, it is estimated \cite{1000xChallenge2013} that a thousand-fold increase in network capacity is required over the next $10$ years. 
Given that the available spectral resources are severely limited, the majority of the wireless community sees massive network densification as the most realistic approach to solving most pressing issues. Also historically, shrinking cell size has been the single most successful technique in satisfying demand for network capacity \cite[Chapter~6.3.4]{Webb2007}. In recent times, this technique has been named the \emph{small cell} approach \cite{andrews2012femtocells, Hoydis2011c}. 
A large body of research indicates that interference still is a major limiting factor for capacity in multi cell scenarios \cite{Dai2004,GesbertYu2010}, especially in modern cellular networks that serve a multitude of users within the same time/frequency resources.
In general, we see a trend to using more and more antennas for interference mitigation, e.g., via the massive MIMO approach \cite{Marzetta2010a}. Here, the number of transmit antennas surpasses the number of served UTs by an order of magnitude.
Independent of this specific approach, the surplus antennas can be used to mitigate interference by using spatial precoding \cite{Bjornson2013d,Irmer2011,GesbertSalzer2007,GesbertYu2010}.
The interference problem is generally compounded by the effect of imperfect knowledge concerning the channel state information (CSI).
Such imperfections are unavoidable, as imperfect estimation algorithms, limited number of orthogonal pilot sequences, mobile UTs, delays, etc.\ can not be avoided in practice. 
Hence, one is interested in employing precoding schemes that are robust to CSI estimation errors and exploit the available CSI as efficiently as possible.

Arguably, the most successful and practically applicable precoding scheme used today is RZF precoding \cite{Peel2005a} (also known as minimum mean square error (MMSE) precoding, transmit Wiener filter, generalized eigenvalue-based beamformer, etc.; see \cite[Remark~3.2]{Bjornson2013d}).
Classical RZF precoders are only defined for single cell systems and thus do not take inter cell interference into account. Disregarding available information about inter cell interference is particularly detrimental in high density scenarios, where interference is a main performance limiting factor.
Early multi cell extensions of the RZF scheme do not take the quality of CSI into account \cite{Zakhour2010a} and later ones either rely on heuristic distributed optimization algorithms or on inter cell cooperation \cite{Bjornson2009e} to determine the precoding vector. Thus, they offer limited insight into the precoder structure, i.e., into how the precoder works and how it could be improved.
%

An intuitive extension of the single cell RZF, with the goal of completely eliminating induced interference is to substitute the intra cell channel matrix $\Hm$ in the (qualitative, single antenna UTs) precoder formulation\footnote{Where $\xi$ represents some positive regularization parameter.} $\Fm=\Hm(\Hm^\H\Hm+\xi\Id)^\mo$ by a matrix $\check{\Hm}$, which is $\Hm$ projected onto the space orthogonal to the inter cell channel matrices, i.e., $\check{\Fm}=\check{\Hm}\lr \check{\Hm}^\H\check{\Hm}+\xi\Id\rr^\mo$. Hence induced interference can be completely removed at the cost of reduced signal power, if the CSI is perfectly known. 
However, it is immediately clear that this is a very harsh requirement, since the projection negatively affects the amount of signal energy received at the served UTs (unless $\Hm=\check{\Hm}$).
Assuming the precoding objective is system wide sum-rate optimization, one realizes that single cell RZF is probably not optimal, since it reduces the rate in other cells due to induced interference. 
Thus, a trade-off between the two extremes is expected to be beneficial, especially when the channel matrices are estimated with dissimilar quality.
In this paper we analyze the following class of precoders for multi-cell single antenna UT systems, which we will denote \emph{interference-aware RZF} (iaRZF). This class allows for the desired trade-off, as will be shown later on:
\begin{align}
\Fm^m_m 
& = \lr \sum_{l=1}^L \alpha^m_l \hat{\Hm}^m_l (\hat{\Hm}^m_l)^\H + \xi_m \Id_{N_m} \rr^\mo \hat{\Hm}^m_m \nu_m^\OH \label{eq:iaRZF_iaRZFprecoder} \,.
\end{align}
Here $\Fm^m_m$ is the linear precoder used by base station (BS) $m$ and $\hat{\Hm}^m_l$ denotes the imperfect estimate of the channel matrix from BS $m$ to the UTs in cell $l$. The factor $\xi_m$ is a regularization parameter and $\nu_m$ normalizes the precoder. Each channel matrix is assigned a factor $\alpha^m_l$, that can be interpreted as the importance placed on the respective estimated channel.
It is easy to see how this structure can mimic single cell RZF under perfect CSI (choose $\alpha^m_l=0$, $l\neq m$ and $\alpha^m_m=1$).
The weights $\alpha^m_l$ allow balancing signal power directed to the served users with interference induced to other cells. This can be used to optimize sum-rate performance in certain cases, as will be shown in Section~\ref{sec:iaRZF_SimpleiaRZF}. 
In general the optimal weights are not known and the classical UL/DL duality approach (e.g., \cite{dahrouj2010coordinated}, \cite{Bjornson2013d}) cannot be applied to find these weights, outside of power minimization settings or, if imperfect CSI is considered. 
We note that every BS can try to estimate the interference from other cells without explicit inter cell cooperation or communication, by means of blind or known pilot based schemes, though the CSI quality might be rather poor. Such estimation might be considered as implicit coordination.
In \cite{Hoydis2013c} a simplified version of iaRZF was discussed, where a single subset of UT channels was weighted with respect to an estimated receive covariance matrix of all interfering channels.
Hoydis \emph{et~al.} argued that ``large [weights] make the precoding vectors more orthogonal to the interference subspace'', but they did not conclusively and rigorously show how or why this is achieved.
The work in \cite{Hoydis2013c} builds upon results from \cite[Theorem~6]{Jose2011b}, which introduced the precoding structure offering a single common balance parameter weighting all inter cell interference. 
Differing from these works, we are more interested in increasing the sum-rate performance by giving the traditional RZF precoder even more additional degrees of freedom. Doing so enables it to separately take into account the interference induced to certain subspaces, associated with different cells. Increasing the degrees of freedom in the interference suppression was also shown to be effective in \cite{Bjornson2012c} for perfect CSI.
Similar to the traditional RZF precoder our approach is empirical and based on several motivational aspects (see Section~\ref{sec:iaRZF_SimpleiaRZF}).
The iaRZF structure is also partially based on the work in \cite{dahrouj2010coordinated} and \cite[Eq~(3.33)]{Bjornson2013d}. In the latter, one finds one of the most recent and general treatments of the multi cell RZF precoder, along with proof that the proposed structure is optimal w.r.t.\ many utility functions of practical interest (see also \cite{Bjoernson2014}).

This paper analyzes the proposed iaRZF scheme, showing that it can significantly improve sum-rate performance in high interference multi cellular scenarios. In particular, it is not necessary to have reliable estimations of interfering channels; even very poor CSI allow for significant gains. 
We facilitate intuitive understanding of the precoder through new methods of analysis in both finite and large dimensions. Special emphasis is placed on the induced interference mitigation mechanism of iaRZF.
To obtain fundamental insights, we consider the large-system regime in which the number of transmit antennas and UTs are both large. 
Furthermore, new finite dimensional approaches for analyzing multi cell RZF precoding schemes are introduced and applied for limiting cases.
We derive deterministic expressions for the asymptotic user rates, which also serve as accurate approximations in practical non-asymptotic regimes. Merely the channel statistics are needed for calculation and implementation of our deterministic expressions.
These novel expressions generalize the prior work in \cite{Wagner2012a} for single cell systems and in \cite{Hoydis2013a} for multi cell systems, where only deterministic channel covariance matrices can be used in the analysis pertaining to the suppression of inter cell interference. 
Then, these extensions are used to optimize the sum-rate of the iaRZF precoding scheme in limiting cases.
Insights gathered from this lead us to propose and motivate an appropriate heuristic scaling of the precoder weights w.r.t.\ various system parameters, that offers attractive sum-rate performance; also in non-limiting cases.

The notation in this paper adheres to the following general rules.
Boldface lower case is used for column vectors, and upper case for matrices. ${\bf X}^{\mbox{\tiny T}}$ and ${\bf X}^{\mbox{\tiny H}}$  denote the transpose, and conjugate transpose of ${\bf X}$, respectively, while $\tr ({\bf X})$ is the matrix trace function.
The expectation operator is denoted $\mathbb{E}[\cdot]$.
The spectral norm of $\Xm$ is denoted $\|\Xm\|_2$ and the Euclidean norm of $\xv$ is denoted $\|\xv\|_2$. Circularly symmetric complex Gaussian random vectors are denoted $\mathcal{CN}(\bar{{\bf x}},{\bf Q})$, where $\bar{{\bf x}}$ is the mean and ${\bf Q}$ is the covariance matrix.
The set of all complex numbers is denoted by $\CC$, with $\CC^{N\times 1}$ and $\CC^{N\times M}$ being the generalizations to vectors and matrices, respectively. The $M\times M$ identity matrix  is written as ${\bf I}_M$, the zero vector of length $M$ is denoted ${\bf 0}_{M\times 1}$ and the zero matrix ${\bf 0}_M$.
Throughout this paper, superscripts generally refer to the origin (e.g., cell $m$) and subscripts generally denote the destination (e.g., cell $l$ or UT $k$ of cell $l$), when both information are needed.
We employ $ \independent $ and $ \not\independent $ to mean stochastic independence and dependence, respectively.

\section{Understanding iaRZF}
\label{sec:iaRZF_SimpleiaRZF}
In order to intuitively understand and motivate the iaRZF precoder this section first analyzes its behavior and impact in a relatively simple system.

\subsection{Simple System}
\label{ssec:iaRZF_SimpleSystem}

\begin{figure}
	\centering
	\includegraphics[width=0.3\textwidth]{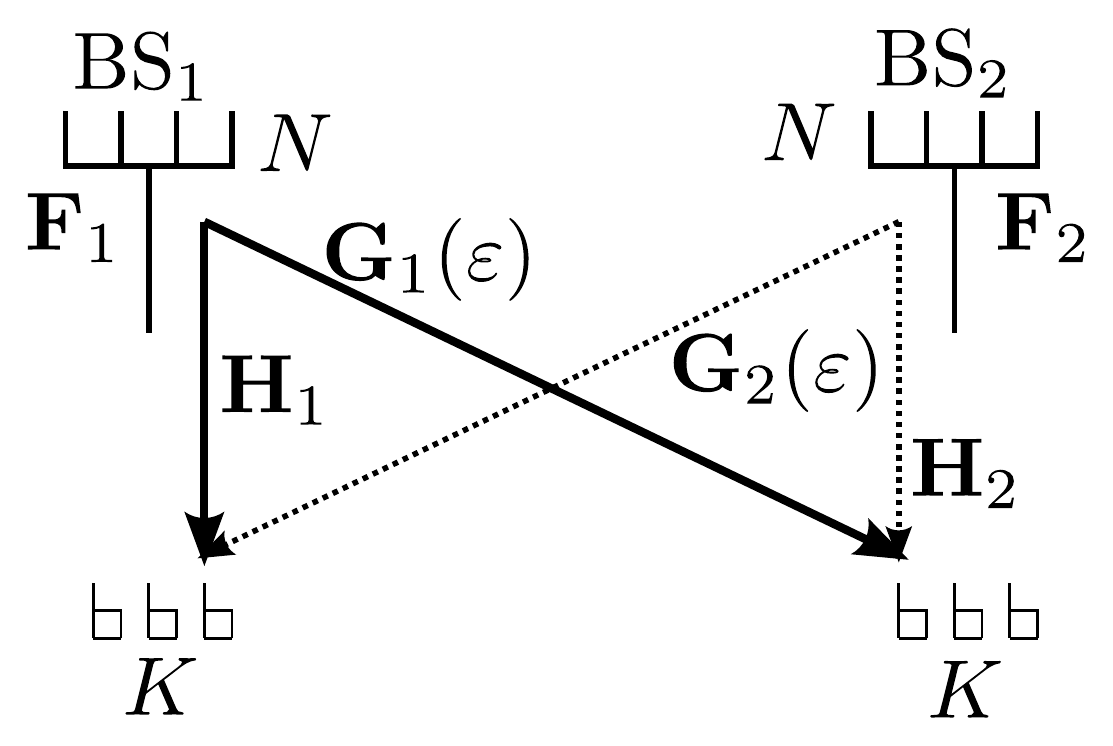}
	\caption{Simple $2$ BS downlink system.}
	\label{fig:iaRZF_Simplified2BSSystem}
\end{figure}
We start by examining a simple two cell downlink system, as depicted in Figure~\ref{fig:iaRZF_Simplified2BSSystem}, which is a further simplification of the Wyner model \cite{Wyner1994a,Xu2011a}. It features $2$ BSs, BS$_1$ and BS$_2$, with $N$ antennas each. Every BS serves one cell with $K$ single antenna users. For convenience we introduce the notations $c=K/N$ and $\bar{x}=\bmod(x,2)+1, x\in\{1, 2\}$. In order to circumvent scheduling complications, we assume $N\geq K$.
The aggregated channel matrix between BS$_x$ and the affiliated users is denoted $\Hm_x =[\hv_{x,1}, \ldots, \hv_{x,K}]\in \CC^{N\times K}$ and the matrix pertaining to the users of the other cell $\Gm_x(\varepsilon) =[\gv_{x,1}, \ldots, \gv_{x,K}]\in \CC^{N\times K}$, which is usually abbreviated as $\Gm_x$. We treat $\varepsilon$ as an arbitrary interference channel gain/path-loss factor. The precoding matrix used at BS$_x$ is denoted $\Fm_x\in\CC^{N\times K}$. 
For the channel realizations we choose a block-wise fast fading model, where $\hv_{x,k} \sim \CCC\NNN(0,\frac{1}{N}\Id_N)$ and $\gv_{x,k} \sim \CCC\NNN(0,\varepsilon\frac{1}{N}\Id_N)$ for $k=1,\ldots,K$. The scaling factor $1/N$ is introduced for technical reasons and is further explained in Remark~\ref{rem:iaRZF_1overNfactor}.

Denoting $\fv_{x,k}$ the $k$th column of $\Fm_x$, $\Fm_{x[k]}$ as $\Fm_x$ with its $k$th column removed and $n_{x,k}\sim\CCC\NNN(0,1)$ the received additive Gaussian noise at UT$_{x,k}$, we define the received signal at UT$_{x,k}$ as
\begin{align*}
y_{x,k} = \hv_{x,k}^\H \fv_{x,k} s_{x,k} + 
\underbrace{ \hv_{x,k}^\H \Fm_{x[k]} \sv_{x[k]} }_{\text{intra cell interference}} + 
\underbrace{ \gv_{\bar{x},k}^\H \Fm_{\bar{x}} \sv_{\bar{x}} }_{\text{inter cell interference}} 
+ n_{x,k}
\end{align*}
where $\sv_x \sim \CCC\NNN(0, \rho_x\Id_N)$\footnote{We remark that $\rho_x$ is of order $1$.} is the vector of transmitted Gaussian symbols. It defines the average per UT transmit power of BS$_x$ as $\rho_x$ (normalized w.r.t.\ noise). 
The notations $\sv_{x[k]}$ and $s_{x,k}$ designate the transmit vector without symbol $k$ and the transmit symbol of UT$_{x,k}$.

When calculating the precoder $\Fm_x$, we assume that the channel $\Hm_x$ can be arbitrarily well estimated, however, we allow for mis-estimation of the ``inter cell interference channel'' $\Gm_x$ by adopting the generic Gauss-Markov formulation 
\begin{align*}
\hat{\Gm}_x = \sqrt{1-\tau^2}\Gm_x + \tau \tilde{\Gm}_x \,.
\end{align*}
Choosing $\tilde{\gv}_{x,k} \sim \CCC\NNN(0,\varepsilon\frac{1}{N}\Id_N)$, we can vary the available CSI quality by adjusting $0\leq\tau\leq 1$ appropriately.

In this section we use the previously introduced iaRZF precoding scheme. Given our simple system the unnormalized precoder reads
\begin{align}
\Mm_x = \lr \alpha_x\Hm_x\Hm_x^\H+\beta_x\hat{\Gm}_x\hat{\Gm}_x^\H + \xi_x\Id  \rr^\mo \Hm_x \label{eq:iaRZF_Mm_x} \,. 
\end{align}
One remarks that the regularization of the identity matrix can also be controlled by scaling $\alpha_x$ and $\beta_x$ at the same time, if $\xi_x$ is fixed to an arbitrary value (e.g., $1$). 
We usually keep $\alpha_x$, $\beta_x$ and $\xi_x$ in our formulas to allow easy comparison to traditional RZF formulations.
We assume the following normalization of the precoder:
\begin{align}
\Fm_x = \sqrt{K}\frac{\Mm_x}{\sqrt{\tr \lr\Mm_x^\H\Mm_x\rr}} \label{eq:iaRZF_TwoBSPrecoderNorm}
\end{align}
i.e., the sum energy of the precoder $\tr \lr\Fm_x^\H\Fm_x\rr$ is $K$.\footnote{It can be shown, using results from Appendix~\ref{ssec:iaRZF_appAsilomarPwrNorm} by taking $\chi_i=1$ $\forall i$, that this implies $\|\fv_{x,k}\|_2^2\to 1$, almost surely, under Assumption~\ref{as:iaRZF_cchain} for the given simplified system.
} 

\begin{remark}[Channel Scaling $1/N$]
	\label{rem:iaRZF_1overNfactor}
	The statistics of the channel matrices in this section incorporate the factor $1/N$, which simplifies comparisons with the later, more general, large-scale results (see Section~\ref{sec:iaRZF_Asilomar}).
	This can also be interpreted as transferring a scaling of the transmit power into the channel itself. 
	The precoder formulations presented in the current section can be simply rewritten to fit the more traditional statistics of $\hv_k \sim \CCC\NNN(0, \Id_N)$ and $\gv_k \sim \CCC\NNN(0, \varepsilon\Id_N)$, by using
	\begin{align*}
	\widetilde{\Mm}_x & = \lr \alpha_x\Hm_x\Hm_x^\H+\beta_x\hat{\Gm}_x\hat{\Gm}_x^\H + N\xi_x\Id  \rr^\mo \Hm_x 
	\end{align*}
	instead of $\Mm_x$.
	This equation shows that, under the chosen model, the regularization implicitly scales with $N$. However, one can either chose $\xi$ or $\alpha, \beta$ appropriately, to achieve any scaling.
\end{remark}

\subsection{Performance of Simple System}
\label{ssec:iaRZF_PerfSimpleSystem}
First, we compare the general performance of the proposed iaRZF scheme with classical approaches, i.e., single cell zero-forcing (ZF), maximum-ratio transmission (MRT) and RZF.
The rate of UT$_{x,k}$ can be defined as
\begin{align*}
r_{x,k} &= \log_2\lr 1 +  \frac{\text{Sig}_{x,k}}{\text{Int}_{x,k}^a + \text{Int}_{x,k}^r + 1} \rr
\end{align*}
where
$\text{Sig}_{x,k} 	= \rho_x \hv_{x,k}^\H \fv_{x,k} \fv_{x,k}^\H	\hv_{x,k}$, 
$\text{Int}_{x,k}^a	 = \rho_x \hv_{x,k}^\H \Fm_{x[k]} \Fm_{x[k]}^\H \hv_{x,k}$ and 
$\text{Int}_{x,k}^r	= \rho_{\bar{x}} \gv_{\bar{x},k}^\H \Fm_{\bar{x}} \Fm_{\bar{x}}^\H \gv_{\bar{x},k}$
denote the received signal power, received intra cell interference and received inter cell interference, respectively.

\begin{figure}
	\centering

   \begin{tikzpicture}[scale=0.8,font=\normalsize]
    \tikzstyle{every major grid}+=[style=densely dashed]
    \tikzstyle{every axis legend}+=[cells={anchor=west},fill=white,
        at={(0.02,0.98)}, anchor=north west, font=\normalsize ]
    \begin{axis}[
      xmin=-10,
      ymin=0.3,
      xmax=10,
      ymax=3,
      grid=major,
      scaled ticks=true,
   			xlabel={Per User Transmit Power to Noise ratio [dB]},   			
   			ylabel={Average Rate User$_{x,k}$ [bit/sec/Hz]},
      x post scale=1.1		
      ]
    \addplot[color=black!100, no marks, mark size=1.5pt,mark=x,line width=2pt] coordinates{
    (-15.000,0.162) (-14.000,0.198) (-13.000,0.240) (-12.000,0.291) (-11.000,0.350) (-10.000,0.418) (-9.000,0.496) (-8.000,0.585) (-7.000,0.687) (-6.000,0.796) (-5.000,0.922) (-4.000,1.060) (-3.000,1.213) (-2.000,1.382) (-1.000,1.563) (0.000,1.762) (1.000,1.976) (2.000,2.204) (3.000,2.443) (4.000,2.699) (5.000,2.964) (6.000,3.241) (7.000,3.528) (8.000,3.820) (9.000,4.122) (10.000,4.426) (11.000,4.740) (12.000,5.054) (13.000,5.372) (14.000,5.695) (15.000,6.017) 
    };
    \addlegendentry{ {iaRZF $\tau=0$} }
    \addplot[color=black!100, no marks, dashed, mark size=1.5pt,mark=o,line width=2pt] coordinates{
    (-15.000,0.162) (-14.000,0.197) (-13.000,0.239) (-12.000,0.289) (-11.000,0.347) (-10.000,0.413) (-9.000,0.490) (-8.000,0.576) (-7.000,0.671) (-6.000,0.774) (-5.000,0.891) (-4.000,1.016) (-3.000,1.151) (-2.000,1.297) (-1.000,1.448) (0.000,1.609) (1.000,1.773) (2.000,1.944) (3.000,2.107) (4.000,2.278) (5.000,2.434) (6.000,2.592) (7.000,2.733) (8.000,2.863) (9.000,2.984) (10.000,3.099) (11.000,3.187) (12.000,3.271) (13.000,3.335) (14.000,3.381) (15.000,3.451) 
    };
    \addlegendentry{ {iaRZF $\tau=0.5$} }
    \addplot[color=black!100, no marks, dotted, mark size=1.5pt,mark=o,line width=2pt] coordinates{
    (-15.000,0.161) (-14.000,0.196) (-13.000,0.236) (-12.000,0.284) (-11.000,0.339) (-10.000,0.401) (-9.000,0.472) (-8.000,0.547) (-7.000,0.630) (-6.000,0.715) (-5.000,0.808) (-4.000,0.903) (-3.000,0.999) (-2.000,1.096) (-1.000,1.192) (0.000,1.284) (1.000,1.370) (2.000,1.459) (3.000,1.525) (4.000,1.602) (5.000,1.655) (6.000,1.717) (7.000,1.758) (8.000,1.802) (9.000,1.838) (10.000,1.855) (11.000,1.877) (12.000,1.893) (13.000,1.903) (14.000,1.914) (15.000,1.933) 
    };
    \addlegendentry{ {iaRZF $\tau=1$} }    
    %
    \addplot[color=blue, no marks, mark size=2.5pt,mark=star,line width=2pt] coordinates{
    (-15.000,0.162) (-14.000,0.199) (-13.000,0.241) (-12.000,0.292) (-11.000,0.351) (-10.000,0.419) (-9.000,0.498) (-8.000,0.584) (-7.000,0.680) (-6.000,0.783) (-5.000,0.897) (-4.000,1.015) (-3.000,1.138) (-2.000,1.263) (-1.000,1.388) (0.000,1.511) (1.000,1.628) (2.000,1.744) (3.000,1.831) (4.000,1.934) (5.000,2.009) (6.000,2.084) (7.000,2.138) (8.000,2.197) (9.000,2.234) (10.000,2.271) (11.000,2.300) (12.000,2.319) (13.000,2.340) (14.000,2.354) (15.000,2.376) 
    };
    \addlegendentry{ {RZF} }	
    %
    \addplot[color=green, no marks, mark size=2.5pt,mark=o,line width=2pt] coordinates{
    (-15.000,0.128) (-14.000,0.159) (-13.000,0.196) (-12.000,0.240) (-11.000,0.294) (-10.000,0.357) (-9.000,0.430) (-8.000,0.515) (-7.000,0.609) (-6.000,0.715) (-5.000,0.831) (-4.000,0.953) (-3.000,1.080) (-2.000,1.210) (-1.000,1.343) (0.000,1.470) (1.000,1.594) (2.000,1.714) (3.000,1.807) (4.000,1.914) (5.000,1.992) (6.000,2.070) (7.000,2.127) (8.000,2.187) (9.000,2.227) (10.000,2.265) (11.000,2.295) (12.000,2.315) (13.000,2.337) (14.000,2.352) (15.000,2.375) 
    };
    \addlegendentry{ {ZF} }		
    %
    \addplot[color=red, no marks, mark size=2.5pt,mark=x,line width=2pt] coordinates{
    (-15.000,0.164) (-14.000,0.201) (-13.000,0.244) (-12.000,0.296) (-11.000,0.357) (-10.000,0.424) (-9.000,0.505) (-8.000,0.589) (-7.000,0.682) (-6.000,0.769) (-5.000,0.868) (-4.000,0.965) (-3.000,1.062) (-2.000,1.157) (-1.000,1.237) (0.000,1.324) (1.000,1.390) (2.000,1.460) (3.000,1.501) (4.000,1.558) (5.000,1.593) (6.000,1.625) (7.000,1.654) (8.000,1.677) (9.000,1.683) (10.000,1.702) (11.000,1.719) (12.000,1.723) (13.000,1.744) (14.000,1.744) (15.000,1.748) 
    };
    \addlegendentry{ {MRT} }	    
    \end{axis}
  \end{tikzpicture}   
	\caption[Simple system average user rate vs.~transmit power to noise ratio]{Average user rate vs.~transmit power to noise ratio ($N=160$, $K=40$, $\varepsilon=0.7$, $\rho_1=\rho_2=\rho$).}
	\label{fig:iaRZF_PerfSimpleiaRZF}
\end{figure}
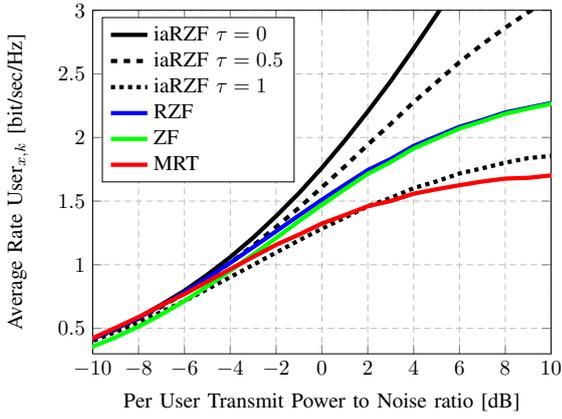
For comparison we used the following precoder formulations:
$\Mm_x^{\rm MRT} = \Hm_x$, $\Mm_x^{\rm ZF} = \Hm_x (\Hm_x^\H\Hm_x)^\mo$, $\Mm_x^{\rm RZF} = \Hm_x (\Hm_x^\H\Hm_x + \frac{K}{N\rho_x} \Id)^\mo$, where the regularization in $\Mm_x^{RZF}$ is chosen according to \cite{Bjoernson2014,Wagner2012a} and all precoders are normalized as in \eqref{eq:iaRZF_TwoBSPrecoderNorm}.
The iaRZF weights have been chosen to be $\alpha = \beta = N\cdot\rho_x$ and $\xi=1$, to simplify comparison with traditional RZF precoding.
The corresponding performance graphs, obtained by extensive Monte-Carlo (MC) simulations in the simplified system (for $N=160$, $K=40$, $\varepsilon=0.7$, $\rho_1=\rho_2=\rho$), are depicted in Figure~\ref{fig:iaRZF_PerfSimpleiaRZF}.

We observe that iaRZF generally outperforms the other schemes. This is not surprising, as the classical schemes do not take information about the interfered UTs into account. What is surprising, however, is the gain in performance even for very bad channel estimates (see curve for $\tau=0.5$). Only for extremely bad CSI we observe that iaRZF wastes energy due to a an improper choice of $\alpha,\,\beta$. Thus, it performs worse than the other schemes, that do not take $\tau$ into account for precoding. This problem can easily be circumvented by choosing proper weights that let $\beta\to0$ for $\tau\to1$; as will be elaborated later on.

\subsection[iaRZF for $\alpha_x,\beta_x\to \infty$]{Properties of iaRZF for $\boldsymbol{\alpha_x,\beta_x\to \infty}$}
\label{ssec:iaRZF_LimitBehavioriaRZF}
As has been briefly remarked by Hoydis \emph{et~al.} in \cite{Hoydis2013c}, the iaRZF weights $\alpha_x$ and $\beta_x$ should, intuitively, allow to project the transmitted signal in subspaces orthogonal to the UT$_x$'s (``own users'') and UT$_{\bar{x}}$'s (``other users'') channels, respectively.
This behavior, in the limit cases of $\alpha_x$ or $\beta_x \to \infty$, is treated analytically in this subsection.

\subsubsection{Finite Dimensional Analysis}
\label{sssec:iaRZF_LimitFinitedim}
Limiting ourselves to finite dimensional approaches and to perfect CSI ($\tau=0$), we can obtain the following insights.

First, we introduce the notation $\Pm_\Xm^\perp$ as a projection matrix on the space orthogonal to the column space of $\Xm$ and we remind ourselves that $\xi=1$ is still assumed. Following the path outlined in Appendix~\ref{ssec:iaRZF_ProofFinitDim} and assuming $\Hm_x^\H\Hm_x$ invertible (true with probability~$1$), one finds for $\alpha_x\to\infty$:
\begin{align*}
\alpha_x \Mm_x &  \alphaxto   \numberthis \label{eq:iaRZF_FDimiaRZF}
\Hm_x\lr\Hm_x^\H\Hm_x\rr^\mo - \\
& \Pm_{\Hm_x}^\perp\Gm_x\lr\beta_x^\mo\Id+\Gm_x^\H\Pm_{\Hm_x}^\perp\Gm_x\rr^\mo\Gm_x^\H\Hm_x\lr\Hm_x^\H\Hm_x\rr^\mo \,.
\end{align*}
Recall that the received signal at the UTs of BS$_x$ in our simple model, due to (only) the intra cell users, is given as $\yv_x^\text{intra} = \Hm_x^\H \Fm_x \sv_x$. Thus,
\begin{align*}
\yv_x^\text{intra} &\eqtextt{Lem~\ref{lem:iaRZF_Projection}}  \upsilon \Hm_x^\H \Hm_x\lr\Hm_x^\H\Hm_x\rr^\mo \sv_x = \upsilon\sv_x
\end{align*}
where the precoder normalization leaves a scaling factor $\upsilon$ that is independent of $\alpha_x$.  The Lemma~\ref{lem:iaRZF_Projection} used here can be found in Appendix~\ref{sec:iaRZF_LemmasTools}.
Hence, we see that for $\alpha_x\to\infty$ and $\beta_x$ bounded, the precoder acts similar to a traditional ZF precoder, i.e., the intra cell interference is completely suppressed in our system. The scaling factor includes the loss of energy caused by the alignment constraint, i.e., $\upsilon$ decreases the more ${\rm span}\{\Hm_x\}$ and ${\rm span}\{\Gm_x\}$ intersect. It remains to mention that due to the iaRZF definition, exact ZF can only be achieved in the limit for $N=K$, where $\Hm_x(\Hm_x^\H\Hm_x)^\mo = (\Hm_x\Hm_x^\H)^\mo\Hm_x$; assuming the inverses exist.

Looking now at the limit $\beta_x\to\infty$ and following Appendix~\ref{ssec:iaRZF_ProofFinitDim}, one arrives at
\begin{align}
\Mm_x  &\betaxto \nonumber \\
	&\ls\Pm_{\Gm_x}^\perp-\Pm_{\Gm_x}^\perp\Hm_x\lr\alpha_x^\mo\Id+\Hm_x^\H\Pm_{\Gm_x}^\perp\Hm_x\rr^\mo\Hm_x^\H\Pm_{\Gm_x}^\perp\rs\Hm_x  \nonumber \\ 
  =	& \check{\Hm}\lr \Id+\alpha\check{\Hm}^\H\check{\Hm}\rr^\mo \label{eq:iaRZF_FDimbetatoinfty_projectedchannel}
\end{align}
where we introduced $\check{\Hm}=\Pm_\Gm^\perp\Hm$, as the channel matrix $\Hm$ projected on the space orthogonal to the channels of $\Gm$.
One remembers that the received signal due to inter cell interference in our simple model is given as
\begin{align*}
\yv_x^\text{inter} = \Gm_{\bar{x}}^\H \Fm_{\bar{x}} \sv_{\bar{x}} 
\end{align*}
which, via \eqref{eq:iaRZF_FDimbetatoinfty_projectedchannel} in the case of $\Mm_{\overline{x}}$ and Lemma~\ref{lem:iaRZF_Projection}, directly gives $\yv_x^\text{inter}=0$. I.e., we see that for $\beta_{\overline{x}}\to\infty$ and $\alpha_{\overline{x}}$ bounded, the induced inter cell interference vanishes.

In \eqref{eq:iaRZF_FDimbetatoinfty_projectedchannel}, we finally see one of the main motivators for defining iaRZF, in the chosen form. Choosing $\beta_x=0$ gives the standard single cell RZF solution; choosing $\beta_x\to\infty$ gives an intuitively reasonable RZF precoder on projected channels that makes sure no interference is induced in the other cell.
It stands to reason that a sum-rate optimal solution can be found as a trade-off between these two extremes, by balancing induced interference and received signal power.

\subsubsection{Large-Scale Analysis}
\label{sssec:iaRZF_LimitRMT}
In order to find an appropriate expression of the sum-rate that does not rely on random quantities, we anticipate results from Subsection~\ref{ssec:iaRZF_AsymptoticEquivalent}. There we find a deterministic limit to which the random values of $\mathrm{SINR}_x$ almost surely converge, when $N,K \to \infty$; assuming $0 < c < \infty$. 
We can adapt the results from Theorem~\ref{theo:iaRZF_DEofSINR} to fit our the current simplified model (by choosing the parameters of the generalized Theorem~\ref{theo:iaRZF_DEofSINR} as $L=2, K_x=K, N_x=N, \chi^x_x=1, \chi^x_{\bar{x}}=\varepsilon, \tau^x_{\bar{x}}=\tau, \tau^x_x=0, \alpha^x_x=\alpha_{x}, \alpha^x_{\bar{x}}=\beta_{x}, \xi=1$, $P_x=\rho_x$, for $x\in\{1,2\}$). Doing so ultimately results in the following performance indicators $\mathrm{Sig}_{x} \xrightarrow[N,K\to+\infty]{\mathrm{a.s.}} \overline{\mathrm{Sig}}_{x}$ and $\mathrm{Int}_{x} \xrightarrow[N,K\to+\infty]{\mathrm{a.s.}} \overline{\mathrm{Int}}_{x} $, where
\begin{align*}
\overline{\mathrm{Sig}}_{x} &= P_x \lr 1 - \frac{c \alpha_x^2 e_x^2}{(1+\alpha_x e_x)^2} - \frac{c\beta_x^2 \varepsilon^2 e_x^2}{(1+\beta_x\varepsilon e_x)^2}  \rr  \\
\overline{\mathrm{Int}}_{x} &= 
\underbrace{P_x c \frac{1}{\lr 1+\alpha_x e_x\rr^2}}_{\text{from BS $x$}}
+ 
\underbrace{P_{\bar{x}} c \varepsilon \frac{1+2\beta_{\bar{x}}\varepsilon\tau^2e_{\bar{x}}+\beta_{\bar{x}}^2\varepsilon^2\tau^2e_{\bar{x}}^2 }{\lr 1+\beta_{\bar{x}} \varepsilon e_{\bar{x}}\rr^2}}_{\text{from BS $\bar{x}$}}	\label{eq:iaRZF_SimpleSystemInterf}	\numberthis \\
&\eqdef  \nonumber 	\overline{\mathrm{Int}}_{x}^{\text{BS}x} + \overline{\mathrm{Int}}_{x}^{\text{BS}\bar{x}}
\\	 					 
e_x &= \lr 1+ \frac{c\alpha_x}{1+\alpha_x e_x} + \frac{c\beta_x \varepsilon}{1+\beta_x\varepsilon e_x} \rr^\mo \label{eq:iaRZF_FPEsimpleSystem} \numberthis
\end{align*}
where $e_x$ is 
the unique non negative solution to the fixed point equation~\eqref{eq:iaRZF_FPEsimpleSystem}.
These expressions are precise in the large-scale regime ($N,K \to \infty$, $0<K/N<\infty$) and good approximations for finite dimensions.
As a consequence of the continuous mapping theorem (e.g., \cite{Couillet2011a}) the above finally implies $\mathrm{SINR}_x \xrightarrow[N,K\to+\infty]{\mathrm{a.s.}} \overline{\mathrm{SINR}}_x=\overline{\mathrm{Sig}}_{x} (\,\overline{\mathrm{Int}}_{x} + 1\,)^\mo$.

After realizing that $0<\liminf e_x<\limsup e_x<\infty$ for $K,N \to \infty$ (see Lemma~\ref{lem:iaRZF_ebound}), the large-scale formulations give the insights we already obtained from the finite dimensional analysis (see previous subsection). Slightly simplifying~\eqref{eq:iaRZF_SimpleSystemInterf} to reflect the perfect CSI case ($\tau=0$), one obtains 
\begin{align*}
&\lim_{\alpha_x\to\infty} \overline{\mathrm{Int}}_{x}^{\text{BS}x} = \lim_{\alpha_x\to\infty} P_x c \frac{1}{\lr 1+\alpha_x e_x\rr^2} = 0 \\
&\lim_{\beta_{\bar{x}}\to\infty} \overline{\mathrm{Int}}_{x}^{\text{BS}\bar{x}} = \lim_{\beta_{\bar{x}}\to\infty} P_{\bar{x}} c \frac{\varepsilon }{\lr 1+\beta_x \varepsilon e_x\rr^2} = 0
\end{align*}
i.e., for $\alpha_x\to\infty$ the intra cell interference vanishes and for $\beta_{\bar{x}}\to\infty$ the induced inter cell interference vanishes. Hence, at this point we have re-obtained the results from the previous subsection, where only finite dimensional techniques were used.

The large system formulation can now also be used to look at the important practical case of mis-estimation of the channels to the other cell's users. Employing again $0<\liminf e_x<\limsup e_x<\infty$ and~\eqref{eq:iaRZF_SimpleSystemInterf} leads to
\begin{align*}
&\lim_{\alpha_x\to\infty} P_x c \frac{1}{\lr 1+\alpha_x e_x\rr^2} = 0 \\
&\lim_{\beta_{\bar{x}}\to\infty} P_{\bar{x}} c \frac{(\beta_x^{-2}+2\varepsilon\tau^2e_x\beta_x^\mo+\varepsilon^2\tau^2e_x^2)\varepsilon }{\lr \beta_x^\mo+\varepsilon e_x\rr^2} = P_{\bar{x}} c \tau^2 \varepsilon
\end{align*}
i.e., for $\alpha_x\to\infty$ the intra cell interference still vanishes, under the assumption of perfect intra cell CSI. However, for $\beta_{\bar{x}}\to\infty$ the induced inter cell interference converges to $P_{\bar{x}} c \tau^2 \varepsilon$.
Unsurprisingly we see that, due to imperfect CSI, the induced inter cell interference cannot be completely canceled any more. 

\subsubsection{Large Scale Optimization}
\label{sssec:iaRZF_LimitOptimization}
One advantage of the large-scale approximation is the possibility to find asymptotically optimal weights for the limit behavior of iaRZF. 
However, to keep the calculations within reasonable effort, we limit the model to the case $P_1=P_2=P$  and $\alpha_1=\alpha_2=\alpha$, $\beta_1=\beta_2=\beta$.
Proceeding similar to the previous subsection, we obtain a formulation for the large-scale approximation of the (now equal) $\mathrm{SINR}$ values, when $\alpha\to\infty$. This is denoted 
$\overline{\mathrm{SINR}}^{\alpha\to\infty} = \overline{\mathrm{Sig}}^{\alpha\to\infty} \lr 1 + \overline{\mathrm{Int}}^{\alpha\to\infty}\rr^\mo$,
where
\begin{align*}
\overline{\mathrm{Sig}}^{\alpha\to\infty} &= 
P \lr 1 - c - \frac{c\beta^2 \varepsilon^2 e^2}{(1+\beta\varepsilon e)^2}  \rr  \\
\overline{\mathrm{Int}}^{\alpha\to\infty} &= 	 	
P c \varepsilon \frac{1+2\beta\varepsilon\tau^2e+\beta^2\varepsilon^2\tau^2e^2}{\lr 1+\beta \varepsilon e\rr^2}
\end{align*} 
and
\begin{align*}
e \eqdef e^{\alpha\to\infty} &= 
\lr 1+ \frac{c}{e} + \frac{c\beta \varepsilon}{1+\beta\varepsilon e} \rr^\mo \numberthis \,. \label{eq:iaRZF_SimpleModelLimitFPEe}
\end{align*}

The optimal values of the weight $\beta$ in limit case $\alpha\to\infty$ can be found by solving $ \partial \overline{\mathrm{SINR}}^{\alpha\to\infty} / \partial \beta = 0$. 
This leads to (see Appendix~\ref{ssec:iaRZF_appProofSimpleOptimizationalphainfty})
\begin{align}
\beta^{\alpha\to\infty}_{opt} = \frac{P (1-\tau^2) }{P c\varepsilon\tau^2+1} \,.
\label{eq:iaRZF_SimpSysbetaoptinf}
\end{align}
In other words, in the perfect CSI case ($\tau=0$), one should choose $\beta$ equal to the transmit power of the BSs. It also shows how one should scale $\beta$ in between the two obvious solutions, i.e., full weight on the interfering channel information for perfect CSI and no weight under random CSI ($\tau=1$).  
We remark that the interference channel gain factor $\varepsilon$ is also implicitly included in the precoder. Thus for $\varepsilon \to 0$, we have $\beta(\trace\hat{\Gm}_x^\H\hat{\Gm}_x)^2 \to 0$, 
while $\beta$ remains bounded. Hence no energy is wasted to precode for non-existent interference, as one would expect.

The same large-scale optimization can also be carried out for the limit $\beta\to\infty$. The SINR optimal weight for $\alpha$ can be found as 
(similar to Appendix~\ref{ssec:iaRZF_appProofSimpleOptimizationalphainfty})
\begin{align}
\alpha^{\beta\to\infty}_{opt} = \frac{P}{P c\varepsilon\tau^2+1} = \frac{1}{c\varepsilon\tau^2+1/P} \,.
\label{eq:iaRZF_SimpSysalphaoptinf}
\end{align}
Thus, like in the perfect CSI case ($\tau=0$), one should choose $\alpha$ equal to the transmit power of the BSs.
The implications for other limit-cases are not so clear.
We see that increasing the transmit power also increases the weight $\alpha$, up to the maximum value of $ 1/(c\varepsilon\tau^2)$.
The weight reduces as the interference worsens, i.e., when $\tau^2$, $\varepsilon$ grow. This makes sense, as the precoder would give more importance on the interfering channel (by indirectly increasing $\beta$ via normalization). The weight is also reduced, if the cell performance is expected to be bad, i.e., $c$ approaches $1$, which makes sense from a sum-rate optimization point of view.

Finally, we can easily calculate the SINR in the limit of both $\alpha$ and $\beta$ independently tending to infinity, as
\begin{align*}
\overline{\mathrm{SINR}}^{\alpha,\beta\to\infty} = \frac{P \lr 1-2c\rr}{{P c\varepsilon\tau^2+1}} \,.
\end{align*}


The rationale behind all analyses in this section is that optimal weights in the limiting case often make for good heuristic approximations in more general cases. 
For instance, one can
re-introduce the weights, found under the large-scale assumption, into the finite dimensional limit formulations.
Particularly interesting for this approach is combining \eqref{eq:iaRZF_SimpSysbetaoptinf} with \eqref{eq:iaRZF_FDimiaRZF} to find an ``heuristic interference aware zeroforcing'' precoder:
\begin{align*}
\Mm_x^{iaZF} & =
\Hm_x\lr\Hm_x^\H\Hm_x\rr^\mo - 
\Pm_{\Hm_x}^\perp\hat{\Gm}_x \\
& \times \lr\frac{P c\varepsilon\tau^2+1}{P (1-\tau^2) }\Id+\hat{\Gm}_x^\H\Pm_{\Hm_x}^\perp\hat{\Gm}_x\rr^\mo\hat{\Gm}_x^\H\Hm_x\lr\Hm_x^\H\Hm_x\rr^\mo \,.
\end{align*}
To finish this section, we remark that doing these optimizations without the assumption of arbitrarily well estimated intra cell channels, is still an open problem.

\subsubsection{Graphical Interpretation of the Results}
\label{sssec:iaRZF_LimitFiniteRMTFigures}
Here we show the implications of the previous subsection on the system performance of our simple model.
Of particular interest to us are comparisons of the heuristic scheme with, numerically obtained, sum-rate optimal weights.

\begin{figure}
	\centering

   \begin{tikzpicture}[scale=0.8,font=\normalsize]
    \tikzstyle{every major grid}+=[style=densely dashed]
    \tikzstyle{every axis legend}+=[cells={anchor=west},fill=white,
        at={(0.05,0.02)}, anchor=south west, font=\normalsize ]
    \begin{axis}[
      xmin=0,
      ymin=1.6,
      xmax=1,
      ymax=2.8,
      grid=major,
      scaled ticks=true,
   			xlabel={CSI randomness $\tau$},
   			ylabel={Average Rate User$_{x,k}$ [bit/sec/Hz]},
      x post scale=1.1		
      ]
	\addplot[color=black!100, mark size=4pt,mark options={solid, line width=1pt}, mark=x,line width=2pt] coordinates{
    (0.000,2.777) (0.050,2.773) (0.100,2.759) (0.150,2.737) (0.200,2.706) (0.250,2.669) (0.300,2.626) (0.350,2.578) (0.400,2.526) (0.450,2.472) (0.500,2.416) (0.550,2.359) (0.600,2.303) (0.650,2.248) (0.700,2.194) (0.750,2.144) (0.800,2.098) (0.850,2.056) (0.900,2.021) (0.950,1.995) (1.000,1.984) 
    };
    \addlegendentry{ {$(\alpha^{ls}_{opt}, \beta^{ls}_{opt})$} }    
    %
    \addplot[color=blue, dash pattern=on 3pt off 1pt on 1pt off 1pt, mark options={solid, line width=1pt}, mark size=4pt, mark=o, line width=2pt] coordinates{
    (0.000,2.777) (0.050,2.773) (0.100,2.759) (0.150,2.737) (0.200,2.706) (0.250,2.669) (0.300,2.626) (0.350,2.578) (0.400,2.526) (0.450,2.472) (0.500,2.416) (0.550,2.359) (0.600,2.303) (0.650,2.248) (0.700,2.194) (0.750,2.144) (0.800,2.098) (0.850,2.056) (0.900,2.021) (0.950,1.995) (1.000,1.984) 
    };
    \addlegendentry{ {$(\alpha^{\beta\to\infty}_{opt}, \beta^{\alpha\to\infty}_{opt})$} }		
    %
    \addplot[color=green, dashed, no marks, mark size=2.5pt,mark=o,line width=2pt] coordinates{
    (0.000,2.701) (0.050,2.696) (0.100,2.682) (0.150,2.659) (0.200,2.628) (0.250,2.590) (0.300,2.545) (0.350,2.496) (0.400,2.443) (0.450,2.387) (0.500,2.330) (0.550,2.272) (0.600,2.215) (0.650,2.159) (0.700,2.106) (0.750,2.055) (0.800,2.009) (0.850,1.968) (0.900,1.933) (0.950,1.908) (1.000,1.898) 
    };
    \addlegendentry{ {$(\infty, \beta^{\alpha\to\infty}_{opt})$} }
    black!40
    \addplot[color=red, densely dashed, no marks, mark size=2.5pt,mark=x,line width=2pt] coordinates{
    (0.000,2.676) (0.050,2.671) (0.100,2.656) (0.150,2.632) (0.200,2.599) (0.250,2.558) (0.300,2.511) (0.350,2.457) (0.400,2.400) (0.450,2.338) (0.500,2.274) (0.550,2.209) (0.600,2.142) (0.650,2.076) (0.700,2.009) (0.750,1.944) (0.800,1.879) (0.850,1.816) (0.900,1.755) (0.950,1.695) (1.000,1.637) 
    };
    \addlegendentry{ {$(\alpha^{\beta\to\infty}_{opt}, \infty)$} }		
    black!20
    \addplot[color=cyan, dotted, no marks, mark size=2.5pt,mark=o,line width=2pt] coordinates{
    (0.000,1.984) (0.050,1.984) (0.100,1.984) (0.150,1.984) (0.200,1.984) (0.250,1.984) (0.300,1.984) (0.350,1.984) (0.400,1.984) (0.450,1.984) (0.500,1.984) (0.550,1.984) (0.600,1.984) (0.650,1.984) (0.700,1.984) (0.750,1.984) (0.800,1.984) (0.850,1.984) (0.900,1.984) (0.950,1.984) (1.000,1.984) 
    };
    \addlegendentry{ {$(\alpha^{ls}_{opt0}, 0)$} }
    \end{axis}
  \end{tikzpicture}
	\caption[Simple system average user rate vs.~CSI quality for adaptive precoder weights.]{Average user rate vs.~CSI quality for adaptive precoder weights ($N=160$, $K=40$, $\varepsilon=0.7$, $P=10$dB).}
	\label{fig:iaRZF_Figure3b}
\end{figure}
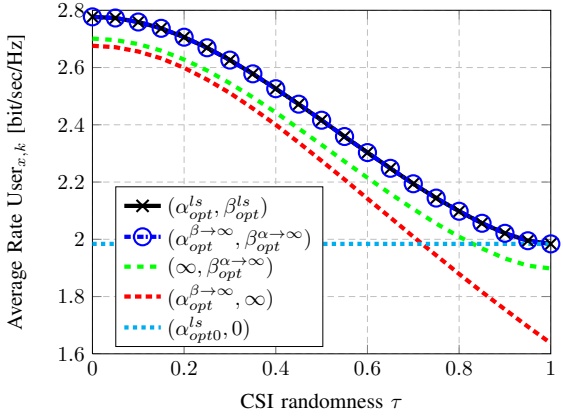
In Figure~\ref{fig:iaRZF_Figure3b}, we analyze the average UT rate with respect to CSI randomness ($\tau$), for different sets of precoder weights $(\alpha,\beta)$, that (mostly) adapt to the available CSI quality.
The values $(\alpha^{ls}_{opt}$ and $\beta^{ls}_{opt})$ are obtained using $4$D grid search, thereby renouncing the $\alpha_1=\alpha_2=\alpha$ and $\beta_1=\beta_2=\beta$ restrictions from before. 
Crucially, we see that the performance under $(\alpha^{ls}_{opt}, \beta^{ls}_{opt})$ and $(\alpha^{\beta\to\infty}_{opt}, \beta^{\alpha\to\infty}_{opt})$ is virtually the same. 
\\
The plot also contains the pair $(\alpha^{ls}_{opt0}, 0)$, which corresponds to single cell RZF precoding. The weight $\alpha^{ls}_{opt0}$ is again found by grid search. 
The performance is constant, as the precoder does not take the interfering channel into account.
However, we see that the optimally weighted iaRZF is equal to single cell RZF, when the channel estimation is purely random.

\begin{figure}
	\centering

   \begin{tikzpicture}[scale=0.8,font=\normalsize]
    \tikzstyle{every major grid}+=[style=densely dashed]
    \tikzstyle{every axis legend}+=[cells={anchor=west},fill=white,
        at={(0.02,0.02)}, anchor=south west, font=\normalsize ]
    \begin{axis}[
      xmin=0,
      ymin=1.6,
      xmax=1,
      ymax=2.9,
      grid=major,
      scaled ticks=true,
   			xlabel={CSI randomness $\tau$},
   			ylabel={Average Rate User$_{x,k}$ [bit/sec/Hz]},
      x post scale=1.1			
      ]
	\addplot[color=black!100, no marks, mark size=1.5pt,mark=x,line width=2pt] coordinates{
    (0.000,2.777) (0.050,2.773) (0.100,2.759) (0.150,2.737) (0.200,2.706) (0.250,2.669) (0.300,2.626) (0.350,2.578) (0.400,2.526) (0.450,2.472) (0.500,2.416) (0.550,2.359) (0.600,2.303) (0.650,2.248) (0.700,2.194) (0.750,2.144) (0.800,2.098) (0.850,2.056) (0.900,2.021) (0.950,1.995) (1.000,1.984) 
    };
    \addlegendentry{ {$\beta = \beta^{\alpha\to\infty}_{opt}$} }    
    %
    \addplot[color=blue, dash pattern=on 3pt off 1pt on 1pt off 1pt, no marks, mark size=2.5pt,mark=o,line width=2pt] coordinates{
    (0.000,2.399) (0.050,2.398) (0.100,2.394) (0.150,2.387) (0.200,2.378) (0.250,2.365) (0.300,2.351) (0.350,2.334) (0.400,2.314) (0.450,2.292) (0.500,2.268) (0.550,2.242) (0.600,2.215) (0.650,2.185) (0.700,2.155) (0.750,2.122) (0.800,2.089) (0.850,2.055) (0.900,2.020) (0.950,1.984) (1.000,1.948) 
    };
    \addlegendentry{ {$\beta = 1$} }		
    %
    \addplot[color=green, dashed, no marks, mark size=2.5pt,mark=x,line width=2pt] coordinates{
    (0.000,2.744) (0.050,2.741) (0.100,2.729) (0.150,2.710) (0.200,2.684) (0.250,2.652) (0.300,2.613) (0.350,2.570) (0.400,2.522) (0.450,2.470) (0.500,2.415) (0.550,2.359) (0.600,2.300) (0.650,2.240) (0.700,2.180) (0.750,2.119) (0.800,2.059) (0.850,1.999) (0.900,1.940) (0.950,1.882) (1.000,1.825) 
    };
    \addlegendentry{ {$\beta = 5$} }		
    %
    \addplot[color=red, densely dashed, no marks, mark size=2.5pt,mark=o,line width=2pt] coordinates{
    (0.000,2.777) (0.050,2.773) (0.100,2.759) (0.150,2.736) (0.200,2.706) (0.250,2.668) (0.300,2.623) (0.350,2.573) (0.400,2.518) (0.450,2.459) (0.500,2.398) (0.550,2.334) (0.600,2.269) (0.650,2.204) (0.700,2.138) (0.750,2.073) (0.800,2.008) (0.850,1.945) (0.900,1.883) (0.950,1.822) (1.000,1.763) 
    };
    \addlegendentry{ {$\beta = 10$} }
    \addplot[color=cyan, dotted, no marks, mark size=2.5pt,mark=o,line width=2pt] coordinates{
    (0.000,2.701) (0.050,2.696) (0.100,2.681) (0.150,2.657) (0.200,2.624) (0.250,2.583) (0.300,2.535) (0.350,2.482) (0.400,2.423) (0.450,2.362) (0.500,2.297) (0.550,2.231) (0.600,2.165) (0.650,2.097) (0.700,2.031) (0.750,1.964) (0.800,1.899) (0.850,1.836) (0.900,1.774) (0.950,1.714) (1.000,1.655) 
    };
    \addlegendentry{ {$\beta = 100$} }
    \end{axis}
    \begin{axis}[
      xmin=0,
	  ymin=0,
      xmax=1,
      ymax=15,
      ylabel near ticks, 
      yticklabel pos=right,
      axis x line=none,
      ylabel={$\beta_{opt}^{ls}$},
      ylabel shift = -15 pt,
      scaled ticks=true,
      clip=false	,
      x post scale=1.1,
      legend style={at={(0.98,0.98)}, anchor=north east, font=\normalsize, cells={anchor=west},fill=white}
      ]%
        \addplot[color=orange, no marks ,line width=1pt] coordinates{
        (0.000,10.001) (0.050,9.932) (0.100,9.730) (0.150,9.405) (0.200,8.972) (0.250,8.451) (0.300,7.862) (0.350,7.226) (0.400,6.563) (0.450,5.889) (0.500,5.218) (0.550,4.561) (0.600,3.927) (0.650,3.320) (0.700,2.746) (0.750,2.205) (0.800,1.698) (0.850,1.226) (0.900,0.786) (0.950,0.378) (1.000,0.000) 
        } 
		node at (axis cs:0.7,1.8) { {\BLACK $\beta_{opt}^{ls}$} };
		\addlegendentry{ {$\beta_{opt}^{ls}$} };
		%
		%
		\draw[black!30,dashed,line width=1pt] (axis cs:0.7,2.3) circle[radius=1.5em];
		\draw[black!30,dashed,line width=1pt] (axis cs:1.05,7.5) circle[radius=1.5em];		
		\draw[black!30,dashed,line width=1pt] [shorten >=1.5em, shorten <=1.5em] (axis cs:0.7,2.3) -- (axis cs:1.05,7.5);
    \end{axis}  
  \end{tikzpicture} 
	\caption[Simple system average user rate vs.~CSI quality for constant precoder weights.]{Average user rate vs.~CSI quality for constant precoder weights ($N=160$, $K=40$, $\varepsilon=0.7$, $\alpha = \alpha^{\beta\to\infty}_{opt}$, $P=10$dB).}
	\label{fig:iaRZF_Figure3a}
\end{figure}
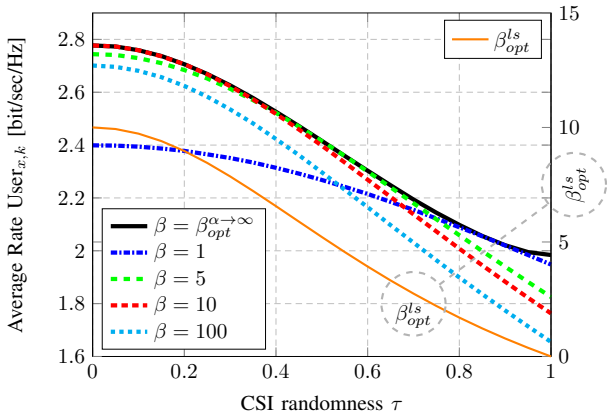
In Figure~\ref{fig:iaRZF_Figure3a}, we illustrate the effect of (sub-optimally, but conveniently) choosing a constant value for $\beta$. 
We set $\alpha = \alpha^{\beta\to\infty}_{opt}$ for all curves and also give the familiar $(\alpha^{\beta\to\infty}_{opt}, \beta^{\alpha\to\infty}_{opt})$ curve as a benchmark. Furthermore, the actual value of $\beta^{ls}_{opt}$ is given on a second axis to illustrate how one would need to adapt $\beta$ for optimal average rate performance.
Overall one observes that a constant value for $\beta$ is (unsurprisingly) only acceptable for a limited region of the CSI quality spectrum.
Small values of $\beta$ fit well for large $\tau$, middle values fit well for small $\tau$. Overly large (or small) $\beta$s do not reach optimal performance in any region.

The encouraging performance of iaRZF using the optimal weights derived under limit assumptions, paired with the promise of simple and intuitive insights, provides motivation for the next section. There we  apply the iaRZF scheme to a more general system.

\section{General System for iaRZF Analysis}
\label{sec:iaRZF_Asilomar}

\subsection{System Model}
\label{ssec:iaRZF_Systemmodel}
In the following, we analyze cellular downlink multi-user MIMO systems, of the more general type illustrated in Fig.~\ref{fig:iaRZF_DownlinkLCells}. Each of the $L$ cells consists of one BS associated with a number of single antenna UTs.
\begin{figure}
	\centering
	\includegraphics[width=0.4\textwidth]{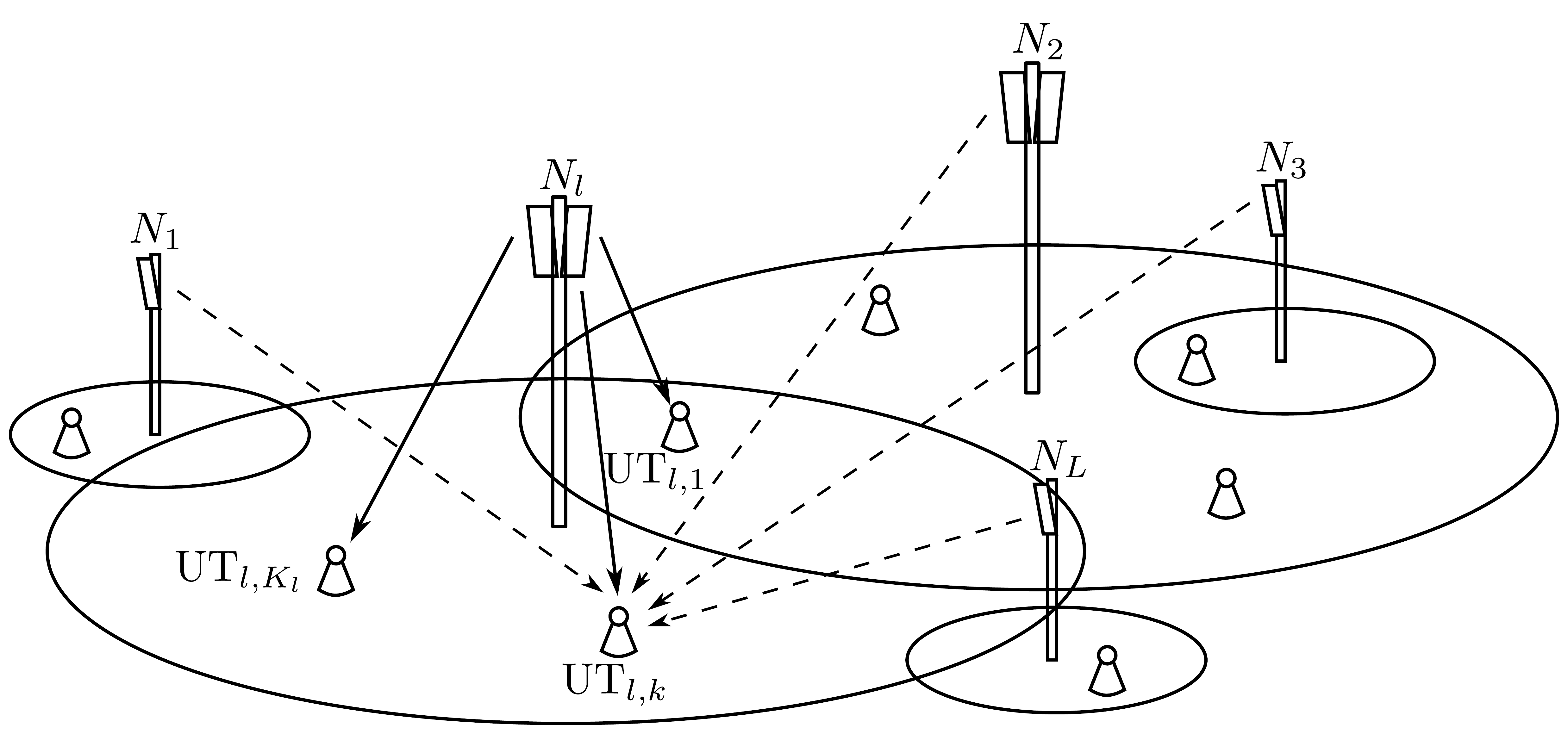}
	\vspace{0.51em}
	\caption{Illustration of a general heterogeneous downlink system.}
	\label{fig:iaRZF_DownlinkLCells}
\end{figure}
In more detail, the $l$th BS is equipped with $N_l$ transmit antennas and serves $K_l$ UTs. We generally set $N_l \geq K_l$ in order to avoid scheduling complications. We assume transmission on a single narrow-band carrier, full transmit-buffers, and universal frequency reuse among the cells.

The $l$th BS transmits a data symbol vector $\sv_l=[s_{l,1}, \ldots, s_{l,K_l}]^\trans$ intended for its $K_l$ uniquely associated UTs. This BS uses the linear precoding matrix $\Fm^l_l \in \mathbb{C}^{N_l \times K_l}$, where the columns $\fv^l_{l,k}\in \mathbb{C}^{N_l}$ constitute the precoding vectors for each UT. We note that BSs do not directly interact with each other and users from other cells are explicitly not served. Thus, the received signal $y_{l,k} \in \CC$ at the $k$th UT in cell $l$ is
\begin{align*}
y_{l,k} &=  \sqrt{\chi^l_{l,k}} (\hv^l_{l,k})^\H \fv^l_{l,k} s_{l,k}
+ \sum_{k^\prime \neq k} \sqrt{\chi^l_{l,k}} (\hv^l_{l,k})^\H \fv^l_{l,k^\prime} s_{l,k^\prime}  \\
& \qquad + \sum_{m\neq l} \sqrt{\chi^m_{l,k}} (\hv^m_{l,k})^\H \Fm^m_m \sv_m + n_{l,k} 
\end{align*}
where $n_{l,k} \sim \CCC\NNN(0,1)$ is an additive noise term. The transmission symbols are chosen from a Gaussian codebook, i.e., $s_{l,k} \sim \CCC\NNN(0,1)$. 
We assume block-wise small scale Rayleigh fading, thus the channel vectors are modeled as $\hv^m_{l,k} \sim \CCC\NNN(\zerov,\frac{1}{N_m}\Id_{N_m})$. The path-loss and other large-scale fading effects are incorporated in the $\chi^m_{l,k}$ factors.
The scaling factor $\frac{1}{N_m}$ in the fading variances is of technical nature and utilized in the asymptotic analysis. It can be canceled for a given arbitrarily sized system by modifying the transmission power accordingly; similar to Remark~\ref{rem:iaRZF_1overNfactor}.
Our setting here assumes that interference from other cells dominates w.r.t.\ other types of rate limitations, such as pilot contamination, which is thus not accounted for in our system model. According to recent works, this assumption is sensible if one considers practical ranges of antennas (on the order of $100$s) \cite{Bjoernson2014MassiveMaximal}, in systems with optimized pilot-reuse \cite{Yang2013a}, and also when using non-ideal hardware \cite[Figure~14]{bjornson2013massive}.

\subsection{Imperfect Channel State Information}

The UTs are assumed to perfectly estimate the respective channels to their serving BS, which enables coherent reception. This is reasonable, even for moderately fast traveling users, if proper downlink reference signals are alternated with data symbols.
Generally, downlink CSI can be obtained using either a time-division duplex protocol where the BS acquires channel knowledge from uplink pilot signaling \cite{Hoydis2013a} and using channel reciprocity or a frequency-division duplex protocol, where temporal correlation is exploited as in \cite{Choi2014a}. 
In both cases, the transmitter usually has imperfect knowledge of the instantaneous channel realizations, e.g., due to imperfect pilot-based channel estimation, delays in the acquisition protocols, or user mobility. 
To model imperfect CSI without making explicit assumptions on the acquisition protocol, we employ the generic Gauss-Markov formulation (see, e.g.,
\cite{Wagner2012a,Wang2006a,Nosrat2011a}) and we define the estimated channel vectors $\hat{\hv}^{m}_{l,k} \in \CC^{N_m}$ to be
\begin{align}
\hat{\hv}^{m}_{l,k} = \sqrt{\chi^m_{l,k}}\ls \sqrt{(1-(\tau^m_l)^2)}\hv^{m}_{l,k} + \tau^m_l\tilde{\hv}^{m}_{l,k} \rs
\label{eq:iaRZF_estimatedchannel}
\end{align}
where $\tilde{\hv}^{m}_{l,k}\sim\CCC\NNN(0,\frac{1}{N_m}\Id_{N_m})$ is the normalized independent estimation error.
Using this formulation, we can set the accuracy of the channel acquisition between the UTs of cell $l$ and the BS of cell $m$ by selecting $\tau^m_l \in [0,1]$; a small value for $\tau^m_l$ implies a good estimate. Furthermore, we remark that these choices imply $\hat{\hv}^{m}_{l,k} \sim \CCC\NNN(0,\chi^m_{l,k}\frac{1}{N_m}\Id_{N_m})$. For convenience later on, we define the aggregated estimated channel matrices as $\hat{\Hm}^m_l = [\hat{\hv}^{m}_{l,1}, \ldots, \hat{\hv}^{m}_{l,K_l}] \in \CC^{N_m \times K_l} $.

\subsection{iaRZF and Power Constraints}

Following the promising results observed in Section~\ref{sec:iaRZF_SimpleiaRZF}, we continue our analysis of the iaRZF precoding matrices $\Fm^m_m, \quad m=1, \ldots, L$, introduced in~\eqref{eq:iaRZF_iaRZFprecoder}. For some derivations, it will turn out to be useful to restate this precoder as
\begin{align*}
\Fm^m_m & = \lr \alpha^m_m \hat{\Hm}^m_m (\hat{\Hm}^m_m)^\H + \Zm^m + \xi_m \Id_{N_m} \rr^\mo \hat{\Hm}^m_m \nu_m^\OH
\end{align*}
where $\Zm^m = \sum_{l\neq m}\alpha^m_l \hat{\Hm}^m_l (\hat{\Hm}^m_l)^\H$. The $\alpha^m_l$ can be considered as weights pertaining to the importance one wishes to attribute to the respective estimated channel. We remark that the regularization parameter $\xi_m$ is usually chosen to be the number of users over the total transmit power \cite{Bjoernson2014} in classical RZF.
The factors $\nu_m$ are used to fulfill the average per UT transmit power constraint $P_m$\footnote{We remark that choosing $P_m$ of order $1$ will assure proper scaling of all terms of the SINR in the following (see \eqref{eq:iaRZF_FDimSINR}).}, pertaining to BS $m$:
\begin{align}
\frac{1}{K_m} \trace\ls \Fm^m_m (\Fm^m_m)^\H \rs = P_m \,. \label{eq:iaRZF_FDimPwrCtrl}
\end{align}

\subsection{Performance Measure}
\label{ssec:iaRZF_SINRfiniteDim}
Most performance measures in cellular systems are functions of the SINRs at each UT; e.g., (weighted) sum-rate and outage probability. 
Under the treated system model, the expected (w.r.t.\ the transmitted symbols $s^{(l)}_{l,k}$) received signal power at the $k$th UT of cell $l$, i.e., UT$_{l,k}$, is 
\begin{align*}
\mathrm{Sig}^{(l)}_{l,k} 	
&= \chi^l_{l,k} (\hv^l_{l,k})^\H \fv^l_{l,k} (\fv^l_{l,k})^\H \hv^l_{l,k} \,. \numberthis 	\label{eq:iaRZF_FDimSig}
\end{align*}
Similarly, the interference power is
\begin{align*}
\mathrm{Int}^{(l)}_{l,k} 	
&= \sum_{m\neq l} \chi^m_{l,k} (\hv^m_{l,k})^\H \Fm^m_m (\Fm^m_m)^\H \hv^m_{l,k}  \\
&\qquad + \chi^l_{l,k} (\hv^l_{l,k})^\H \Fm^l_{l[k]} (\Fm^l_{l[k]})^\H \hv^l_{l,k}  \numberthis
\label{eq:iaRZF_FDimInt} 
\end{align*}
where
\begin{align}
\Fm^l_{l[k]} = \lr \alpha^l_l \hat{\Hm}^l_l (\hat{\Hm}^l_l)^\H + \Zm^l + \xi_l \Id_{N_l} \rr^\mo \hat{\Hm}^l_{l[k]} \nu_l^\OH
\end{align}
and $\hat{\Hm}^l_{l[k]}$ is $\hat{\Hm}^l_{l}$ with its $k$th column removed.
Hence, the SINR at UT$_{l,k}$ can be expressed as
\begin{align}
\mathrm{SINR}_{l,k} = \mathrm{Sig}^{(l)}_{l,k} \lr \mathrm{Int}_{l,k} + 1 \rr^\mo \,. \label{eq:iaRZF_FDimSINR}
\end{align}

In the following, we focus on the sum-rate, which is a commonly used performance measure utilizing the SINR values and straightforward to interpret. Under the assumption that interference is treated as noise, the ergodic sum-rate is expressed as
\begin{align*}
R_{sum} = \expect \sum_{l,k} r_{l,k} = \expect \sum_{l,k} \log (1 +\mathrm{SINR}_{l,k}) 
\end{align*}
where SINRs are random quantities defined by the system model.

\subsection{Deterministic Equivalent of the SINR}
\label{ssec:iaRZF_AsymptoticEquivalent}
In order to obtain tractable and insightful expressions of the system performance, we propose a large scale approximation. This allows us to state the sum-rate expression in a deterministic and compact form that can readily be interpreted and optimized.
Also, the large system approximations are accurate in both massive MIMO systems and conventional small-scale MIMO of tractable size, as will be evidenced later via simulations (see Subsection~\ref{ssec:iaRZF_GenModNumerics}).
In certain special cases, optimizations of such approximations w.r.t.\ many performance measures, can be carried out analytically (see for example \cite{Wagner2012a}). In almost all cases, optimizations can be done numerically. 
We will derive a deterministic equivalent (DE) of the SINR values that allows for a large scale approximation of the sum-rate expression in~\eqref{eq:iaRZF_FDimSINR}. DEs are preferable to standard limit calculations, as they are precise in the limit case, they are also defined for finite dimensions and they provably approach the random quantity for increasing dimensions.
Introducing the ratio $c_i = K_i/N_i$, we make the following technical assumption in order to obtain a DE. 
\begin{assumption}
	\label{as:iaRZF_cchain}
	$N_i,K_i \to \infty$, such that for all $i$ we have
	\begin{equation*}
	0 < \liminf c_i \leq \limsup c_i < \infty \,.
	\end{equation*}
	This asymptotic regime is denoted $N\to\infty$ for brevity.
\end{assumption}
In other words, we require for $N_i$ and $K_i$ to grow large at the same speed.
By extending the analytical approach in \cite{Wagner2012a} and \cite{Hoydis2013a} to the SINR expression in \eqref{eq:iaRZF_FDimSINR}, we obtain a DE of the SINR, which is denoted $\overline{\mathrm{SINR}}_{l,k}$ in the following.

\begin{theorem}[Deterministic Equivalent of the SINR]
	\label{theo:iaRZF_DEofSINR}
	Under {\bf A-\ref{as:iaRZF_cchain}}, we have
	\begin{align*}
	\mathrm{SINR}_{l,k} - \overline{\mathrm{SINR}}_{l,k} \xrightarrow[N \to \infty]{\mathrm{a.s.}} 0 \,.
	\end{align*}
	Here
	\begin{align*}
	\overline{\mathrm{SINR}}_{l,k} =  \overline{\mathrm{Sig}}^{(l)}_{l,k} \lr \overline{\mathrm{Int}}_{l,k} + 1 \rr^\mo
	\end{align*}
	with
	\begin{align*}
	\overline{\mathrm{Sig}}^{(l)}_{l,k} &=  \overline{\nu}_l (\chi^l_{l,k})^2 e_{l}^2 \lr 1-(\tau^l_l)^2 \rr (y^l_{l,k})^2 
	\end{align*}
	and
	\begin{align*}
	& \overline{\mathrm{Int}}_{l,k}  =  \sum^L_{m=1} \overline{\nu}_m
	\lr 1+2 x^m_{l,k} e_{m}+\alpha^m_l\chi^m_{l,k} x^m_{l,k} e_{m}^2 \rr \chi^m_{l,k} g_{m}  (y^m_{l,k})^2
	\end{align*}
	given $x^m_{l,k} = \alpha^m_l\chi^m_{l,k}(\tau^m_l)^2$.
	The parameter $\overline{\nu}_m$, the abbreviations $g_{m}$ and $y^m_{l,k}$, as well as the corresponding fixed-point equation $e_{m}$ and $e_{m}^\prime$ are given in the following. 
	
	First, we define $e_{m}$ to be the unique positive solution of the fixed-point equation
	\begin{align}
	& e_{m} = \label{eq:iaRZF_e} \\
	& \lr \xi_m
	+ \frac{1}{N_m}\sum_{j=1}^{K_m} \alpha^m_m\chi^m_{m,j} y^m_{m,j}
	+ \frac{1}{N_m}\sum_{l\neq m}\sum_{k=1}^{K_l} \alpha^m_l\chi^m_{l,k}y^m_{l,k}
	\rr^\mo \nonumber
	\end{align}
	where $y^m_{l,k} = \lr 1+\alpha^m_l \chi^m_{l,k} e_{m}  \rr^\mo$. 
	We also have
	$
	\overline{\nu}_m = {P_m K_m}/ \lr N_m g_{m} \rr
	$
	with
	\begin{align*}
	g_{m} &= - \frac{1}{N_m}\sum_{j=1}^{K_m} \chi^m_{m,j}e_{m}^\prime (y^m_{m,k})^2 
	\end{align*}
	and $e_{m}^\prime$ can be found directly, once $e_{m}$ is known:
	\begin{align}
	e_{m}^\prime  &=
	\Bigg[
	\frac{1}{N_m}\sum_{j=1}^{K_m} (\alpha^m_m)^2(\chi^m_{m,j})^2 (y^m_{m,j})^2 \nonumber \\
	&\qquad +
	\frac{1}{N_m}\sum_{l\neq m}\sum_{k=1}^{K_l} (\alpha^m_l)^2(\chi^m_{l,k})^2 (y^m_{l,k})^2
	-
	e_{m}^\mt
	\Bigg]^\mo \,.
	\label{eq:iaRZF_ep}	  				
	\end{align}
\end{theorem}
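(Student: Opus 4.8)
The plan is to establish this deterministic equivalent by the standard machinery of large-dimensional random matrix theory, adapted to the multi-cell structure with per-cell weights. The quantities $\mathrm{Sig}^{(l)}_{l,k}$ and $\mathrm{Int}^{(l)}_{l,k}$ in~\eqref{eq:iaRZF_FDimSig}--\eqref{eq:iaRZF_FDimInt} are quadratic forms $(\hv^m_{l,k})^\H \Am \hv^m_{l,k}$ (and products of such), where $\Am$ involves a resolvent of the form $\Qm_m = \lr \sum_l \alpha^m_l \hat{\Hm}^m_l (\hat{\Hm}^m_l)^\H + \xi_m \Id \rr^\mo$. First I would isolate, for each cell $m$, the ``signature'' Gram matrix $\Bm_m = \sum_l \alpha^m_l \hat{\Hm}^m_l (\hat{\Hm}^m_l)^\H$, which is a sum of independent rank-$K_l$ terms with variance profiles governed by the $\chi^m_{l,k}$; note that after absorbing $\sqrt{\chi^m_{l,k}}$ and the Gauss-Markov split~\eqref{eq:iaRZF_estimatedchannel}, each $\hat{\Hm}^m_l$ is a matrix with i.i.d.\ $\mathcal{CN}(0,\tfrac{1}{N_m})$ entries scaled by $\sqrt{\chi^m_{l,k}(1-(\tau^m_l)^2)}$ plus an independent error part — so $\Bm_m$ has the structure of a generalized Marcenko--Pastur model with a discrete (per-column) variance profile.

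The key steps, in order, are: (i) a \emph{resolvent deterministic equivalent}: show $\tfrac{1}{N_m}\trace \Dm \Qm_m - \tfrac{1}{N_m}\trace \Dm \overline{\Qm}_m \to 0$ a.s.\ for deterministic $\Dm$ of bounded norm, where $\overline{\Qm}_m = \lr \xi_m \Id + \sum_{l}\sum_k \tfrac{\alpha^m_l \chi^m_{l,k}}{N_m(1+\alpha^m_l\chi^m_{l,k}e_m)}\Id \rr^{-1}$ — this is exactly where the fixed-point equation~\eqref{eq:iaRZF_e} for $e_m$ arises, via the standard Stieltjes-transform / Bai--Silverstein argument (rank-one perturbation, trace lemma, and a contraction/uniqueness argument à la \cite{Wagner2012a,Hoydis2013a}); (ii) the \emph{signal term}: expand $\fv^l_{l,k}$ via a rank-one perturbation to remove the $k$th column from the resolvent, apply the trace lemma to $(\hv^l_{l,k})^\H \Qm_{l[k]} \hat{\hv}^l_{l,k}$, and use that $\hat{\hv}^l_{l,k}$ is correlated with $\hv^l_{l,k}$ only through the factor $\sqrt{1-(\tau^l_l)^2}$, giving the $e_l^2(1-(\tau^l_l)^2)(y^l_{l,k})^2$ shape; (iii) the \emph{interference term}: this needs a \emph{second-order} deterministic equivalent of the form $\tfrac{1}{N_m}\trace \Dm \Qm_m \Em \Qm_m$, which introduces the derivative quantity $e_m'$ of~\eqref{eq:iaRZF_ep} and $g_m$; here one must carefully separate the ``own-cell, own-user'' contribution (which is correlated with the resolvent through $\hat{\Hm}^m_m$), the ``own-cell, other-user'' contribution, and the genuinely independent ``other-cell'' contributions, and also track the $\tau$-dependent cross terms that produce the $1 + 2x^m_{l,k}e_m + \alpha^m_l\chi^m_{l,k}x^m_{l,k}e_m^2$ factor; (iv) the \emph{normalization} $\nu_m$: apply step (i) with $\Dm$ related to $\hat{\Hm}^m_m(\hat{\Hm}^m_m)^\H$ to the power constraint~\eqref{eq:iaRZF_FDimPwrCtrl}, yielding $\overline{\nu}_m = P_m K_m/(N_m g_m)$; (v) assemble via the continuous mapping theorem and Slutsky-type arguments, controlling everything uniformly in $(l,k)$ using Assumption~\ref{as:iaRZF_cchain} and the boundedness of the $\chi$'s and $\alpha$'s, plus the bound $0<\liminf e_m\le\limsup e_m<\infty$ (Lemma~\ref{lem:iaRZF_ebound}).

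I expect the main obstacle to be step (iii), the interference term — specifically the bookkeeping of the \emph{correlated} contributions. Because $\fv^m_{m,k'}$ and $\hv^m_{l,k}$ both live on BS $m$'s antennas and $\fv^m_{m,k'}$ contains the resolvent $\Qm_m$ which itself depends on $\hat{\Hm}^m_m$, one cannot simply apply the trace lemma with independence; one must peel off rank-one terms, invoke the second-order ``$\Qm \Em \Qm$'' deterministic equivalent, and verify that the resulting fixed-point system for $(e_m, e_m', g_m)$ is consistent and has a unique admissible solution. The Gauss-Markov error structure compounds this, since the estimated channel's error part $\tilde{\hv}^m_{l,k}$ is independent of everything but still contributes to $\Bm_m$, so the ``effective'' variance seen by the resolvent is $\chi^m_{l,k}$ (full), whereas the correlation seen by the true channel $\hv^m_{l,k}$ carries only $\sqrt{1-(\tau^m_l)^2}$ — getting these two different scalings to land in the right places in $\overline{\mathrm{Sig}}$ versus $\overline{\mathrm{Int}}$ is the delicate point. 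A secondary technical nuisance is promoting convergence in probability (from the trace lemmas) to almost-sure convergence, which is handled by a Borel--Cantelli argument using high-order moment bounds on quadratic forms, together with a standard interpolation trick between finite $N$ and the limit; I would defer the full details of these routine concentration estimates to the appendix.
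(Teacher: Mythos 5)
Your plan follows essentially the same route as the paper's proof: a deterministic equivalent of the resolvent trace yielding the fixed-point equation for $e_m$, trace-lemma evaluations of the signal and interference quadratic forms after rank-one peeling, a derivative-type equivalent producing $e_m'$ and $g_m$ for the power normalization, and assembly via the continuous mapping theorem. The paper's execution differs only in minor technical details that you need not reinvent: it obtains the fixed-point equation by first conditioning on $\Zm^m$, applying \cite[Theorem~1]{Wagner2012a}, then removing the conditioning via \cite[Theorem~3.13]{Couillet2011a} (glued together by Tonelli) rather than treating the full weighted Gram matrix as a single variance-profile ensemble; it resolves the Gauss--Markov coupling in the interference term by solving a small linear system of coupled quadratic-form identities (a generalization of \cite[Lemma~7]{Wagner2012a}, i.e.\ the $Y_1,\dots,Y_6$ relations in the appendix) rather than via any separate ``second-order $\Qm\Em\Qm$'' argument; and the final concern you raise about promoting convergence in probability to almost-sure convergence is unnecessary, since the quadratic-form trace lemma (Lemma~\ref{lem:iaRZF_quadratic}) and the rank-one perturbation lemma already deliver a.s.\ convergence directly.
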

\begin{proof}
	See Appendix~\ref{sec:iaRZF_AsilomarTheo}.
\end{proof}

By employing dominated convergence arguments and the continuous mapping theorem (e.g., \cite{Couillet2011a}), we see that Theorem~\ref{theo:iaRZF_DEofSINR} implies, for each UT$(l,k)$,
\begin{align}
r_{l,k} - \log_2 (1+\overline{\mathrm{SINR}}_{l,k}) \xrightarrow[N \to \infty]{\mathrm{a.s.}} 0 \,.
\end{align}

These results have already been used in Section~\ref{sec:iaRZF_SimpleiaRZF} and will also serve as the basis in the following.

\section{Numerical Results}
\label{sec:iaRZF_Numerics}

In this section we will, first, introduce a heuristic generalization of the previously found (see Paragraph~\ref{sssec:iaRZF_LimitOptimization}) ``limit-optimal'' iaRZF precoder weights.
Furthermore, we provide simulations that corroborate the viability of the proposed precoder, also in systems that are substantially different to the idealized system used in Section~\ref{sec:iaRZF_SimpleiaRZF}.

\subsection{Heuristic Generalization of Optimal Weights}
\label{ssec:iaRZF_HeurisicWeights}
Following from the encouraging sum-rate performance results in Paragraph~\ref{sssec:iaRZF_LimitOptimization}, it seems promising to intuitively generalize the heuristic weights to systems with arbitrary many BSs, transmit powers, CSI randomness and user/antenna ratios.
Adapting the structures obtained in~\eqref{eq:iaRZF_SimpSysbetaoptinf} and~\eqref{eq:iaRZF_SimpSysalphaoptinf}, we define the general heuristic precoder weights as
\begin{align}
\tilde{\alpha}^a_b = \frac{P_a (1-(\tau^a_b)^2)}{P_b c_a \varepsilon^a_b (\tau^a_b)^2 + 1} \,.
\label{eq:iaRZF_heuristicgeneralweights}
\end{align}
Here, we introduced the new notation $\varepsilon^a_b$, which is the average gain factor between BS $a$ and the UTs of cell $b$, i.e., $\varepsilon^a_b = \frac{1}{K_b}\sum_k \chi^a_{b,k}$.
One can intuitively understand \eqref{eq:iaRZF_heuristicgeneralweights} by remembering that $\alpha^a_b$ represents the ``importance'' of the associated channels (from BS$a$ to UTs in cell $b$).
The numerator increases the weight, i.e., makes orthogonality a priority, if the interfering BS uses a large transmit power ($P_a$). Also, importance is lowered for poorly estimated channels.
The denominator reduces orthogonality to cells whose performance is expected to be bad, i.e., $c_b$ approaches $1$, which makes sense from a sum-rate optimization point of view. However, this aspect should be revisited, if interference mitigation is deemed more important than throughput. Weights are also lowered for cells that tolerate interference better due to large own transmit power ($P_b$).
Poor channel estimates reduce importance again. 
The intuitive reason for having $\varepsilon^a_b$ in the denominator is not immediately evident, since one would expect to place lower importance on UTs that are very far away. 
This behavior becomes clear over the realization, that the estimated channels in our model are not normalized (see~\eqref{eq:iaRZF_estimatedchannel}). Thus, the approximate effective weight of the precoder with respect to a normalized channel is $w^a_b=\tilde{\alpha}^a_b \varepsilon^a_b$.
Hence, for $\varepsilon^a_b \to 0$, we have $w^a_b \to 0$, i.e., no importance is placed on very weak channels. Using the same deliberation for $\varepsilon^a_b \to \infty$ we have $w^a_b$ tending to a constant value and for $\tau^a_b \to 0$ we have $w^a_b \to P_a \varepsilon^a_b$. Thus no energy is wasted on far away interferers/weak channels, even if one has perfect CSI of those channels.

We remind ourselves that in order to arrive at~\eqref{eq:iaRZF_heuristicgeneralweights}, we assumed $\xi=1$. Furthermore, systems serving one cluster of closely located UTs per BS, reproduce the initial simplified system closely and, thus, should respond particularly well to the heuristic weights.

\subsection{Performance}
\label{ssec:iaRZF_GenModNumerics}
\begin{figure}
	\centering
	\includegraphics[width=0.3\textwidth]{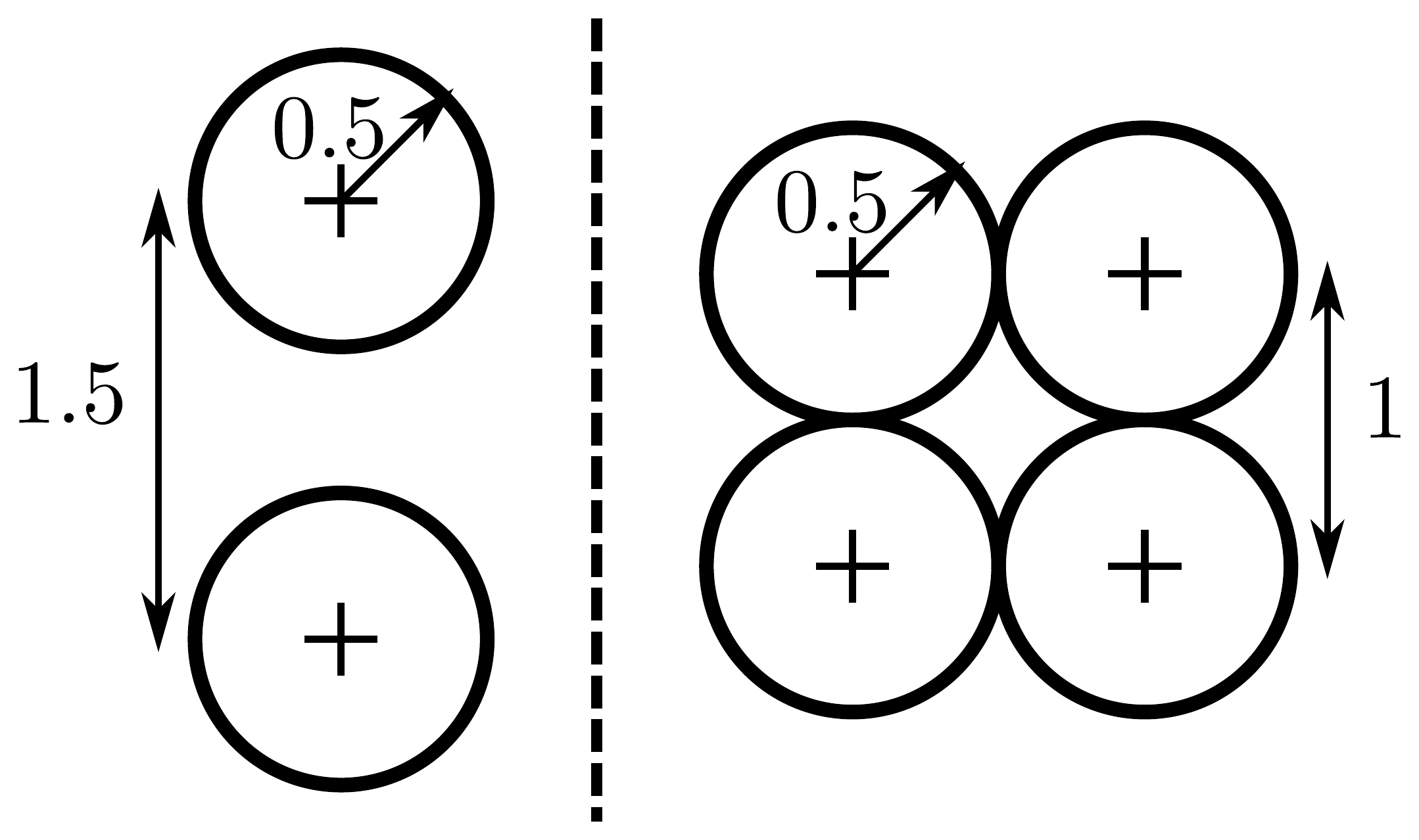}
	\vspace{0.5em}
	\caption{Geometries of the $2$ BS and $4$ BS downlink models.}
	\label{fig:iaRZF_GenMod2BS4BS}
\end{figure}
In order to verify viability of the heuristic approach, we introduce two models (see Figure~\ref{fig:iaRZF_GenMod2BS4BS}). In the first one, two BSs are distanced $1.5$ units, have a height of $0.1$ units and use $160$ antennas each. Around each BS, $40$ single antenna UTs of height $0$, are randomly (uniformly) distributed within a radius of $1$ unit. Hence, one obtains clear non-overlapping clusters that are closely related to the Wyner-like simplified model in Section~\ref{sec:iaRZF_SimpleiaRZF}. The pathloss between each BS and all UTs is defined as the inverse of the distance to the power of $2.8$.
The quality of CSI estimation between a BS and its associated UTs is denoted by $\tau^1_1=\tau^2_2=\tau_a$ and inter cell wise by $\tau^1_2=\tau^2_1=\tau_b$. 
Given the new more general $2$ BS model with randomly distributed UTs for each realization, one has to consider $4$ different channel weights ($\alpha^1_1$, $\alpha^2_2$, $\alpha^1_2$ and $\alpha^2_1$) in the whole system. 
The transmit power to noise ratio (per UT) at each BS is taken to be equal, i.e., $P_1=P_2=P$.
For this system we obtain the average UT rate performance, shown in Figure~\ref{fig:iaRZF_GenSysSimFig1}.
\begin{figure}
	\centering
	\pgfplotsset{
	legend image with text/.style={
		legend image code/.code={%
			\node[anchor=center] at (0.3cm,0cm) {#1};
		}
	},
}

   \begin{tikzpicture}[scale=0.8,font=\normalsize]
    \tikzstyle{every major grid}+=[style=densely dashed]
    \tikzstyle{every axis legend}+=[cells={anchor=west},fill=white,
        at={(0.02,0.98)}, anchor=north west, font=\normalsize ]
    \begin{axis}[
      xmin=-15,
      ymin=0,
      xmax=20,
      ymax=6,
      grid=major,
      scaled ticks=true,
   			xlabel={Transmit Power to Noise Ratio ($P$) [dB]},
   			ylabel={Average Rate [bit/sec/Hz]},
      x post scale=1.1			
      ]
\addlegendimage{legend image with text={ $(a)$:}}
\addlegendentry{\footnotesize  $\tau_a=0$, $\tau_b=0.4$}    
	%
	\addplot[color=blue, no marks, mark size=1.5pt,mark=o,line width=2pt] coordinates{
	(-15.000,0.575) (-14.000,0.646) (-13.000,0.726) (-12.000,0.814) (-11.000,0.910) (-10.000,1.016) (-9.000,1.131) (-8.000,1.256) (-7.000,1.389) (-6.000,1.531) (-5.000,1.682) (-4.000,1.841) (-3.000,2.006) (-2.000,2.178) (-1.000,2.356) (0.000,2.538) (1.000,2.723) (2.000,2.912) (3.000,3.101) (4.000,3.292) (5.000,3.483) (6.000,3.674) (7.000,3.863) (8.000,4.050) (9.000,4.235) (10.000,4.417) (11.000,4.594) (12.000,4.767) (13.000,4.934) (14.000,5.095) (15.000,5.247) (16.000,5.391) (17.000,5.525) (18.000,5.649) (19.000,5.762) (20.000,5.863) 
	};
	\addlegendentry{ {$ \tilde{\alpha}^a_b$} } 
    \addplot[color=black, only marks, mark size=3pt,mark=o,line width=1pt, forget plot] coordinates{
    (-12.500,0.769) (-2.500,2.092) (7.500,3.956) (17.500,5.595) 
    };
    %
	\addplot[color=black, no marks, dashed, mark size=1.5pt,mark=o,line width=2pt] coordinates{
		(-15.000,0.625) (-14.000,0.702) (-13.000,0.787) (-12.000,0.880) (-11.000,0.981) (-10.000,1.089) (-9.000,1.206) (-8.000,1.331) (-7.000,1.464) (-6.000,1.603) (-5.000,1.750) (-4.000,1.903) (-3.000,2.063) (-2.000,2.229) (-1.000,2.400) (0.000,2.576) (1.000,2.756) (2.000,2.939) (3.000,3.124) (4.000,3.312) (5.000,3.499) (6.000,3.687) (7.000,3.874) (8.000,4.060) (9.000,4.243) (10.000,4.424) (11.000,4.600) (12.000,4.772) (13.000,4.939) (14.000,5.099) (15.000,5.251) (16.000,5.394) (17.000,5.529) (18.000,5.652) (19.000,5.765) (20.000,5.867) 
	};
	\addlegendentry{ {$ \alpha^a_b$ opt} }     
	%
    %
    \node at (axis cs:15,5.1) [anchor=south east] { {\Large $a$} };
\addlegendimage{legend image with text={ $(b)$:}}
\addlegendentry{\footnotesize  $\tau_a=0.1$, $\tau_b=0.5$} 
	\addplot[color=green, no marks, mark size=1.5pt,mark=o,line width=2pt] coordinates{
		(-15.000,0.640) (-14.000,0.714) (-13.000,0.795) (-12.000,0.884) (-11.000,0.982) (-10.000,1.087) (-9.000,1.201) (-8.000,1.323) (-7.000,1.453) (-6.000,1.590) (-5.000,1.735) (-4.000,1.886) (-3.000,2.042) (-2.000,2.202) (-1.000,2.366) (0.000,2.530) (1.000,2.695) (2.000,2.859) (3.000,3.020) (4.000,3.176) (5.000,3.328) (6.000,3.472) (7.000,3.609) (8.000,3.738) (9.000,3.857) (10.000,3.966) (11.000,4.065) (12.000,4.153) (13.000,4.231) (14.000,4.300) (15.000,4.359) (16.000,4.410) (17.000,4.452) (18.000,4.488) (19.000,4.518) (20.000,4.543) 
	};
	\addlegendentry{ {$ \tilde{\alpha}^a_b$} }
	%
	\addplot[color=black, no marks, dashed, mark size=1.5pt,mark=o,line width=2pt] coordinates{
		(-15.000,0.694) (-14.000,0.776) (-13.000,0.865) (-12.000,0.961) (-11.000,1.065) (-10.000,1.177) (-9.000,1.295) (-8.000,1.420) (-7.000,1.551) (-6.000,1.687) (-5.000,1.828) (-4.000,1.973) (-3.000,2.121) (-2.000,2.273) (-1.000,2.427) (0.000,2.583) (1.000,2.739) (2.000,2.895) (3.000,3.050) (4.000,3.201) (5.000,3.348) (6.000,3.489) (7.000,3.624) (8.000,3.751) (9.000,3.869) (10.000,3.978) (11.000,4.078) (12.000,4.168) (13.000,4.248) (14.000,4.319) (15.000,4.381) (16.000,4.434) (17.000,4.480) (18.000,4.518) (19.000,4.550) (20.000,4.577) 
	};
	\addlegendentry{ {$ \alpha^a_b$ opt} }   
    \addplot[color=black, only marks, mark size=3pt,mark=o,line width=1pt, forget plot] coordinates{
    (-12.500,0.840) (-2.500,2.125) (7.500,3.682) (17.500,4.492) 
    };
    %
    \node at (axis cs:15,4.9) [anchor=north east] { {\Large $b$} };
    \end{axis}
    \begin{axis}[
    xmin=-15,
    ymin=0,
    xmax=20,
    ymax=6,
    x post scale=1.1,
    axis x line=none,
    axis y line=none,
    legend style={at={(0.98,0.02)}, anchor=south east, font=\normalsize, cells={anchor=west},fill=white}
    ]   
	\addlegendimage{legend image with text={ $(c)$:}}
	\addlegendentry{\footnotesize  $\tau_a=0.1$, $\tau_b=0.5$}  
	\addlegendimage{empty legend}
	\addlegendentry{\footnotesize  per cell parameters.} 	    
	\addplot[color=red, no marks, mark size=1.5pt,mark=o,line width=2pt] coordinates{
		(-15.000,0.569) (-14.000,0.638) (-13.000,0.714) (-12.000,0.798) (-11.000,0.890) (-10.000,0.991) (-9.000,1.101) (-8.000,1.219) (-7.000,1.345) (-6.000,1.479) (-5.000,1.620) (-4.000,1.768) (-3.000,1.920) (-2.000,2.077) (-1.000,2.237) (0.000,2.398) (1.000,2.559) (2.000,2.719) (3.000,2.875) (4.000,3.028) (5.000,3.175) (6.000,3.315) (7.000,3.448) (8.000,3.572) (9.000,3.686) (10.000,3.790) (11.000,3.884) (12.000,3.968) (13.000,4.042) (14.000,4.106) (15.000,4.161) (16.000,4.208) (17.000,4.248) (18.000,4.281) (19.000,4.308) (20.000,4.330) 
	};
	\addlegendentry{ {$ \tilde{\alpha}^a_b$} }
	\addplot[color=black, no marks, dashed, mark size=1.5pt,mark=o,line width=2pt] coordinates{
		(-15.000,0.569) (-14.000,0.638) (-13.000,0.714) (-12.000,0.798) (-11.000,0.890) (-10.000,0.991) (-9.000,1.101) (-8.000,1.219) (-7.000,1.345) (-6.000,1.479) (-5.000,1.620) (-4.000,1.768) (-3.000,1.920) (-2.000,2.077) (-1.000,2.237) (0.000,2.398) (1.000,2.559) (2.000,2.719) (3.000,2.875) (4.000,3.028) (5.000,3.175) (6.000,3.315) (7.000,3.448) (8.000,3.572) (9.000,3.686) (10.000,3.790) (11.000,3.884) (12.000,3.968) (13.000,4.042) (14.000,4.106) (15.000,4.161) (16.000,4.208) (17.000,4.248) (18.000,4.281) (19.000,4.308) (20.000,4.330) 
	};
	\addlegendentry{ {$ \alpha^a_b$ opt} }   
	\addplot[color=black, only marks, mark size=3pt,mark=o,line width=1pt, forget plot] coordinates{
		(-12.500,0.757) (-2.500,2.001) (7.500,3.518) (17.500,4.300) 
	};  
	\node at (axis cs:15,4.1) [anchor=north east] { {\Large $c$} };   
    \end{axis}
  \end{tikzpicture}   
	\caption[2 BSs: Average rate vs.~transmit power to noise ratio.]{2 BSs: Average rate vs.~transmit power to noise ratio ($N_x=160$, $K_x=40$, $P_x=P$, $(\tau_a,\tau_b)\in\{(0,0.4),(0.1,0.5)\}$, i.e., cases $a$, $b$, $c$).} 
	\label{fig:iaRZF_GenSysSimFig1}
\end{figure}
The markers denote results of MC simulations that randomize over UT placement scenarios and channel realizations, when the precoding weights are chosen as in~\eqref{eq:iaRZF_heuristicgeneralweights}.
The main intention for this graph is to compare the performance under heuristic weights ($\tilde{\alpha}^a_b$) and numerically optimal weights ($ \alpha^a_b$ opt), found via $4$D grid search. We observe that the performance of both approaches is virtually the same. 
We also see that even when diverging from the simple system ($\tau_a = 0$, case $a$) by choosing $\tau_a = 0.1$, i.e., case $b$, the heuristic weights still perform practically the same as exhaustive numerical optimization.
The same holds true for the more extreme case $c$, where we chose to modify the numbers of BS antennas, UTs and the CSI quality on a per cell basis ($N_1=160$, $N_2=60$, $K_1=40$, $N_2=20$, $P_x=P$, $\tau_a=0.1$, $\tau_b=0.5$). 

Also, we look at a more complex system of $4$ BSs (see Figure~\ref{fig:iaRZF_GenMod2BS4BS}). The BSs, of height $0.1$ units, are placed on the corners of a square with edge length $1$ units. The UT distribution and pathloss are chosen as before.
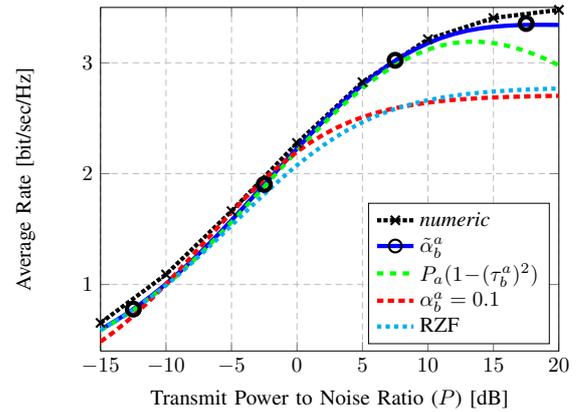
\begin{figure}
	\centering
	   \begin{tikzpicture}[scale=0.8,font=\normalsize]
    \tikzstyle{every major grid}+=[style=densely dashed]
    \tikzstyle{every axis legend}+=[cells={anchor=west},fill=white,
        at={(0.98,0.02)}, anchor=south east, font=\normalsize ]
    \begin{axis}[
      xmin=-15,
      ymin=0.4,
      xmax=20,
      ymax=3.5,
      grid=major,
      scaled ticks=true,
   			xlabel={Transmit Power to Noise Ratio ($P$) [dB]},
   			ylabel={Average Rate [bit/sec/Hz]},
      x post scale=1.1			
      ]
    \addplot[color=black!100, densely dotted, mark size=3pt,mark=x,line width=2pt, mark options={solid, line width=1pt}] coordinates{
    (-15.000,0.653) (-10.000,1.091) (-5.000,1.662) (0.000,2.277) (5.000,2.826) (10.000,3.213) (15.000,3.405) (20.000,3.478)
    };
    \addlegendentry{ {\emph{numeric}} }  
    \addplot[color=black!100, only marks, mark size=3pt,mark=o,line width=2pt, forget plot] coordinates{
    (-12.500,0.778) (-2.500,1.903) (7.500,3.023) (17.500,3.355)
    };
%
    %
    \addplot[color=blue, no marks, mark size=1.5pt,mark=o,line width=2pt, forget plot] coordinates{
    (-15.000,0.593) (-14.000,0.662) (-13.000,0.737) (-12.000,0.819) (-11.000,0.909) (-10.000,1.005) (-9.000,1.109) (-8.000,1.218) (-7.000,1.334) (-6.000,1.454) (-5.000,1.579) (-4.000,1.707) (-3.000,1.837) (-2.000,1.967) (-1.000,2.097) (0.000,2.226) (1.000,2.352) (2.000,2.473) (3.000,2.589) (4.000,2.699) (5.000,2.801) (6.000,2.895) (7.000,2.980) (8.000,3.056) (9.000,3.121) (10.000,3.177) (11.000,3.224) (12.000,3.261) (13.000,3.290) (14.000,3.311) (15.000,3.326) (16.000,3.335) (17.000,3.341) (18.000,3.343) (19.000,3.343) (20.000,3.342)
    };
    \addlegendimage{blue, solid, mark size=3pt,mark=o,line width=2pt, mark options={color=black!100, solid, line width=1pt}}
    \addlegendentry{ {$\tilde{\alpha}^a_b$} }
    %
    \addplot[color=green, dashed, no marks, mark size=1.5pt,mark=o, line width=2pt] coordinates{ 
    (-15.000,0.588) (-14.000,0.656) (-13.000,0.730) (-12.000,0.811) (-11.000,0.900) (-10.000,0.995) (-9.000,1.098) (-8.000,1.206) (-7.000,1.320) (-6.000,1.440) (-5.000,1.563) (-4.000,1.690) (-3.000,1.818) (-2.000,1.948) (-1.000,2.077) (0.000,2.204) (1.000,2.328) (2.000,2.448) (3.000,2.563) (4.000,2.670) (5.000,2.771) (6.000,2.862) (7.000,2.943) (8.000,3.014) (9.000,3.073) (10.000,3.120) (11.000,3.156) (12.000,3.179) (13.000,3.190) (14.000,3.190) (15.000,3.178) (16.000,3.156) (17.000,3.124) (18.000,3.083) (19.000,3.034) (20.000,2.978)  
    };
    \addlegendentry{ {$P_a(1-(\tau^a_b)^2)$} }  
    %
    \addplot[color=red, densely dashed, no marks,  mark size=2.5pt,mark=x,line width=2pt] coordinates{
    (-15.000,0.484) (-14.000,0.570) (-13.000,0.666) (-12.000,0.770) (-11.000,0.883) (-10.000,1.003) (-9.000,1.129) (-8.000,1.259) (-7.000,1.390) (-6.000,1.521) (-5.000,1.650) (-4.000,1.775) (-3.000,1.893) (-2.000,2.004) (-1.000,2.106) (0.000,2.197) (1.000,2.279) (2.000,2.351) (3.000,2.413) (4.000,2.465) (5.000,2.510) (6.000,2.547) (7.000,2.578) (8.000,2.603) (9.000,2.624) (10.000,2.641) (11.000,2.655) (12.000,2.666) (13.000,2.674) (14.000,2.682) (15.000,2.687) (16.000,2.692) (17.000,2.695) (18.000,2.698) (19.000,2.701) (20.000,2.702) 
    };
    \addlegendentry{ {$\alpha^a_b = 0.1$} }	
    %
    \addplot[color=cyan, dotted, no marks, mark size=1.5pt,mark=o,line width=2pt] coordinates{
    (-15.000,0.590) (-14.000,0.658) (-13.000,0.732) (-12.000,0.812) (-11.000,0.900) (-10.000,0.993) (-9.000,1.093) (-8.000,1.198) (-7.000,1.307) (-6.000,1.419) (-5.000,1.533) (-4.000,1.647) (-3.000,1.760) (-2.000,1.870) (-1.000,1.976) (0.000,2.076) (1.000,2.169) (2.000,2.255) (3.000,2.333) (4.000,2.403) (5.000,2.464) (6.000,2.518) (7.000,2.564) (8.000,2.603) (9.000,2.635) (10.000,2.663) (11.000,2.685) (12.000,2.704) (13.000,2.719) (14.000,2.731) (15.000,2.741) (16.000,2.749) (17.000,2.756) (18.000,2.761) (19.000,2.765) (20.000,2.768)
    };
    \addlegendentry{ {RZF} }     	
    \end{axis}
  \end{tikzpicture}   
	\caption[4 BSs: Average rate vs.~transmit power to noise ratio.]{4 BSs: Average rate vs.~transmit power to noise ratio ($N_x=160$, $K_x=40$, $P_x=P$ , $(\tau_a,\tau_b,\tau_c)=(0.1,0.3,0.4)$).} 
	\label{fig:iaRZF_GenSysSimFig2}
\end{figure}
Figure~\ref{fig:iaRZF_GenSysSimFig2} shows the performance of the $4$ BS system, assuming that each BS has $160$ antennas with a power constraint of $P$ per UT and serves $40$ UTs. We assume that the CSI randomness is overwhelmingly determined by inter-BS distance, i.e., we have $\tau_a$ for each BS to the adherent UTs, $\tau_b$ for each BS to UTs of BSs $1$ unit away and $\tau_c$ for each BS to UTs of BSs $\sqrt{2}$ units away. It is, thus, reasonable to choose $\tau_a<\tau_b<\tau_c$.
In the graph we compare the heuristic weights ($\tilde{\alpha}^a_b$) with various other weighting approaches. Round markers stem from Monte-Carlo simulations of the performance pertaining to the heuristic weights DEs and confirm the viability of the large scale approximation.
The benchmark \emph{numeric} result in this figure is obtained from optimizing the $16$ precoder weights via extensive numerical search, using $\tilde{\alpha}^a_b$ as a starting point. 
Of practical interest is the performance of the simplified heuristics $\alpha^a_b = P_a(1-(\tau^a_b)^2)$. This choice means that no interference is taken into account, i.e., $\varepsilon^a_b = 0$. We observe that most of the viability of the heuristic method comes from this part. Only at low SNR, where interference is the dominant problem, and very high SNR does the $P_a(1-(\tau^a_b)^2)$ approach become noticeably suboptimal. 
The constant weight approach ($\alpha^a_b = 0.1$) behaves like in Section~\ref{sec:iaRZF_SimpleiaRZF}, in that it is only a good match  for a limited part of the curve. 
For comparison purposes, we also compare with standard non-cooperative RZF, as defined in Subsection~\ref{ssec:iaRZF_PerfSimpleSystem}.

In general, employing $\tilde{\alpha}^a_b$ is most advantageous in high interference scenarios, as would be expected due to the ``interference aware'' conception of the precoder.
Carrying out the same simulations for different levels of CSI randomness, one observes that the gain of using the heuristic variant of iaRZF is substantial as long as the estimations of the interfering channels are not too bad. For extremely bad CSI, standard non-cooperative RZF can outperform iaRZF with $\tilde{\alpha}^a_b$. 
We also note that better CSI widens the gap between the $\tilde{\alpha}^a_b$ and $\alpha^a_b = P_a(1-(\tau^a_b)^2)$ weighted iaRZF approaches.

\section{Conclusion}
In this paper, we analyzed a linear precoder structure for multi cell systems, based on an intuitive interference induction trade-off and recent results on multi cell RZF, denoted iaRZF.
It was shown that the relegation of interference into orthogonal subspaces by iaRZF can be explained rigorously and intuitively, even without assuming large scale systems.
For example, one can indeed observe that the precoder can either completely get rid of inter cell or intra cell interference (assuming perfect channel knowledge).

Stating and proving new results from large-scale random matrix theory, allowed us to give more conclusive and intuitive insights into the behavior of the precoder, especially with respect to imperfect CSI knowledge and induced interference mitigation. The effectiveness of these large-scale results has been demonstrated in practical finite dimensional systems.
Most importantly, we concluded that iaRZF can use all available (also very bad) interference channel knowledge to obtain significant performance gains, while not requiring explicit inter base station cooperation.

Moreover, it is possible to analytically optimize the iaRZF precoder weights in certain limit scenarios using our large-scale results.
Insights from this were used to propose a heuristic generalization of the limit optimal iaRZF weighting for arbitrary systems. 
The efficacy of the heuristic iaRZF approach has been demonstrated by achieving a sum-rate close to the numerically optimally weighted iaRZF, for a wide range of general and practical systems.
The effectiveness of our heuristic approach has been intuitively explained by mainly balancing the importance of available knowledge about various channel and system variables.

\appendices

\section{Useful Notation and Lemmas}
\label{sec:iaRZF_LemmasTools}
In this appendix we give some frequently used lemmas and definitions to facilitate exposition in the following.

\begin{lemma}[Common Matrix Identities]\label{lem:iaRZF_MatIDs}
	Let $\Am$, $\Bm$ be complex invertible matrices and $\Cm$ a  rectangular complex matrix, all of proper size. We restate the following, well known, relationships:\\
	Woodbury Identity:
	\begin{align}
	& \lr \Am +\Cm\Bm\Cm^\H \rr^\mo = \nonumber \\ 
	&\quad \Am^\mo - \Am^\mo\Cm\lr\Bm^\mo+\Cm^\H\Am^\mo\Cm\rr^\mo\Cm^\H\Am^\mo \label{eq:iaRZF_WoodI}.
	\end{align}
	Searl Identity:
	\begin{align}
	\lr \Id + \Am\Bm \rr^\mo \Am = \Am \lr \Id + \Bm\Am \rr^\mo \label{eq:iaRZF_SearlI}.
	\end{align}
	Resolvent Identity:
	\begin{align}
	\Am^\mo + \Bm^\mo = -\Am^\mo \lr \Am-\Bm \rr \Bm^\mo \label{eq:iaRZF_ResolventId}.
	\end{align}
\end{lemma}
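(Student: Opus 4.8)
\emph{Proof plan.} All three identities are classical and are obtained by short direct computations; the only hypotheses invoked are the invertibility statements already implicit in the lemma (that $\Am$, $\Bm$, $\Id+\Am\Bm$, $\Id+\Bm\Am$, and $\Bm^\mo+\Cm^\H\Am^\mo\Cm$ are invertible wherever they occur). I would verify them in order of increasing bookkeeping: first Searle, then the resolvent identity, then Woodbury. There is no conceptual difficulty here; the only step that needs a little care is the term collection in the Woodbury verification.

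For the Searle identity~\eqref{eq:iaRZF_SearlI}, the plan is to start from the purely algebraic equality $\Am(\Id+\Bm\Am)=\Am+\Am\Bm\Am=(\Id+\Am\Bm)\Am$, which needs no invertibility, then left-multiply by $(\Id+\Am\Bm)^\mo$ and right-multiply by $(\Id+\Bm\Am)^\mo$ to obtain $(\Id+\Am\Bm)^\mo\Am=\Am(\Id+\Bm\Am)^\mo$. For the resolvent identity~\eqref{eq:iaRZF_ResolventId}, I would simply expand the right-hand side: $-\Am^\mo(\Am-\Bm)\Bm^\mo=-\Am^\mo\Am\Bm^\mo+\Am^\mo\Bm\Bm^\mo=-\Bm^\mo+\Am^\mo$ using $\Am^\mo\Am=\Bm\Bm^\mo=\Id$, which is the stated right-hand side. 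Both of these are one-liners with no obstacle.

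For the Woodbury identity~\eqref{eq:iaRZF_WoodI}, set $\Sm\defines\Bm^\mo+\Cm^\H\Am^\mo\Cm$ and check that $(\Am+\Cm\Bm\Cm^\H)$ times the claimed inverse $\Am^\mo-\Am^\mo\Cm\Sm^\mo\Cm^\H\Am^\mo$ equals $\Id$. Expanding produces four terms; $\Am\Am^\mo=\Id$ is the wanted part, and the remaining three each factor as $\Cm\,[\,\cdot\,]\,\Cm^\H\Am^\mo$, so it suffices to show the bracketed middle factor $-\Sm^\mo+\Bm-\Bm\Cm^\H\Am^\mo\Cm\Sm^\mo$ vanishes. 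Rewriting it as $\Bm-(\Id+\Bm\Cm^\H\Am^\mo\Cm)\Sm^\mo=\Bm-\Bm(\Bm^\mo+\Cm^\H\Am^\mo\Cm)\Sm^\mo=\Bm-\Bm\Sm\Sm^\mo=0$ closes the argument. Alternatively one can read off~\eqref{eq:iaRZF_WoodI} by computing the $(1,1)$ block of the inverse of the $2\times2$ block matrix with blocks $\Am$, $\Cm$, $\Cm^\H$, $-\Bm^\mo$ in two ways via the two Schur complements, but the direct verification above is self-contained and avoids introducing block-matrix machinery. The main obstacle — and it is a mild one — is precisely this collection of the three cross terms together with the observation $\Id+\Bm\Cm^\H\Am^\mo\Cm=\Bm\Sm$.
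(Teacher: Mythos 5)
Your verification is the standard direct computation and it is correct for all three identities; the paper itself offers no proof — Lemma~\ref{lem:iaRZF_MatIDs} is stated as a collection of well-known facts, with each identity used later in Appendices~B and~C — so there is no paper argument to compare against.

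One point worth flagging, which your own computation exposes but which you gloss over: your expansion of the right-hand side of~\eqref{eq:iaRZF_ResolventId} gives $-\Am^\mo(\Am-\Bm)\Bm^\mo=\Am^\mo-\Bm^\mo$, yet the lemma as printed has $\Am^\mo+\Bm^\mo$ on the left. That is a sign typo in the paper (the correct resolvent identity is $\Am^\mo-\Bm^\mo=-\Am^\mo(\Am-\Bm)\Bm^\mo$, equivalently $\Am^\mo-\Bm^\mo=\Am^\mo(\Bm-\Am)\Bm^\mo$). You write that $\Am^\mo-\Bm^\mo$ ``is the stated right-hand side,'' which suggests you either silently corrected the typo or did not notice the mismatch; in a written solution you should state explicitly that the left-hand side of~\eqref{eq:iaRZF_ResolventId} should read $\Am^\mo-\Bm^\mo$. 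This matters downstream: the identity is invoked in Appendix~\ref{ssec:iaRZF_ProofFinitDim} in the form ``applying~\eqref{eq:iaRZF_ResolventId} to the expression $\lr\Hm\Hm^\H+\frac{\xi}{\alpha}\Id\rr^\mo+\lr-\frac{\xi}{\alpha}\Id\rr^\mo$,'' where the second summand carries a built-in minus sign inside the inverse, so the application there is actually consistent with the corrected $\Am^\mo-\Bm^\mo$ form once one tracks signs. Your Woodbury and Searle verifications are clean; the key observation $\Id+\Bm\Cm^\H\Am^\mo\Cm=\Bm\Sm$ with $\Sm=\Bm^\mo+\Cm^\H\Am^\mo\Cm$ is exactly the right way to collapse the cross terms, and the hypotheses you list (invertibility of $\Am$, $\Bm$, $\Id+\Am\Bm$, $\Id+\Bm\Am$, and $\Sm$) are precisely what is implicitly assumed.
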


\begin{lemma}[Unitary Projection Matrices]\label{lem:iaRZF_Projection}
	Let $\Xm$ be an $N\times K$ complex matrix, where $N\geq K$ and $\rank(\Xm)=K$. We define $\Pm_\Xm = \Xm\lr\Xm^\H\Xm\rr^\mo\Xm^\H$ and $\Pm_\Xm^\perp= \Id - \Pm_\Xm$.
	It follows (see e.g., \cite[Chapter~5.13]{Meyer2000}) 
	\begin{align*}
	\Pm = \Pm^2 & \Leftrightarrow \Pm = \Pm^\H \\
	\Pm_\Xm^\perp \Xm = 0 & \Leftrightarrow \Xm^\H \Pm_\Xm^\perp = 0 \,.
	\end{align*}
	Generally one denotes $\Pm_\Xm$ as the projection matrix onto the column space of $\Xm$ and $\Pm_\Xm^\perp$ as the projection matrix onto the orthogonal space of the column space of $\Xm$.
\end{lemma}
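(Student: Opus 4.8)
The plan is to reduce the lemma to two one-line matrix computations, since $\Pm_\Xm$ is the classical least-squares projector and everything here is standard linear algebra that could otherwise simply be cited from \cite[Chapter~5.13]{Meyer2000}. First I would record that $\rank(\Xm)=K$ makes the Gram matrix $\Xm^\H\Xm$ a $K\times K$ positive definite, hence invertible, matrix; this is the only place the rank hypothesis enters, and it guarantees that $\Pm_\Xm=\Xm(\Xm^\H\Xm)^\mo\Xm^\H$ and $\Pm_\Xm^\perp=\Id-\Pm_\Xm$ are well defined.

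Next I would verify the two characterising properties of $\Pm_\Xm$ by direct substitution. For Hermiticity, since $\Xm^\H\Xm$ is Hermitian so is its inverse, whence $\Pm_\Xm^\H=\Xm(\Xm^\H\Xm)^\mo\Xm^\H=\Pm_\Xm$. For idempotency, inserting $(\Xm^\H\Xm)^\mo(\Xm^\H\Xm)=\Id_K$ in the middle of $\Pm_\Xm^2=\Xm(\Xm^\H\Xm)^\mo(\Xm^\H\Xm)(\Xm^\H\Xm)^\mo\Xm^\H$ collapses it to $\Pm_\Xm$. Both properties then pass to $\Pm_\Xm^\perp=\Id-\Pm_\Xm$ without further work: it is a difference of Hermitian matrices, and $(\Id-\Pm_\Xm)^2=\Id-2\Pm_\Xm+\Pm_\Xm^2=\Id-\Pm_\Xm$. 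The first displayed line is then the textbook characterisation of orthogonal projectors (idempotent plus Hermitian); the one subtlety I would flag is that a bare idempotent matrix need not be Hermitian, so that line should be read as this equivalence restricted to orthogonal projectors, of which $\Pm_\Xm$ and $\Pm_\Xm^\perp$ are instances by the computations just given.

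For the second displayed line the computation is $\Pm_\Xm^\perp\Xm=\Xm-\Xm(\Xm^\H\Xm)^\mo(\Xm^\H\Xm)=\Xm-\Xm=0$; conjugate-transposing this identity and using $(\Pm_\Xm^\perp)^\H=\Pm_\Xm^\perp$ gives $\Xm^\H\Pm_\Xm^\perp=0$, and the same manipulation read in reverse supplies the converse, so the two vanishing conditions are equivalent. Finally the geometric reading follows immediately: each column of $\Pm_\Xm$ is a linear combination of columns of $\Xm$, so $\mathrm{range}(\Pm_\Xm)\subseteq\mathrm{col}(\Xm)$, while $\Pm_\Xm\Xm=\Xm$ forces the reverse inclusion, and $\Pm_\Xm^\perp=\Id-\Pm_\Xm$ therefore projects onto the orthogonal complement of $\mathrm{col}(\Xm)$. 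I do not expect any genuine obstacle here — the entire argument is elementary — so the only care needed is in phrasing the first equivalence precisely, as indicated above.
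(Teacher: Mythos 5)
Your verification is correct and complete, and it fills in an argument the paper itself omits: the paper simply cites \cite[Chapter~5.13]{Meyer2000} for these standard facts and gives no proof. Every step in your computation checks out — positive definiteness of $\Xm^\H\Xm$ from full column rank, Hermiticity and idempotency of $\Pm_\Xm$ by direct substitution, inheritance of both properties by $\Id - \Pm_\Xm$, and the annihilation identity $\Pm_\Xm^\perp\Xm = 0$ together with its conjugate transpose. Your caveat about the first displayed line is also the right thing to flag: as written, ``$\Pm = \Pm^2 \Leftrightarrow \Pm = \Pm^\H$'' is not a true biconditional for arbitrary $\Pm$ (oblique projectors are idempotent but not Hermitian, and Hermitian matrices need not be idempotent), so the line has to be read as asserting that $\Pm_\Xm$ and $\Pm_\Xm^\perp$ satisfy \emph{both} properties simultaneously — i.e., that they are orthogonal projectors — which is exactly what your substitutions establish. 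Since there is nothing in the paper to compare against, there is no deviation in approach to report; this is just the standard direct verification, done correctly.
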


\begin{definition}[Notation of Resolvents] \label{def:iaRZF_DefResolvents}
	Given the notations from Section~\ref{sec:iaRZF_Asilomar}, we define resolvent matrices of $\hat{\Hm}^a_{a}$ as:
	\begin{align*}
	\Qm_{a} &\eqdef \lr \alpha^a_a \hat{\Hm}^a_{a} (\hat{\Hm}^a_{a})^\H + \Zm^a + \xi_a \Id_{N_a} \rr^\mo \,.
	\end{align*}
	We will also use of the following modified versions 
	\begin{align*}
	\Qm_{a[bc]} &\eqdef \lr \alpha^a_a \hat{\Hm}^a_{a} (\hat{\Hm}^a_{a})^\H + \Zm^a - \alpha^a_b \hat{\hv}^{a}_{b,c} (\hat{\hv}^{a}_{b,c})^\H + \xi_a \Id_{N_a} \rr^\mo \\
	\Qm_{a[b]} &\eqdef \lr \alpha^a_a \hat{\Hm}^a_{a} (\hat{\Hm}^a_{a})^\H + \Zm^a - \alpha^a_a \hat{\hv}^{a}_{a,b} (\hat{\hv}^{a}_{a,b})^\H + \xi_a \Id_{N_a} \rr^\mo \nonumber \\
	&= \lr \alpha^a_a \hat{\Hm}^a_{a[b]} (\hat{\Hm}^a_{a[b]})^\H + \Zm^a + \xi_a \Id_{N_a} \rr^\mo \,.
	\end{align*}
\end{definition}

\begin{lemma}[{Matrix Inversion Lemma \cite[Lemma~2.2]{SilversteinBai1995}}]\label{lem:iaRZF_MILI}
	Let $\Am$ be an $M\times M$ invertible matrix and $\xv \in \CC^M, c \in \CC$ for which $\Am +c \xv\xv^\H$ is invertible. Then, as an application of \eqref{eq:iaRZF_WoodI}, we have
	\begin{align*}
	\xv^\H\lr\Am+c\xv\xv^\H\rr^\mo = \frac{\xv^\H\Am^\mo}{1+c\xv^\H\Am^\mo\xv}.
	\end{align*}
	For the previously defined resolvent matrices, we have in particular
	\begin{align*}
	\Qm_a \hat{\hv}^a_{a,b} = \frac{\Qm_{a[b]} \hat{\hv}^a_{a,b}}{1 + \alpha^a_a (\hat{\hv}^a_{a,b})^\H \Qm_{a[b]} \hat{\hv}^a_{a,b} }.
	\end{align*}
\end{lemma}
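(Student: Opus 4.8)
\emph{Proof proposal.} The plan is to derive the scalar identity as a one-line consequence of the Woodbury identity \eqref{eq:iaRZF_WoodI} specialized to a rank-one update, and then to read off the resolvent version by peeling a single column off $\hat{\Hm}^a_{a}$. Concretely, I would apply \eqref{eq:iaRZF_WoodI} with $\Bm$ taken to be the $1\times 1$ matrix $(c)$ and $\Cm$ taken to be the column vector $\xv$; the hypotheses already guarantee $\Am$ and $\Am+c\xv\xv^\H$ are invertible, and the only further quantity that must be invertible for \eqref{eq:iaRZF_WoodI} to apply is the scalar $c^\mo+\xv^\H\Am^\mo\xv$. This scalar is nonzero: if it vanished, then (excluding the trivial cases $c=0$ or $\xv=\zerov$) the vector $\vv\defines\Am^\mo\xv$ would be nonzero yet satisfy $(\Am+c\xv\xv^\H)\vv = \xv\lr 1+c\,\xv^\H\Am^\mo\xv\rr = c\,\xv\lr c^\mo+\xv^\H\Am^\mo\xv\rr = \zerov$, contradicting invertibility of $\Am+c\xv\xv^\H$. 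Thus \eqref{eq:iaRZF_WoodI} yields
\begin{align*}
\lr \Am+c\xv\xv^\H \rr^\mo = \Am^\mo - \Am^\mo\xv\lr c^\mo+\xv^\H\Am^\mo\xv\rr^\mo\xv^\H\Am^\mo \,.
\end{align*}

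Next I would left-multiply by $\xv^\H$ and factor $\xv^\H\Am^\mo$ out on the right, so the coefficient collapses to the scalar
\begin{align*}
1-\frac{\xv^\H\Am^\mo\xv}{c^\mo+\xv^\H\Am^\mo\xv} = \frac{c^\mo}{c^\mo+\xv^\H\Am^\mo\xv} = \frac{1}{1+c\,\xv^\H\Am^\mo\xv} \,,
\end{align*}
which is exactly the first claimed identity. For the resolvent statement I would use $\hat{\Hm}^a_{a}(\hat{\Hm}^a_{a})^\H = \hat{\Hm}^a_{a[b]}(\hat{\Hm}^a_{a[b]})^\H + \hat{\hv}^a_{a,b}(\hat{\hv}^a_{a,b})^\H$, so that by Definition~\ref{def:iaRZF_DefResolvents} one has $\Qm_a^\mo = \Qm_{a[b]}^\mo + \alpha^a_a\,\hat{\hv}^a_{a,b}(\hat{\hv}^a_{a,b})^\H$. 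Since $\Qm_{a[b]}^\mo$ is Hermitian and $\alpha^a_a$ is real, taking the conjugate transpose of the first identity gives $(\Am+c\xv\xv^\H)^\mo\xv = \Am^\mo\xv/(1+c\,\xv^\H\Am^\mo\xv)$ for Hermitian $\Am$ and real $c$; substituting $\Am=\Qm_{a[b]}^\mo$, $c=\alpha^a_a$, $\xv=\hat{\hv}^a_{a,b}$ produces the stated formula for $\Qm_a\hat{\hv}^a_{a,b}$.

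I do not anticipate a genuine obstacle: the argument is a direct substitution into \eqref{eq:iaRZF_WoodI} followed by an elementary scalar simplification. The one point that deserves a written line — and the only place the invertibility hypothesis on $\Am+c\xv\xv^\H$ is actually used — is the nonvanishing of $1+c\,\xv^\H\Am^\mo\xv$, which is handled above by exhibiting $\Am^\mo\xv$ as a would-be kernel vector.
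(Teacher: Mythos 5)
Your proposal is correct and carries out exactly the step the paper indicates: the paper gives no explicit proof beyond citing that the identity is ``an application of \eqref{eq:iaRZF_WoodI}'' (and of \cite[Lemma~2.2]{SilversteinBai1995}), and your substitution $\Cm=\xv$, $\Bm=(c)$ into Woodbury followed by left-multiplication by $\xv^\H$ and the scalar simplification is precisely that application. The one addition you make — verifying that $1+c\,\xv^\H\Am^\mo\xv\neq 0$ by exhibiting $\Am^\mo\xv$ as a kernel vector of $\Am+c\xv\xv^\H$ in the contrary case — is a welcome piece of rigor the paper leaves implicit. One small simplification you could have used: the vector form $(\Am+c\xv\xv^\H)^\mo\xv = \Am^\mo\xv/(1+c\,\xv^\H\Am^\mo\xv)$ follows for arbitrary invertible $\Am$ by right-multiplying the Woodbury expansion by $\xv$ directly, so the conjugate-transpose detour (and the attendant Hermitian-$\Am$, real-$c$ hypotheses) is not actually needed to obtain the resolvent statement, though it is of course valid in the setting at hand.
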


\begin{lemma}[Convergence of Quadratic Forms \cite{SIL98}]	\label{lem:iaRZF_quadratic} \label{lem:iaRZF_TrL}
Let ${\bf x}_M=\left[X_1,\ldots,X_M\right]^{\mbox{\tiny T}}$ be an $M\times 1$ vector where the $X_n$ are i.i.d.\ Gaussian complex random variables with unit variance. Let ${\bf A}_M$ be an $M\times M$ matrix independent of ${\bf x}_M$.
If in addition $ \limsup_M \|{\bf A}\|_2 <\infty $ then we have that
$$
\frac{1}{M}{\bf x}_M^{\mbox{\tiny H}}{\bf A}_M{\bf x}_M-\frac{1}{M} \tr ({\bf A}_M) \xrightarrow[M\to+\infty]{a.s.} 0 \,.
$$
\end{lemma}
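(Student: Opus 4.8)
The plan is to use the classical second-moment/fourth-moment method together with the Borel--Cantelli lemma, after first reducing to the case where ${\bf A}_M$ is deterministic. Since ${\bf x}_M$ and ${\bf A}_M$ are independent, I would condition on the $\sigma$-algebra generated by the whole sequence $({\bf A}_M)_M$: by independence ${\bf x}_M$ retains its $\CCC\NNN(0,\Id_M)$ law, and on the probability-one event where $\limsup_M\|{\bf A}_M\|_2<\infty$ each ${\bf A}_M$ may be treated as a fixed matrix satisfying $\|{\bf A}_M\|_2\le c_0$ for all $M\ge M_0$, with $c_0,M_0$ finite. Writing $Y_M\defines \frac{1}{M}{\bf x}_M^\H{\bf A}_M{\bf x}_M-\frac{1}{M}\trace({\bf A}_M)$, the centering is immediate: from $\expect[X_i\bar X_j]=\delta_{ij}$ and $\expect[X_iX_j]=0$ one gets $\expect[{\bf x}_M^\H{\bf A}_M{\bf x}_M\mid{\bf A}_M]=\trace({\bf A}_M)$, hence $\expect[Y_M\mid{\bf A}_M]=0$.

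The key step is a moment estimate of order strictly above two; I would use the fourth moment. Splitting ${\bf A}_M$ into its Hermitian and anti-Hermitian parts and exploiting the unitary invariance of the complex Gaussian law — so that a quadratic form in a Hermitian matrix $\Bm$ has the same distribution as $\sum_i\lambda_i|X_i|^2$ with $\lambda_i$ the eigenvalues of $\Bm$, whence ${\bf x}_M^\H\Bm{\bf x}_M-\trace(\Bm)=\sum_i\lambda_i(|X_i|^2-1)$ — a routine computation for sums of independent mean-zero variables yields, for an absolute constant $C$,
\begin{align*}
\expect\left|{\bf x}_M^\H{\bf A}_M{\bf x}_M-\trace({\bf A}_M)\right|^4 & \le C\left(\trace({\bf A}_M{\bf A}_M^\H)\right)^2 \\
& \quad + C\,\trace\left(({\bf A}_M{\bf A}_M^\H)^2\right).
\end{align*}
Using $\trace({\bf A}_M{\bf A}_M^\H)\le M\|{\bf A}_M\|_2^2$ and $\trace(({\bf A}_M{\bf A}_M^\H)^2)\le M\|{\bf A}_M\|_2^4$, this gives $\expect[|Y_M|^4\mid{\bf A}_M]\le 2C\,\|{\bf A}_M\|_2^4/M^2$.

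To conclude, for $M\ge M_0$ we have $\expect[|Y_M|^4\mid{\bf A}_M]\le 2Cc_0^4/M^2$, which is summable, so for each $\varepsilon>0$ Markov's inequality gives $\sum_M\Pr(|Y_M|>\varepsilon\mid{\bf A}_M)<\infty$ on the event above; the (conditional) Borel--Cantelli lemma then yields $Y_M\to 0$ a.s., and intersecting over $\varepsilon=1/n$ and removing the conditioning on $({\bf A}_M)_M$ finishes the proof. The one genuinely technical point is the fourth-moment bound displayed above; everything else is bookkeeping. An alternative that avoids the explicit computation is to invoke the general quadratic-form inequality of Bai and Silverstein (valid whenever the entries have a finite $(4+\eta)$-th moment, in particular for Gaussians), which is exactly the content of the cited reference \cite{SIL98}; I would mention this as the cleanest route and only sketch the unitary-invariance argument as the self-contained Gaussian proof.
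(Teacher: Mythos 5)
The paper itself offers no proof of this lemma; it is stated as a citation to \cite{SIL98} and used as a black box. So there is no ``paper proof'' to compare against in the usual sense. Your proposal supplies a correct, self-contained proof by exactly the standard route that underlies the cited reference: condition on $(\Am_M)_M$ (legitimate by independence), center using $\expect[\xv^\H\Am\xv]=\trace\Am$, obtain a fourth-moment bound of order $O(\|\Am_M\|_2^4/M^2)$, and conclude by Markov plus Borel--Cantelli, intersecting over $\varepsilon=1/n$. Two points worth tightening if you write this out in full. First, the Hermitian/anti-Hermitian split needs a sentence of care: for $\Am=\Bm+\Cm$ with $\Bm$ Hermitian and $\Cm$ anti-Hermitian, $\xv^\H\Bm\xv-\trace\Bm$ is real while $\xv^\H\Cm\xv-\trace\Cm$ is purely imaginary, so the two contributions to $|\,\cdot\,|^4$ do not cross-interfere and each can be reduced via unitary invariance to $\sum_i\lambda_i(|X_i|^2-1)$; the anti-Hermitian part is handled by multiplying by $i$. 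Second, the displayed fourth-moment inequality is precisely \cite[Lemma~B.26]{Couillet2011a} / Bai--Silverstein's Lemma~2.7 specialized to $p=2$; since the $X_n$ are Gaussian, all the moment hypotheses hold, so you can indeed either compute it directly for the diagonalized form or invoke it as you indicate. Either way the argument is sound and no step would fail.
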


\begin{cor}
	\label{corollary:iaRZF_zero_quadratic} \label{cor:iaRZF_TrL0} 
	Let ${\bf A}_M$ be as in Lemma~\ref{lem:iaRZF_quadratic}, i.e., $ \limsup_M \|{\bf A}\|_2 <\infty $, and ${\bf x}_M,{\bf y}_M$ be random, mutually independent with complex Gaussian entries of zero mean and variance $1$. Then we have 
	$$
	\frac{1}{M} {\bf y}_M^{\mbox{\tiny H}}{\bf A}_M{\bf x}_M\xrightarrow[M\to+\infty]{a.s.}0 \,.
	$$
\end{cor}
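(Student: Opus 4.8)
The plan is to deduce the statement from Lemma~\ref{lem:iaRZF_quadratic} by stacking $\xv_M$ and $\yv_M$ into a single long vector and absorbing $\Am_M$ into an off-diagonal block of a larger matrix whose trace vanishes.

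First, I would form $\zv_M = [\xv_M^\trans,\ \yv_M^\trans]^\trans \in \CC^{2M}$. Since $\xv_M$ and $\yv_M$ each have i.i.d.\ complex Gaussian entries of zero mean and unit variance and are mutually independent, $\zv_M$ is a $2M\times 1$ vector of i.i.d.\ entries of the same type, so it is admissible in Lemma~\ref{lem:iaRZF_quadratic} with $2M$ playing the role of the growing dimension. Next, I would let $\Bm_M$ be the $2M\times 2M$ block matrix whose lower-left $M\times M$ block equals $\Am_M$ and whose three remaining blocks are zero; a direct block multiplication then gives $\zv_M^\H \Bm_M \zv_M = \yv_M^\H \Am_M \xv_M$. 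Moreover: (i) $\Bm_M$ is independent of $\zv_M$, being a deterministic function of $\Am_M$, which is independent of $(\xv_M,\yv_M)$; (ii) $\Bm_M$ has a vanishing diagonal, hence $\tr(\Bm_M)=0$; and (iii) the nonzero singular values of $\Bm_M$ coincide with those of $\Am_M$, so $\|\Bm_M\|_2 = \|\Am_M\|_2$ and therefore $\limsup_M \|\Bm_M\|_2 < \infty$.

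Then I would apply Lemma~\ref{lem:iaRZF_quadratic} to the pair $(\zv_M,\Bm_M)$, obtaining $\frac{1}{2M}\zv_M^\H \Bm_M \zv_M - \frac{1}{2M}\tr(\Bm_M) \xrightarrow[M\to+\infty]{a.s.} 0$. Substituting $\tr(\Bm_M)=0$ and $\zv_M^\H\Bm_M\zv_M = \yv_M^\H\Am_M\xv_M$, this reads $\frac{1}{2M}\yv_M^\H\Am_M\xv_M \xrightarrow[M\to+\infty]{a.s.} 0$, and multiplying by the constant $2$ gives the claim.

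I do not expect a genuine obstacle: the only items to verify are that the stacked vector has i.i.d.\ entries (this is exactly where mutual independence of $\xv_M$ and $\yv_M$ enters) and that $\|\Bm_M\|_2$ stays bounded, both immediate from the construction. If one instead wanted a self-contained argument, one could compute $\expect\bigl[\,|M^{-1}\yv_M^\H\Am_M\xv_M|^2 \mid \Am_M\bigr] = M^{-2}\tr(\Am_M^\H\Am_M) \le M^{-1}\|\Am_M\|_2^2$, giving $L^2$ (hence in-probability) convergence, and then upgrade to almost-sure convergence via a fourth-moment bound together with the Borel--Cantelli lemma; but piggy-backing on Lemma~\ref{lem:iaRZF_quadratic} is shorter, since that lemma already supplies the almost-sure conclusion.
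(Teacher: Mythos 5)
Your argument is correct and complete. The block-embedding construction is sound: with $\zv_M=[\xv_M^\trans,\yv_M^\trans]^\trans$ and
\begin{align*}
\Bm_M=\begin{pmatrix} {\bf 0}_{M} & {\bf 0}_{M}\\ \Am_M & {\bf 0}_{M}\end{pmatrix},
\end{align*}
one indeed has $\zv_M^\H\Bm_M\zv_M=\yv_M^\H\Am_M\xv_M$, $\tr(\Bm_M)=0$, $\|\Bm_M\|_2=\|\Am_M\|_2$ (since $\Bm_M^\H\Bm_M=\diag(\Am_M^\H\Am_M,{\bf 0}_M)$), and $\zv_M$ is a $2M$-vector of i.i.d.\ standard complex Gaussians independent of $\Bm_M$, so Lemma~\ref{lem:iaRZF_quadratic} with dimension $2M$ applies verbatim and delivers the claim after multiplying by $2$.

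The paper itself states the corollary without proof, treating it as a standard consequence of Lemma~\ref{lem:iaRZF_quadratic}; in the random-matrix literature (e.g., the references the paper draws on), the conventional derivation uses polarization: write $\yv_M^\H\Am_M\xv_M$ as a linear combination of the four quadratic forms $(\xv_M+\iota\yv_M)^\H\Am_M(\xv_M+\iota\yv_M)$, $\iota\in\{1,-1,i,-i\}$, apply the lemma to each (after rescaling the vectors by $1/\sqrt{2}$ to restore unit variance), and observe that the deterministic centering terms cancel. Your stacking argument reaches the same conclusion with a single invocation of the lemma rather than four, and avoids having to track the cancellation of trace terms, so it is at least as clean. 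The only small point worth flagging is that you rely on $\Am_M$ being independent of both $\xv_M$ and $\yv_M$ — this is implicit in the corollary's phrase ``as in Lemma~\ref{lem:iaRZF_quadratic}'' and you correctly invoke it, but it is worth stating explicitly when writing this up, since the ``i.e.'' clause in the corollary only repeats the spectral-norm bound.
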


\begin{lemma}{[Rank-One Perturbation Lemma \cite[Lemma~14.3]{Couillet2011a}]}
	\label{lem:iaRZF_perturbation} \label{lem:iaRZF_R1PL}
	Let $\Qm_{a}$ and $\Qm_{a[b]}$ be the resolvent matrices as defined in Definition~\ref{def:iaRZF_DefResolvents}. Then, for any matrix ${\bf A}$ we have:
	$$
	\tr \ls {\bf A}\left(\Qm_{a}-\Qm_{a[b]}\right) \rs \leq \frac{1}{\xi_a}\|{\bf A}\|_2 \,.
	$$
\end{lemma}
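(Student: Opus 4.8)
The plan is to derive an explicit rank-one expression for $\Qm_a-\Qm_{a[b]}$ and then estimate the scalar quadratic form it produces. First I would observe, straight from Definition~\ref{def:iaRZF_DefResolvents}, that the two resolvents have inverses differing by a single rank-one term, namely $\Qm_a^{\mo}=\Qm_{a[b]}^{\mo}+\alpha^a_a\hat{\hv}^a_{a,b}(\hat{\hv}^a_{a,b})^\H$. Inserting this into the resolvent identity~\eqref{eq:iaRZF_ResolventId} gives $\Qm_a-\Qm_{a[b]}=-\alpha^a_a\,\Qm_a\hat{\hv}^a_{a,b}(\hat{\hv}^a_{a,b})^\H\Qm_{a[b]}$, and then Lemma~\ref{lem:iaRZF_MILI} lets me replace $\Qm_a\hat{\hv}^a_{a,b}$ by $\Qm_{a[b]}\hat{\hv}^a_{a,b}/(1+\alpha^a_a(\hat{\hv}^a_{a,b})^\H\Qm_{a[b]}\hat{\hv}^a_{a,b})$, so that
\begin{align*}
\Qm_a-\Qm_{a[b]}=-\frac{\alpha^a_a\,\Qm_{a[b]}\hat{\hv}^a_{a,b}(\hat{\hv}^a_{a,b})^\H\Qm_{a[b]}}{1+\alpha^a_a(\hat{\hv}^a_{a,b})^\H\Qm_{a[b]}\hat{\hv}^a_{a,b}}\,.
\end{align*}

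Next I would multiply on the left by $\Am$, take the trace, and use cyclicity to collapse the rank-one numerator to the scalar $(\hat{\hv}^a_{a,b})^\H\Qm_{a[b]}\Am\Qm_{a[b]}\hat{\hv}^a_{a,b}$, obtaining
\begin{align*}
\tr\ls\Am\lr\Qm_a-\Qm_{a[b]}\rr\rs=-\frac{\alpha^a_a\,(\hat{\hv}^a_{a,b})^\H\Qm_{a[b]}\Am\Qm_{a[b]}\hat{\hv}^a_{a,b}}{1+\alpha^a_a(\hat{\hv}^a_{a,b})^\H\Qm_{a[b]}\hat{\hv}^a_{a,b}}\,.
\end{align*}
To bound this I would use that $\Zm^a$ and $\alpha^a_a\hat{\Hm}^a_{a[b]}(\hat{\Hm}^a_{a[b]})^\H$ are positive semidefinite and $\xi_a>0$, so $\Qm_{a[b]}$ is Hermitian positive definite with $0\prec\Qm_{a[b]}\preceq\xi_a^{\mo}\Id_{N_a}$; in particular the denominator is at least $1$ and $\Qm_{a[b]}^2\preceq\xi_a^{\mo}\Qm_{a[b]}$. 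Writing $t=(\hat{\hv}^a_{a,b})^\H\Qm_{a[b]}\hat{\hv}^a_{a,b}\geq 0$ and estimating the numerator via $|\uv^\H\Am\uv|\leq\|\Am\|_2\|\uv\|_2^2$ with $\uv=\Qm_{a[b]}\hat{\hv}^a_{a,b}$,
\begin{align*}
\labs(\hat{\hv}^a_{a,b})^\H\Qm_{a[b]}\Am\Qm_{a[b]}\hat{\hv}^a_{a,b}\rabs\leq\|\Am\|_2\,(\hat{\hv}^a_{a,b})^\H\Qm_{a[b]}^2\hat{\hv}^a_{a,b}\leq\frac{\|\Am\|_2}{\xi_a}\,t\,,
\end{align*}
so the whole quantity is at most $\tfrac{\|\Am\|_2}{\xi_a}\cdot\tfrac{\alpha^a_a t}{1+\alpha^a_a t}\leq\tfrac{\|\Am\|_2}{\xi_a}$, which is the claim (and in fact gives it with an absolute value on the left-hand side).

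As for difficulty, there is essentially no obstacle here: this is the classical rank-one perturbation estimate, so in the write-up I would simply cite \cite[Lemma~14.3]{Couillet2011a}. The only points that need a moment's care are (i) checking that $\Qm_{a[b]}$ is genuinely positive definite — which uses $\alpha^a_l\geq 0$ for all $l$ and $\xi_a>0$ — so that the denominator is bounded away from $0$ and $\|\Qm_{a[b]}\|_2\leq\xi_a^{\mo}$, and (ii) the fact that $\Am$ is an arbitrary square matrix, not necessarily Hermitian or positive semidefinite, which is exactly why the bound must pass through the spectral-norm estimate rather than a direct positivity argument.
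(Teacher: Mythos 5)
Your argument is correct and is the standard proof of the rank-one perturbation bound: express $\Qm_a-\Qm_{a[b]}$ as a rank-one update via the resolvent identity and Lemma~\ref{lem:iaRZF_MILI}, reduce the trace to a scalar quadratic form by cyclicity, and then bound that form using $0\prec\Qm_{a[b]}\preceq\xi_a^{-1}\Id$ together with $\alpha^a_a t/(1+\alpha^a_a t)\leq 1$. The paper itself offers no proof and simply cites \cite[Lemma~14.3]{Couillet2011a}; your derivation is a self-contained reconstruction of exactly that argument, and your remark that the estimate in fact holds for $\lvert\tr[\Am(\Qm_a-\Qm_{a[b]})]\rvert$ (not just the signed quantity) is the correct reading of the lemma for general, non-Hermitian $\Am$.
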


\section{Simple System Limit Behavior Proofs}
\label{sec:iaRZF_LimitBehaviour}
In this section, we provide the proofs pertaining to the limit behavior of the simple system in Section~\ref{sec:iaRZF_SimpleiaRZF}.

\subsection{Finite Dimensions}
\label{ssec:iaRZF_ProofFinitDim}
In order to simplify the notation we will not explicitly state the index $x$ in the following, unless needed, hence the normalized precoder $\Fm$ for each of the two cells is
$
\Fm = \sqrt{K}{\Mm}/{\sqrt{\tr \Mm\Mm^\H}}
$
for
$\Mm = \lr \alpha\Hm\Hm^\H + \beta\Gm\Gm^\H+\xi\Id \rr^\mo \Hm $.

\subsubsection{$\boldsymbol{\beta\to\infty}$}
For the limit when $\beta\to\infty$ we use~\eqref{eq:iaRZF_WoodI} with $\Am=\beta\Gm\Gm^\H+\xi\Id$ and $\Cm\Bm\Cm^\H=\Hm\alpha\Id\Hm^\H$ to reformulate the matrix $\Mm$
\begin{align*}
\Mm &= \lr \alpha\Hm\Hm^\H + \beta\Gm\Gm^\H+\xi\Id \rr^\mo \Hm \\
&= \ls \Qm_\Gm - \Qm_\Gm\Hm\lr \alpha^\mo\Id+\Hm^\H\Qm_\Gm\Hm \rr^\mo\Hm^\H\Qm_\Gm \rs \Hm
\end{align*}
where
\begin{align*}
\Qm_\Gm &= \lr \beta\Gm\Gm^\H+\xi\Id \rr^\mo \\
&\eqtextt{\eqref{eq:iaRZF_WoodI}} \xi^\mo\Id -\xi^\mo\Gm\lr \frac{\xi}{\beta}\Id+\Gm^\H\Gm \rr^\mo\Gm^\H \,.
\end{align*}
We now let $\beta\to\infty$, assuming $\Gm^\H\Gm$ is invertible (which is true with probability $1$) and $\xi$ bounded.
In this regime, we remember Lemma~\ref{lem:iaRZF_Projection}, and rewrite $\Qm_\Gm = \xi^\mo \Pm_\Gm^\perp$. One finally arrives at
\begin{align*}
\Mm & \betato \\   
    & \ls\xi^\mo\Pm_\Gm^\perp-\xi^\mt\Pm_\Gm^\perp\Hm\lr\alpha^\mo\Id+\xi^\mo\Hm^\H\Pm_\Gm^\perp\Hm\rr^\mo\Hm^\H\Pm_\Gm^\perp\rs\Hm \,.
\end{align*}
Relying further on properties of projection matrices ($\Pm_\Gm^\perp=\Pm_\Gm^\perp\Pm_\Gm^\perp$, $(\Pm_\Gm^\perp)^\H=\Pm_\Gm^\perp$) and introducing the matrix $\check{\Hm}=\Pm_\Gm^\perp\Hm$, as the channel matrix $\Hm$ projected on the space orthogonal to the channels of $\Gm$, we get
\begin{align*}
\Mm & \betato \\ \phantom{=} &\xi^\mo\ls\Pm_\Gm^\perp\Hm-\Pm_\Gm^\perp\Hm\lr\frac{\xi}{\alpha}\Id+\Hm^\H\Pm_\Gm^\perp\Pm_\Gm^\perp\Hm\rr^\mo\Hm^\H\Pm_\Gm^\perp\Pm_\Gm^\perp\Hm\rs \\
= & \xi^\mo\ls\check{\Hm}-\check{\Hm}\lr\Id - \frac{\xi}{\alpha} \lr \frac{\xi}{\alpha}\Id+\check{\Hm}^\H\check{\Hm}\rr^\mo\rr\rs \\
= & \check{\Hm}\lr \xi\Id+\alpha\check{\Hm}^\H\check{\Hm}\rr^\mo \,.
\end{align*}

\subsubsection{$\boldsymbol{\alpha\to\infty}$}
Introducing the abbreviations $\Qm_\Hm = \lr\Hm\Hm^\H+\frac{\xi}{\alpha}\Id\rr^\mo$ and $\bar{\Qm}_\Hm = \lr\Hm^\H\Hm+\frac{\xi}{\alpha}\Id\rr^\mo$, we can rewrite the matrix $\Mm$ as follows.
\begin{align*}
\alpha \Mm & = \lr \Hm\Hm^\H + \frac{\beta}{\alpha}\Gm\Gm^\H+\frac{\xi}{\alpha}\Id \rr^\mo \Hm \\
& \eqtextt{\eqref{eq:iaRZF_WoodI}} 
\ls \Qm_\Hm - \Qm_\Hm\Gm\lr\frac{\alpha}{\beta}\Id+\Gm^\H\Qm_\Hm\Gm\rr^\mo\Gm^\H\Qm_\Hm \rs \Hm \\
& \eqtextt{\eqref{eq:iaRZF_SearlI}} 	
\Hm\bar{\Qm}_\Hm - \Qm_\Hm\Gm\lr\frac{\alpha}{\beta}\Id+\Gm^\H\Qm_\Hm\Gm\rr^\mo\Gm^\H\Hm\bar{\Qm}_\Hm \,. 
\end{align*}
Applying \eqref{eq:iaRZF_ResolventId} to the expression $\lr\Hm\Hm^\H+\frac{\xi}{\alpha}\Id\rr^\mo + \lr-\frac{\xi}{\alpha}\Id\rr^\mo$, one eventually finds the relationship
$ \Qm_\Hm = \alpha\xi^\mo\lr\Id - \Hm\bar{\Qm}_\Hm\Hm^\H \rr $.  Hence,
\begin{align*}
\alpha \Mm  = &
\Hm\bar{\Qm}_\Hm - \xi^\mo\lr\Id - \Hm\bar{\Qm}_\Hm\Hm^\H \rr \\
&\times\Gm\ls\frac{1}{\beta}\Id+\xi^\mo\Gm^\H\lr\Id - \Hm\bar{\Qm}_\Hm\Hm^\H \rr\Gm\rs^\mo\Gm^\H\Hm\bar{\Qm}_\Hm	\,.
\end{align*}
Now, taking the limit of $\alpha\to\infty$, assuming $\Hm^\H\Hm$ invertible (true with probability $1$), and recognizing $\Pm_\Hm^\perp = \Id - \Hm\lr\Hm^\H\Hm\rr^\mo\Hm^\H$ we arrive at
\begin{align*}
& \alpha \Mm  \alphato
\Hm\lr\Hm^\H\Hm\rr^\mo -\gamma^\mo\ls\Id-\Hm\lr\Hm^\H\Hm\rr^\mo\Hm^\H\rs\Gm \\
&\qquad \lc\beta^\mo\Id+\gamma^\mo\Gm^\H\ls\Id-\Hm\lr\Hm^\H\Hm\rr^\mo\Hm^\H\rs\Gm\rc^\mo \\
&\qquad \Gm^\H\Hm\lr\Hm^\H\Hm\rr^\mo	\\
& = 
\Hm\lr\Hm^\H\Hm\rr^\mo \\ & \qquad-\gamma^\mo\Pm_\Hm^\perp\Gm\lc\beta^\mo\Id+\gamma^\mo\Gm^\H\Pm_\Hm^\perp\Gm\rc^\mo\Gm^\H\Hm\lr\Hm^\H\Hm\rr^\mo \,.
\end{align*}

\subsection{Large-Scale Approximation}
\label{ssec:iaRZF_ProofLargeScale}
In this subsection, we primarily show that the fixed point equation $e$ is bounded in the sense of $0 < \liminf e < \limsup e < \infty$. This knowledge simplifies the limit calculations in Subsection~\ref{ssec:iaRZF_LimitBehavioriaRZF} to simple operations.
We remind ourselves, that for perfect and imperfect CSI the resulting fixed point equations are equivalent: 
\begin{align}
e &= \lr 1+ \frac{c}{\alpha^\mo+e} + \frac{c\varepsilon}{\beta^\mo+\varepsilon e} \rr^\mo \label{eq:iaRZF_eEquivalentVersion}
\end{align}
where we abbreviated $e_{\alpha}$ with $e$ for notational convenience.

\begin{lemma}[$e$ is Bounded]\label{lem:iaRZF_ebound}
	For either $\alpha\to\infty$ and $\beta,\varepsilon$ bounded or $\beta\to\infty$ and $\alpha,\varepsilon$ bounded, we have
	$$
	0 < \liminf e < \limsup e < \infty\,.
	$$
\end{lemma}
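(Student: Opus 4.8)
The plan is to analyze the fixed-point equation~\eqref{eq:iaRZF_eEquivalentVersion} directly as a scalar equation in $e>0$, establishing a priori upper and lower bounds that are uniform in the dimensions (i.e. hold for all large $N,K$ under Assumption~\ref{as:iaRZF_cchain}), and then pass to $\liminf$/$\limsup$. The key observation is that~\eqref{eq:iaRZF_eEquivalentVersion} can be written as $e \cdot h(e) = 1$ where $h(e) = 1 + \frac{c}{\alpha^{-1}+e} + \frac{c\varepsilon}{\beta^{-1}+\varepsilon e}$ is a continuous, strictly decreasing, positive function of $e$ on $(0,\infty)$, with $h(0^+)=+\infty$ (as long as at least one of $\alpha^{-1}$, $\beta^{-1}$ is finite, which holds in both regimes since only one parameter tends to infinity) and $h(\infty)=1$; hence the solution $e$ is unique and positive, as already asserted in the statement of Theorem~\ref{theo:iaRZF_DEofSINR}. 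What remains is to bound $e$ away from $0$ and $\infty$.

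For the upper bound: since every summand in $h(e)$ is nonnegative, $h(e)\geq 1$, so $e = 1/h(e) \leq 1$. This already gives $\limsup e \leq 1 < \infty$ regardless of which parameter blows up. For the lower bound, I would use the upper bound $e\leq 1$ just obtained to control the denominators from above. In the regime $\beta\to\infty$ with $\alpha,\varepsilon$ bounded: $\frac{c}{\alpha^{-1}+e}\leq \frac{c}{e}\leq \frac{c}{\,\cdot\,}$... more carefully, bound $\frac{c}{\alpha^{-1}+e} \le c\alpha$ (using $e>0$) and $\frac{c\varepsilon}{\beta^{-1}+\varepsilon e} \le \frac{c\varepsilon}{\varepsilon e} = \frac{c}{e}$; substituting into $e h(e)=1$ gives $1 = e + \frac{c e}{\alpha^{-1}+e} + \frac{c\varepsilon e}{\beta^{-1}+\varepsilon e} \le e + c\alpha e + c$, so $e \ge (1-c)/(1+c\alpha)$ when $c<1$; in general one gets a bound of the form $e\geq \kappa>0$ with $\kappa$ depending only on the bounded quantities $c_i,\alpha,\varepsilon$ (via their $\liminf/\limsup$). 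An entirely symmetric computation handles the regime $\alpha\to\infty$ with $\beta,\varepsilon$ bounded, swapping the roles of the two fractions. Taking $\liminf$ over $N$ then yields $\liminf e \geq \kappa > 0$.

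The main subtlety — the ``hard part'' — is making sure the bounding constants are genuinely uniform: the quantity $c$ here really stands for the sequence $c_i=K_i/N_i$, and Assumption~\ref{as:iaRZF_cchain} only guarantees $0<\liminf c_i \le \limsup c_i<\infty$, so one must phrase the estimates using $c_- := \liminf c_i$ and $c_+ := \limsup c_i$ and verify that the resulting lower bound $\kappa$ stays strictly positive (in particular one should not divide by $1-c$ naively when $c_+$ could exceed $1$ — but note that in the simplified model of Section~\ref{sec:iaRZF_SimpleiaRZF} one has the standing hypothesis $N\geq K$, i.e. $c\leq 1$, and in fact the relevant $\overline{\mathrm{Sig}}$ expressions only make sense for $c<1/2$, so a clean bound is available). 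I would also remark that the case distinction in the lemma is exactly what guarantees $h(0^+)=\infty$ and the denominators $\alpha^{-1}+e$, $\beta^{-1}+\varepsilon e$ stay bounded below by positive constants even in the limit, which is what prevents $e$ from collapsing. Once these uniform bounds are in hand, the conclusion $0<\liminf e<\limsup e<\infty$ is immediate, and this is precisely the fact invoked in Subsection~\ref{ssec:iaRZF_LimitBehavioriaRZF} to evaluate the $\alpha_x\to\infty$ and $\beta_x\to\infty$ limits of the interference terms.
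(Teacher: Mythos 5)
Your proof is correct, and it takes a genuinely different route from the paper's. The paper rescales by setting $\upsilon=\alpha e$ (resp.\ $\upsilon=\beta e$), assumes $\upsilon$ stays bounded as the weight diverges, passes to the limit in the rescaled fixed-point equation to obtain $\upsilon=1/(c-1)<0$, and derives a contradiction; the upper bound is handled by a similar contradiction after letting $e\to\infty$. You instead work directly with the unrescaled scalar equation $e\,h(e)=1$: the upper bound $e\le 1$ drops out of $h(e)\ge 1$ in one line (which subsumes the paper's limiting argument), and the lower bound $e\ge(1-c)/(1+c\alpha)$ in the $\beta\to\infty$ regime (resp.\ $e\ge(1-c)/(1+c\varepsilon\beta)$ in the $\alpha\to\infty$ regime) follows from term-by-term estimates on $h$. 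Both routes hinge on $c<1$; you flag this assumption explicitly, whereas the paper invokes it silently in the step ``as $c<1$''.

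One advantage of your route worth pointing out: it gives a quantitative lower bound on $e$ that is uniform in the diverging weight, so it yields $\liminf e>0$ literally as stated in the lemma. The paper's argument establishes that $\upsilon=\alpha e$ (resp.\ $\beta e$) is unbounded — which is, in fact, the property actually invoked downstream to evaluate the interference limits — but $\alpha e\to\infty$ together with $\alpha\to\infty$ does not by itself exclude $e\to 0$ (consider $e=\alpha^{-1/2}$), so your direct bound is tighter with respect to the lemma as literally phrased. A small inaccuracy on your side: $h(0^{+})=1+c\alpha+c\varepsilon\beta$ is finite for finite $\alpha,\beta$, so the claim $h(0^{+})=+\infty$ only holds once one of the weights has already been sent to infinity; this is harmless, since existence and uniqueness of the positive root come from $e\,h(e)$ being continuous, strictly increasing, vanishing at $0$ and unbounded, which you also note and which is stipulated in the paper's main theorem.
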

\begin{proof}
	1) $e < \infty$ when $\alpha$ or $\beta \to\infty$, follows immediately from contradiction, when one takes $e\to\infty$ in~\eqref{eq:iaRZF_eEquivalentVersion}. 
	
	2) To see that $e$ positive when $\alpha$ or $\beta \to\infty$, we take either $\alpha\to\infty$ and $\beta,\varepsilon$ bounded or $\beta\to\infty$ and $\alpha,\varepsilon$ bounded. 
	For the case $\alpha \to \infty$, we first denote $\upsilon = \alpha e$
	and we look at 
	\begin{align*}
	\upsilon &= \lr \frac{1}{\alpha}+ \frac{c}{1+\upsilon} + \frac{c\beta \varepsilon}{\alpha+\beta\varepsilon\upsilon} \rr^\mo \,.
	\end{align*}
	Now we assume $\upsilon$ to be bounded for $\alpha\to\infty$
	\begin{align*}
	\upsilon &= \lim_{\alpha\to\infty} \lr \frac{1}{\alpha}+ \frac{c}{1+\upsilon} + \frac{c\beta \varepsilon}{\alpha+\beta\varepsilon\upsilon} \rr^\mo 
	= \lr\frac{c}{1+\upsilon}\rr^\mo
	\end{align*}
	thus implying $\upsilon = \frac{1}{c-1} <0$, as $c<1$.
	Case 1 directly contradicts the assumption and case 2 is contradicting, as $e$ can not be negative for positive values of $\alpha$, $\beta$, $c$ and $\varepsilon$.
	Thus, $\upsilon$ is not bounded for $\alpha\to\infty$, hence $e$ can neither be zero nor negative.
	For the case of $\beta \to \infty$, we denote $\upsilon = \beta e$ and proceed analogously.
\end{proof}

\subsection{Large-Scale Optimization $\alpha\to\infty$}
\label{ssec:iaRZF_appProofSimpleOptimizationalphainfty}

Continuing from Appendix~\ref{ssec:iaRZF_ProofLargeScale}, we see that in the limit $\alpha\to\infty$ the large-scale approximation of the SINR values pertaining to the users of each cell, i.e., $\overline{\mathrm{SINR}}^{\alpha\to\infty}$, is indeed as stated in Paragraph~\ref{sssec:iaRZF_LimitOptimization}.

Differentiating $\overline{\mathrm{SINR}}^{\alpha\to\infty}$ w.r.t.\ $\beta$, while taking into account that $e$ is an abbreviation for $e^{\alpha\to\infty}_\beta$ leads us to
\begin{align*}
&\frac{\partial \overline{\mathrm{SINR}}^{\alpha\to\infty}}{\partial \beta}  = 
- 2 P c\varepsilon^2
\ls e+\beta e^\prime \rs \numberthis \label{eq:iaRZF_SimpleModelLimitSINRdiff} \\
& \times \frac{ 
	t_1
}
{
	\ls P \lr c\beta^2 e^2 \varepsilon^3\tau^2+2c\beta e \varepsilon^2\tau^2+c\varepsilon\rr +\beta^2e^2\varepsilon^2+2\beta e \varepsilon+1 \rs^2
}  
\end{align*}
where we used $e^\prime$ as shorthand for $\frac{\partial e^{\alpha\to\infty}(\beta)}{\partial \beta}$ and
\begin{align*}
t_1
&= P \ls c-1-\beta\varepsilon e+2\beta c\varepsilon e \rs +\beta e + \beta^2 e^2\varepsilon  \\
&\qquad - P_{\bar{x}} \tau^2 \ls c - 1 -\beta\varepsilon e+\beta c\varepsilon e- \beta^2 c e^2\varepsilon^2 \rs \,.
\end{align*}

Realizing that the denominator of \eqref{eq:iaRZF_SimpleModelLimitSINRdiff} can not become zero, we have two possible solutions for $ \partial \overline{\mathrm{SINR}}^{\alpha\to\infty} / \partial \beta = 0$.
In Lemma~\ref{lem:iaRZF_eplusbep} we show that $ e+\beta e^\prime  > 0$, hence we only need to deal with the term $t_1$.
We remember from \eqref{eq:iaRZF_SimpleModelLimitFPEe} that
$
c-1-\beta\varepsilon e+2\beta c\varepsilon e+e+\beta\varepsilon e^2 = 0 \,.
$
Thus,
\begin{align*}
& c - 1 -\beta\varepsilon e+\beta c\varepsilon e- \beta^2 c e^2\varepsilon^2 
= -\beta c \varepsilon e - e - \beta\varepsilon e^2 - \beta^2 c e^2\varepsilon^2
\end{align*}
and similarly
\begin{align*}
& P \ls c-1-\beta\varepsilon e+2\beta c\varepsilon e \rs +\beta e + \beta^2 e^2\varepsilon \\
&\qquad = -P e - P\beta\varepsilon e^2 + \beta e + \beta^2\varepsilon e^2 \,. 
\end{align*}
Hence, 
\begin{align*}
t_1 
& = \lr \varepsilon e^2 +  P\tau^2 c e^2\varepsilon^2\rr
\lr \beta - \frac{P (1-\tau^2) }{P c\varepsilon\tau^2+1} \rr
\lr \beta + \frac{1}{e \varepsilon} \rr \,.
\end{align*}
Given that only the middle term can become zero, we find $\beta_{opt}$ to be
\begin{align}
\beta_{opt} = \frac{P (1-\tau^2) }{P c\varepsilon\tau^2+1}
\end{align}
as stated in \eqref{eq:iaRZF_SimpSysbetaoptinf}.
The physical interpretation of the SINR guarantees this point to be the maximum.


We used the assumption $e+\beta e^\prime > 0$ to arrive at the previous result. This claim is proved by the following lemma.
\begin{lemma}\label{lem:iaRZF_eplusbep}
	Given the notation and definitions from Appendix~\ref{ssec:iaRZF_appProofSimpleOptimizationalphainfty}, we have that $e+\beta e^\prime > 0$. 
\end{lemma}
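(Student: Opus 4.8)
The plan is to prove the equivalent statement that $u:=\beta\varepsilon e$ (equivalently $\beta e$) is strictly increasing in $\beta$, since $e+\beta e^\prime=\frac{d}{d\beta}(\beta e)$. The first step is to rewrite the fixed-point equation \eqref{eq:iaRZF_SimpleModelLimitFPEe} so that $\beta$ decouples from $e$ except through $u$: multiplying \eqref{eq:iaRZF_SimpleModelLimitFPEe} by $e$ and using $\frac{c\beta\varepsilon}{1+\beta\varepsilon e}=\frac{cu}{e(1+u)}$ gives $1=e+c+\frac{cu}{1+u}$, i.e.
\begin{align*}
e=(1-2c)+\frac{c}{1+u}\,.
\end{align*}
Solving $u=\beta\varepsilon e$ for $\beta$ then yields the explicit parametrization $\beta=\beta(u)=\frac{u(1+u)}{\varepsilon\,[(1-c)+(1-2c)u]}$, valid on the range of $u$ for which $e>0$: this is all of $[0,\infty)$ when $c\le 1/2$, and $\{u\ge 0:(2c-1)u<1-c\}$ when $1/2<c<1$. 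Note $(1-c)+(1-2c)u=e(1+u)>0$, so $\beta(\cdot)$ is smooth there.

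Next I would differentiate $\beta(u)$. By the quotient rule the derivative has denominator $\varepsilon\,[(1-c)+(1-2c)u]^2=\varepsilon\,e^2(1+u)^2>0$ and, after expanding $(1+2u)[(1-c)+(1-2c)u]-(u+u^2)(1-2c)$ and cancelling, numerator exactly $(1-c)(1+2u)+(1-2c)u^2$. Hence once this numerator is shown positive, $\frac{d\beta}{du}>0$, the inverse function theorem applies (also yielding, as a by-product, differentiability of $e$ in $\beta$), and
\begin{align*}
e+\beta e^\prime=\frac{1}{\varepsilon}\,\frac{du}{d\beta}=\frac{e^2(1+u)^2}{(1-c)(1+2u)+(1-2c)u^2}>0\,.
\end{align*}

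So the proof reduces to checking $(1-c)(1+2u)+(1-2c)u^2>0$ on the admissible range of $u$. If $c\le 1/2$ this is immediate since $1-c>0$, $1-2c\ge 0$ and $u\ge 0$. If $1/2<c<1$, the constraint $e>0$ is precisely $(2c-1)u<1-c$, so $(1-2c)u^2=-(2c-1)u\cdot u>-(1-c)u$, whence $(1-c)(1+2u)+(1-2c)u^2>(1-c)(1+2u)-(1-c)u=(1-c)(1+u)>0$.

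I expect the main obstacle to be the case $1/2<c<1$: there $1-2c<0$, so positivity of $(1-c)(1+2u)+(1-2c)u^2$ is not automatic and one must exploit that the fixed-point equation itself confines $u=\beta\varepsilon e$ to the interval where $e$ stays positive — a constraint that turns out to be exactly sharp enough to dominate the negative quadratic term. The rest (the quotient-rule expansion and the identity $(1-c)+(1-2c)u=e(1+u)$) is routine bookkeeping, and differentiability of $e$ in $\beta$ comes for free from the monotone parametrization $\beta=\beta(u)$.
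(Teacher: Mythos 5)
Your proof is correct, and it takes a genuinely different route from the paper. The paper's argument is a brief ``proof sketch'' that appeals to general random-matrix-theory machinery: after the substitution $\tilde e = \beta\varepsilon e$, the fixed-point equation is recognized as having the form of a Stieltjes transform $m(z)=\bigl[-z + c\int \tfrac{t}{1+tm(z)}\,d\nu(t)\bigr]^{-1}$ of a non-negative finite measure, and $e+\beta e^\prime$ is identified with a derivative of that Stieltjes transform, which is automatically positive. This is short but relies on the reader to supply the precise correspondence between the $\beta$-dependence and the $z$-dependence, and on the external theory from Silverstein--Bai. Your argument is entirely elementary and explicit: you linearize the fixed-point equation in $u=\beta\varepsilon e$ to obtain $e=(1-2c)+\tfrac{c}{1+u}$, invert to get a closed-form parametrization $\beta=\beta(u)$, and then compute $e+\beta e^\prime = \tfrac{1}{\varepsilon}\,\tfrac{du}{d\beta} = \tfrac{e^2(1+u)^2}{(1-c)(1+2u)+(1-2c)u^2}$ directly. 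The only remaining work is a sign check on the denominator, which you handle cleanly by splitting into $c\le 1/2$ (trivial) and $1/2<c<1$ (where the positivity constraint $(2c-1)u<1-c$ imposed by $e>0$ is exactly what is needed). As a bonus your computation yields an explicit positive formula for $e+\beta e^\prime$ rather than a mere sign, and delivers differentiability of $e$ in $\beta$ as a by-product via the inverse function theorem; the trade-off is a somewhat longer quotient-rule bookkeeping step that the paper's high-level argument avoids.
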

\begin{proof}[Proof Sketch]
From \cite{SilversteinBai1995} we know that an object of the form
\begin{align*}
	m(z) = \big[ -z + c\int \frac{t}{1+tm(z)}d\nu(t) \big]^\mo 
\end{align*}
where $\nu$ is a non negative finite measure, is a so-called Stieltjes transform of a measure $\nu$, defined $\forall z \notin \supp(\nu)$.
Adapting \eqref{eq:iaRZF_eEquivalentVersion} by re-naming $\tilde{e} \eqdef \beta\varepsilon e$ we see that it is indeed a valid Stieltjes transform for an appropriately chosen measure. Finally, one recognizes $\beta e^\prime + e$ as the derivative of a Stieltjes transform, which is always positive.
\end{proof}

\section{Proof of Theorem~\ref{theo:iaRZF_DEofSINR}}
\label{sec:iaRZF_AsilomarTheo}
The objective of this section is to find a DE for the SINR term~\eqref{eq:iaRZF_FDimSINR}. A broad outline of the required steps is as follows. 
In the beginning of the proof we condition that $\Zm^m$ is fixed to some realization and we follow the steps given in \cite[Appendix~II]{Wagner2012a} for the power normalization $\nu_m$. Invoking \cite[Theorem~1]{Wagner2012a} we obtain the fundamental equations for $e_{m}$. We, then, allow $\Zm^m$ to be random and apply \cite[Theorem~3.13]{Couillet2011a} to obtain \eqref{eq:iaRZF_e}. Invoking Tonelli's theorem, it is admissible to apply the two theorems one after the other, as $\Zm^m$ is a bounded sequence with probability one. 
The DEs of all required terms are found by following \cite[Appendix~II]{Wagner2012a} again. 
This is true for the terms from Subsection~\ref{ssec:iaRZF_SINRfiniteDim}, as well. However here the interference terms ask for a slightly more generalized version of \cite[Lemma~7]{Wagner2012a}.

\subsection{Power Normalization Term}
\label{ssec:iaRZF_appAsilomarPwrNorm} 

We start by finding a DE of the term $\nu_m$, which will turn out to be a frequently reoccurring object throughout this Section. 
From \eqref{eq:iaRZF_FDimPwrCtrl}, we see that the power normalization term $\nu_m$ is defined by the relationship
\begin{align*}
\frac{P_m}{\nu_m} \frac{K_m}{N_m}
= & \frac{1}{N_m}\trace\ls  \hat{\Hm}^m_m (\hat{\Hm}^m_m)^\H \Qm_m^2 \rs  \\
= & \frac{\partial}{\partial \xi_m}\lc \frac{1}{\alpha^m_m N_m}\trace\ls 	\lr\Zm^m + \xi_m \Id_{N_m}\rr \Qm_m \rs  \rc \numberthis \label{eq:iaRZF_pwrnormdiffreform}
\end{align*}
where we used the general identities 
$
\frac{\partial}{\partial y}\lc -\trace\ls \Am\lr\Am+\Bm+y\Id\rr^\mo \rs\rc = \trace\ls \Am \lr\Am+\Bm+y\Id\rr^\mt \rs 
$
and
$
\Am \lr\Am+\Bm+y\Id\rr^\mo = \Id - \lr\Bm+y\Id\rr \lr\Am+\Bm+y\Id\rr^\mo \,.
$
The goal now is to find a deterministic object $\bar{X}_m$ that satisfies
\begin{align*}
\frac{1}{N_m}\trace\ls  \hat{\Hm}^m_m (\hat{\Hm}^m_m)^\H \Qm_m^2 \rs - \bar{X}_m \xrightarrow[N \to \infty]{\mathrm{a.s.}} 0
\end{align*}
for the regime defined in A-\ref{as:iaRZF_cchain}.

To do this, we apply \cite[Theorem~1]{Wagner2012a} to~\eqref{eq:iaRZF_pwrnormdiffreform}, where we set the respective variables to be $\Psim_i = \chi^m_{m,i}\Id$, $\Qm_N = \Zm^m + \xi_m \Id_{N_m}$, $\Bm_N = \alpha^m_m \hat{\Hm}^m_m (\hat{\Hm}^m_m)^\H + \Zm^m$ and $z = -\xi_m$.
Thus, we find the (partially deterministic) quantity
\begin{align*}
\bar{X}_m =  \frac{\partial}{\partial \xi_m} & \frac{1}{\alpha^m_m N_m}\trace\bigg[ 	
\lr\Zm^m + \xi_m \Id_{N_m}\rr  \\
    &  \bigg( \frac{1}{N_m} \sum_{j=1}^{K_m} \frac{\alpha^m_m\chi^m_{m,j}\Id_{N_m}}{1+e^j_{m}} + \Zm^m + \xi_m \Id_{N_m} \bigg)^\mo 
\bigg] 
\end{align*}
where $e^j_{m} = \alpha^m_m\chi^m_{m,j} e_{m}$ and
\begin{align*}
e_{m} = 
\frac{1}{N_m}\trace \lr \frac{1}{N_m} \sum_{j=1}^{K_m} \frac{\alpha^m_m\chi^m_{m,j}\Id_{N_m}}{1+\alpha^m_m\chi^m_{m,j} e_{m}} + \Zm^m + \xi_m \Id_{N_m} \rr^\mo \,.
\end{align*}

\begin{remark} \label{rem:iaRZF_simpleDEofe}
	In order to reuse the results from this section later on, it will turn out to be useful to realize the following relationship involving $e_{m}$.
	\begin{align}
	\frac{1}{N_m}\trace\Qm_m - e_{m} \xrightarrow[N \to \infty]{\mathrm{a.s.}} 0 \,. \label{eq:iaRZF_SimpleEquationtoe} 
	\end{align}
	This can be quickly verified by using \cite[Theorem~1]{Wagner2012a}, via choosing $\tilde{\Rm}_i = \chi^m_{m,i}\Id$, $\Dm_N = \Id$, $\Bm_N =\alpha^m_m \hat{\Hm}^m_m (\hat{\Hm}^m_m)^\H + \Zm^m$ and $z = -\xi_m$.
\end{remark}

One notices, that the fixed-point equation $e_{m}$ contains the term $\Zm^m$, which is not deterministic. Thus, the derived objects are not yet DEs.
In order to resolve this we need to condition $\Zm^m$ to be fixed, for now.
Under this assumption we now find the DE of $e_{m}$. To do this, it is necessary to realize that $e_{m}$ contains another Stieltjes transform:
\begin{align*}
e_{m} = \frac{1}{N_m}\trace\ls \lr \Zm^m + \beta_m \Id_{N_m} \rr^\mo  \rs
\end{align*}
where 
\begin{align}
\beta_m = \frac{1}{N_m} \sum_{j=1}^{K_m} \frac{\alpha^m_m\chi^m_{m,j}}{1+\alpha^m_m\chi^m_{m,j} e_{m} } + \xi_m \,. \label{eq:iaRZF_betaforsecondST}
\end{align}
The solution becomes immediate once we rephrase $\Zm^m$ as
\begin{align*}
\Zm^m 	& = \sum_{l\neq m} \sum_{k=1}^{K_l} \alpha^m_l \hat{\hv}^{m}_{l,k} (\hat{\hv}^{m}_{l,k})^\H	
= \check{\Hm}^m_{[m]} \Am^m_{[m]} \lr\check{\Hm}^m_{[m]}\rr^\H
\end{align*}
where $\check{\Hm}^m_{[m]} \in \CC^{N_m \times K_{[m]}}$, with $K_{[m]}=\sum_{l\neq m} K_l$, is the aggregated matrix of the vectors $\check{\hv}^{m}_{l,k}\sim \CCC\NNN(0,\frac{1}{N_m}\Id_{N_m})\,, \forall\ l\neq m$ 
and 
\begin{align*}
\Am^m_{[m]} = \diag &\Big[ \alpha^m_1 \chi^m_{1,1}, \ldots, \alpha^m_1 \chi^m_{1,K_1}, \alpha^m_2 \chi^m_{2,1}, \ldots, \\
&\alpha^m_2 \chi^m_{2,K_2}, \cdots, \alpha^{m}_{m-1} \chi^m_{m-1,K_{m-1}}, \\
&\alpha^{m}_{m+1} \chi^m_{m+1,1}, \cdots, \alpha^{m}_{L} \chi^m_{L,K_L}\Big]
\end{align*}
i.e., a diagonal matrix with the terms pertaining to $\alpha^m_m$ removed.

One can directly apply \cite{SilversteinBai1995} or \cite{Couillet2011a}[Theorem~3.13, Eq~3.23] with $\Tm = \Am^m_{[m]}$ and $\Xm = (\check{\Hm}^m_{[m]})^\H$. Being careful with the notation ($\Xm \Tm \Xm^\H$ instead of $(\check{\Hm}^m_{[m]})^\H \Am^m_{[m]} \check{\Hm}^m_{[m]}$), we arrive at:
\begin{align*}
e_{m}		&= \frac{1}{N_m}\trace\lc \ls \check{\Hm}^m_{[m]} \Am^m_{[m]} (\check{\Hm}^m_{[m]})^\H + \beta_m \Id_{N_m} 
\rs^\mo  \rc 
\end{align*}
where				
\begin{align*}
e_{m} - \frac{1}{N_m} \ls \beta_m + 
\frac{1}{N_m} \sum_{l\neq m}^L \sum_k^{K_l} \frac{ \alpha^m_l \chi^m_{l,k} }{ 1+ \alpha^m_l \chi^m_{l,k} e_{m} }
\rs^\mo 		\xrightarrow[N \to \infty]{\mathrm{a.s.}} 0 \,.
\end{align*}
Here we used Remark~\ref{rem:iaRZF_simpleDEofe} and $\beta_m$ is given in~\eqref{eq:iaRZF_betaforsecondST}.

Combining the intermediate results, using Remark~\ref{rem:iaRZF_simpleDEofe} and the relationship $\trace\Am\lr\Am+x\Id\rr^\mo = \trace\Id-x\trace\lr\Am+x\Id\rr^\mo$ with $\Am=\Zm^m + \xi_m \Id_{N_m}$, we arrive at
\begin{align*}
\bar{X}_m = - \frac{1}{\alpha^m_m N_m} \sum_{j=1}^{K_m} \frac
{ \alpha^m_m\chi^m_{m,j} e_{m}^\prime }
{ (1+\alpha^m_m\chi^m_{m,j} e_{m})^2 }
\end{align*}
where $e_{m}^\prime$ is shorthand for $\partial/\partial \xi_m e_{m} $ and can found (by prolonged calculus) to be 
as stated in~\eqref{eq:iaRZF_ep}, which concludes this part of the proof.

\subsection{Signal Power Term}
\label{ssec:iaRZF_appAsilomarSig} 
The important part of finding the DE of the signal power term \eqref{eq:iaRZF_FDimSig} to find a DE of $(\hv^l_{l,k})^\H \Qm_l \hat{\hv}^l_{l,k}$, which will now first be done.
Before proceeding, we remind ourselves that our chosen model of the estimated channel \eqref{eq:iaRZF_estimatedchannel} entails the following relationships: $ \hv^l_{l,k} \independent \tilde{\hv}^l_{l,k}$, $\hat{\hv}^l_{l,k}  \not\independent \hv^l_{l,k}$, $\hat{\hv}^l_{l,k} \not\independent \tilde{\hv}^l_{l,k}$, $\Qm_{l[k]} \independent \hat{\hv}^l_{l,k}$, $\Qm_{l[k]} \independent \hv^l_{l,k}$. Also, formulations containing $\hat{\hv}^l_{l,k}$ can often be split into two terms comprising $\hv^l_{l,k}$ and $\tilde{\hv}^l_{l,k}$.
Hence, the application of Lemmas~\ref{lem:iaRZF_MILI}, \ref{lem:iaRZF_TrL}, \ref{lem:iaRZF_R1PL} and Corollary~\ref{cor:iaRZF_TrL0}, in the following is well justified. Employing \eqref{eq:iaRZF_SimpleEquationtoe} one sees
\begin{align*}
(\hv^l_{l,k})^\H \Qm_l \hat{\hv}^l_{l,k} 
- 
\frac{\sqrt{\chi^l_{l,k}}\sqrt{(1-(\tau^l_l)^2)} e_{(l)} }{1 + \alpha^l_l \chi^l_{l,k} e_{(l)} }
\xrightarrow[N \to \infty]{\mathrm{a.s.}} 0 \,.
\end{align*}
Finally, applying this result to the complete formulation \eqref{eq:iaRZF_FDimSig}, we arrive at the familiar term from Theorem~\ref{theo:iaRZF_DEofSINR}:
\begin{align*}
\overline{\mathrm{Sig}}^{(l)}_{l,k} =  \overline{\nu}_l (\chi^l_{l,k})^2 e_{(l)}^2 \lr 1-(\tau^l_l)^2 \rr (f^l_{l,k})^2 \,.
\end{align*}

\subsection{Preparation for Interference Terms}
\label{ssec:iaRZF_appAsilomarGenLem7} 
In this subsection we derive the deterministic equivalents of the two terms $(\hv^l_{l,k})^\H \Bm \Qm_l \hv^l_{l,k}$ and $(\hv^l_{l,k})^\H \Bm \Qm_l \tilde{\hv}^l_{l,k}$, where $\Bm \in \CC^{N_l\times N_l}$ has uniformly bounded spectral norm w.r.t.\ $N_l$ and is independent of $\hv^l_{l,k}$ and $\tilde{\hv}^l_{l,k}$.
The following approach is based on and slightly generalizes \cite[Lemma~7]{Wagner2012a}.
First, 
it is helpful to realize an implication of our resolvent notation (Definition~\ref{def:iaRZF_DefResolvents}) and channel estimation model \eqref{eq:iaRZF_estimatedchannel}:
\begin{align*}
\Qm_{a}^\mo - \Qm_{a[bc]}^\mo &=  c_0 \hv^{a}_{b,c} (\hv^{a}_{b,c})^\H + c_2 \hv^{a}_{b,c} (\tilde{\hv}^{a}_{b,c})^\H  + \\
& \qquad c_2 \tilde{\hv}^{a}_{b,c} (\hv^{a}_{b,c})^\H + c_1 \tilde{\hv}^{a}_{b,c} (\tilde{\hv}^{a}_{b,c})^\H  \numberthis \label{eq:iaRZF_Lem7SebResSum}
\end{align*}
where $c_0 = \alpha^a_b \chi^a_{b,c} \lr 1 - (\tau^a_b)^2 \rr$, $c_1 = \alpha^a_b \chi^a_{b,c} (\tau^a_b)^2$ and $c_2 = \alpha^a_b \chi^a_{b,c} \sqrt{(1-(\tau^a_b)^2)} \tau^a_b$. 
We omitted designating the dependencies of $c$ on $a$ and $b$, as this is always clear from the context.
To ease the exposition, we also introduce the following abbreviations
\begin{align*}
Y_1 &\eqdef (\tilde{\hv}^l_{l,k})^\H \Qm_{l[k]} \hv^l_{l,k} & Y_4 &\eqdef (\hv^l_{l,k})^\H \Bm \Qm_{l[k]} \hv^l_{l,k} \\
Y_2 &\eqdef (\hv^l_{l,k})^\H \Qm_{l[k]} \tilde{\hv}^l_{l,k} & Y_5 &\eqdef (\tilde{\hv}^l_{l,k})^\H \Qm_{l[k]} \tilde{\hv}^l_{l,k} \\
Y_3 &\eqdef (\hv^l_{l,k})^\H \Bm \Qm_{l[k]} \tilde{\hv}^l_{l,k}  & Y_6 &\eqdef (\hv^l_{l,k})^\H \Qm_{l[k]} \hv^l_{l,k} \,.
\end{align*}
Finally, we begin with the term $(\hv^l_{l,k})^\H \Bm \Qm_l \tilde{\hv}^l_{l,k}$:
\begin{align*}
&(\hv^l_{l,k})^\H \Bm \Qm_l \tilde{\hv}^l_{l,k} - (\hv^l_{l,k})^\H \Bm \Qm_{l[k]} \tilde{\hv}^l_{l,k} \eqtextt{\eqref{eq:iaRZF_ResolventId}} \\ 
& \qquad 
- (\hv^l_{l,k})^\H \Bm \Qm_l \lr \Qm_l^\mo - \Qm_{l[k]}^\mo \rr \Qm_{l[k]} \tilde{\hv}^l_{l,k}
\end{align*}
and, using \eqref{eq:iaRZF_Lem7SebResSum}, we find 
\begin{align}
&(\hv^l_{l,k})^\H \Bm \Qm_l \tilde{\hv}^l_{l,k} =
\frac{
	Y_3
	- (\hv^l_{l,k})^\H \Bm \Qm_l \hv^l_{l,k}  
	\lr
	c_0 Y_2 + c_2 Y_5
	\rr
}{
1 + c_2 Y_2 + c_1 Y_5
}\, . \label{eq:iaRZF_Lem7Seb2} 
\end{align}
Similarly, for the term $(\hv^l_{l,k})^\H \Bm \Qm_l \hv^l_{l,k}$ we arrive at
\begin{align}
&(\hv^l_{l,k})^\H \Bm \Qm_l \hv^l_{l,k} 
\lr 
1 + c_0 Y_6 + c_2 Y_1
\rr  \nonumber \\ 
& \qquad = 
Y_4  - (\hv^l_{l,k})^\H \Bm \Qm_l \tilde{\hv}^l_{l,k} 
\lr
c_2 Y_5 + c_1 Y_1
\rr \, . 	
\label{eq:iaRZF_Lem7Seb1p} 
\end{align}
Now, applying \eqref{eq:iaRZF_Lem7Seb2} to \eqref{eq:iaRZF_Lem7Seb1p}, one arrives at 
\begin{align*}
&(\hv^l_{l,k})^\H \Bm \Qm_l \hv^l_{l,k} \\
& \times \ls
\lr 
1 + c_0 Y_6 + c_2 Y_1 
\rr 
- 
\frac{
	\lr c_0 Y_2 + c_2 Y_5 \rr\lr c_2 Y_6 + c_1 Y_1 \rr
}{
1 + c_2 Y_2 + c_1 Y_5
}
\rs \\
&\qquad = 
Y_4
- \frac{
	(\hv^l_{l,k})^\H \Bm \Qm_l \tilde{\hv}^l_{l,k} 
	\lr
	c_2 Y_6 + c_1 Y_1 
	\rr 
}{
1 + c_2 Y_2 + c_1 Y_5
}
\,. \numberthis \label{eq:iaRZF_Lem7SebBeforeFirstAsy}
\end{align*}
Similar to Appendix~\ref{ssec:iaRZF_appAsilomarSig}, we notice that $Y_1$, $Y_2$ and $Y_3$, converge almost surely to $0$ in the large system limit:
\begin{align*}
Y_1, Y_2, Y_3 \xrightarrow[N \to \infty]{\mathrm{a.s.}} 0 \,.
\end{align*}
We also foresee that
\begin{align*}
Y_4 - u^\prime \xrightarrow[N \to \infty]{\mathrm{a.s.}} 0\,, \quad
Y_5 - u_1 \xrightarrow[N \to \infty]{\mathrm{a.s.}} 0\,, \quad
Y_6 - u_2 \xrightarrow[N \to \infty]{\mathrm{a.s.}} 0 
\end{align*}
where the values for $u^\prime\,, u_1$ and $u_2$ are not yet of concern.
Thus, \eqref{eq:iaRZF_Lem7SebBeforeFirstAsy} finally leads to
\begin{align*}
(\hv^l_{l,k})^\H \Bm \Qm_l \hv^l_{l,k} 
\ls
\lr 
1 + c_0 u_2 
\rr 
-
\frac{
	\lr c_2 u_1 \rr\lr c_2 u_2 \rr
}{
1 + c_1 u_1
}
\rs - u^\prime
\xrightarrow[N \to \infty]{\mathrm{a.s.}} 0
\end{align*}
and we finally find the expression we were looking for
\begin{align*}
(\hv^l_{l,k})^\H \Bm \Qm_l \hv^l_{l,k} - \frac{u^\prime\lr 1 + c_1 u_1 \rr}{1+c_1u_1+c_0u_2+\lr c_0c_1-c_2^2\rr u_1u_2} \xrightarrow[N \to \infty]{\mathrm{a.s.}} 0
\,. \numberthis \label{eq:iaRZF_Lem7SebAfterAsyTerm1}
\end{align*}

In order to find the second original term ($(\hv^l_{l,k})^\H \Bm \Qm_l \tilde{\hv}^l_{l,k}$), we reform and plug \eqref{eq:iaRZF_Lem7Seb1p} into \eqref{eq:iaRZF_Lem7Seb2} 
and follow analogously the path we took to arrive at \eqref{eq:iaRZF_Lem7SebAfterAsyTerm1}. We finally find
\begin{align*}
(\hv^l_{l,k})^\H \Bm \Qm_l \tilde{\hv}^l_{l,k} - \frac{-c_2u_1u^\prime}{1+c_1u_1+c_0u_2+\lr c_0c_1-c_2^2\rr u_1u_2} \xrightarrow[N \to \infty]{\mathrm{a.s.}} 0	\,. \numberthis \label{eq:iaRZF_Lem7SebAfterAsyTerm2}
\end{align*}

\subsection{Interference Power Terms}
\label{ssec:iaRZF_appAsilomarInt} 
Having obtained the preparation results in Appendix~\ref{ssec:iaRZF_appAsilomarGenLem7} we can now continue to find the DEs for different parts of the interference power term. From \eqref{eq:iaRZF_FDimInt} we arrive at
\begin{equation}
\begin{aligned}
\mathrm{Int}^{(l)}_{l,k} 			
&= \sum_{m\neq l} \nu_m \chi^m_{l,k} \underbrace{(\hv^m_{l,k})^\H \Qm_m \hat{\Hm}^m_m  (\hat{\Hm}^m_m)^\H \Qm_m \hv^m_{l,k}}_{\text{Part~A$_m$}} \\
& \qquad + \nu_l \chi^l_{l,k} \underbrace{(\hv^l_{l,k})^\H \Qm_l \hat{\Hm}^l_{l[k]} (\hat{\Hm}^l_{l[k]})^\H \Qm_l  \hv^l_{l,k}}_{\text{Part~B}} \,. 
\end{aligned}
\label{eq:iaRZF_IntABsplit}
\end{equation}
We start by treating \eqref{eq:iaRZF_IntABsplit}~Part~B first. Employing the relationships $\Am\Bm\Dm = \Am\Cm\Dm + \Am(\Bm-\Cm)\Dm$ and \eqref{eq:iaRZF_ResolventId} one finds 
\begin{align*}
\text{Part~B}  &=
(\hv^l_{l,k})^\H \Qm_{l[k]} \hat{\Hm}^l_{l[k]} (\hat{\Hm}^l_{l[k]})^\H \Qm_l \hat{\Hm}^l_{l[k]} \hv^l_{l,k} \\
& - (\hv^l_{l,k})^\H \Qm_l \ls \Qm_l^\mo - \Qm_{l[k]}^\mo \rs \Qm_{l[k]} \hat{\Hm}^l_{l[k]} (\hat{\Hm}^l_{l[k]})^\H \Qm_l  \hv^l_{l,k} \,.
\end{align*}
Using the relationship~\eqref{eq:iaRZF_Lem7SebResSum} pertaining to $\ls \Qm_l^\mo - \Qm_{l[k]}^\mo \rs$, we can split Part~B as
\begin{align*}
\text{Part~B}
& = X_1 - c_0 X_3 X_1 - c_2 X_3 X_2 - c_2 X_4 X_1 - c_1 X_4 X_2 \,.
\end{align*}
Where we have found and abbreviated the $4$ quadratic forms,
\begin{align*}
X_1 &= (\hv^l_{l,k})^\H \Qm_{l[k]} \hat{\Hm}^l_{l[k]} (\hat{\Hm}^l_{l[k]})^\H \Qm_l \hv^l_{l,k} \\
X_2 &= (\tilde{\hv}^l_{l,k})^\H \Qm_{l[k]} \hat{\Hm}^l_{l[k]} (\hat{\Hm}^l_{l[k]})^\H \Qm_l  \hv^l_{l,k} \\
X_3 &= (\hv^l_{l,k})^\H \Qm_l \hv^l_{l,k} \\
X_4 &= (\hv^l_{l,k})^\H \Qm_l \tilde{\hv}^l_{l,k} \,.
\end{align*}

To find the deterministic equivalents for $X_1$ and $X_2$, we can use \eqref{eq:iaRZF_Lem7SebAfterAsyTerm1} and \eqref{eq:iaRZF_Lem7SebAfterAsyTerm2}, respectively, where $\Bm = \Qm_{l[k]} \hat{\Hm}^l_{l[k]} (\hat{\Hm}^l_{l[k]})^\H$.
The respective variables $u_1, \, u_2$ and $u^\prime$ for this choice of $\Bm$ are found (using the same standard techniques as in Appendix~\ref{ssec:iaRZF_appAsilomarSig}) to be
\begin{align*}
u_1 = (\tilde{\hv}^l_{l,k})^\H \Qm_{l[k]} \tilde{\hv}^l_{l,k} \qquad
\Rightarrow \quad &u_1 - e_{l} \xrightarrow[N \to \infty]{\mathrm{a.s.}} 0\,.
\end{align*}
Analogously,
\begin{align*}	
u_1 - e_{(l)} \xrightarrow[N \to \infty]{\mathrm{a.s.}} 0\,.
\end{align*}
Hence, we see that $u_1$ and $u_2$ converge to the same value and we will abbreviate them henceforth as $u$. For the still missing term $u^\prime$ we arrive at
\begin{align*}	
u^\prime &= (\hv^l_{l,k})^\H \Qm_{l[k]} \hat{\Hm}^l_{l[k]} (\hat{\Hm}^l_{l[k]})^\H \Qm_{l[k]} \hv^l_{l,k} \\
& \Rightarrow u^\prime - g_{l} \xrightarrow[N \to \infty]{\mathrm{a.s.}} 0
\end{align*}
where the last step makes have use of the results in Appendix~\ref{ssec:iaRZF_appAsilomarPwrNorm}.
Also, we remind ourselves that we have $c_0 = \alpha^l_l \chi^l_{l,k} \lr 1 - (\tau^l_l)^2 \rr$, $c_1 = \alpha^l_l \chi^l_{l,k} (\tau^l_l)^2$ and $c_2 = \alpha^l_l \chi^l_{l,k} \sqrt{(1-(\tau^l_l)^2)} \tau^l_l$, hence $c_0+c_1 = \alpha^l_l \chi^l_{l,k}$ and $c_0c_1-c_2^2 = 0$.
So, finally, we have
\begin{align*}
X_1 - \frac{u^\prime\lr 1 + c_1 u \rr}{1+ \lr c_1+c_0 \rr u} \xrightarrow[N \to \infty]{\mathrm{a.s.}} 0 
\end{align*}
and similarly 
\begin{align*}
\text{ and }
X_2 - \frac{-c_2uu^\prime}{1+ \lr c_1+c_0 \rr u} \xrightarrow[N \to \infty]{\mathrm{a.s.}} 0 \,.
\end{align*}

To find the DEs for $X_3$ and $X_4$, we can again use \eqref{eq:iaRZF_Lem7SebAfterAsyTerm1} and \eqref{eq:iaRZF_Lem7SebAfterAsyTerm2}, respectively. This time $\Bm = \Id$ and hence the variables simplify to 
$
u^\prime = u_1 = u_2 \eqdef u
$, 
where
$
u - e_{l} \xrightarrow[N \to \infty]{\mathrm{a.s.}} 0\,.
$
Thus,
\begin{align*}
X_3 - \frac{u \lr 1 + c_1 u \rr}{1+ \lr c_1+c_0 \rr u} & \xrightarrow[N \to \infty]{\mathrm{a.s.}} 0 \\
X_4 - \frac{-c_2 u^2}{1+ \lr c_1+c_0 \rr u} & \xrightarrow[N \to \infty]{\mathrm{a.s.}} 0 \,.
\end{align*}

Combining all results after further simplifications, we can express the DE of Part~B, i.e., $\overline{\text{Part~B}}$, as 
\begin{align*}
\overline{\text{Part~B}} = g_{l} \frac{1-(\tau^l_l)^2}{\lr 1+\alpha^l_l\chi^l_{l,k}e_{l} \rr^2} + g_{l} (\tau^l_l)^2 \,.
\end{align*}

The next step is to derive the DE of \eqref{eq:iaRZF_IntABsplit}~Part~A$_m$, i.e., $\overline{\text{Part~A}}_m$.
Fortunately, the sum obliges $m\neq l$ and, thus, the same derivation like for Part~B applies.
Hence, we arrive at
\begin{align*}
\overline{\text{Part~A}}_m = g_{m} \frac{1-(\tau^m_l)^2}{\lr 1+\alpha^m_l\chi^m_{l,k}e_{m} \rr^2} + g_{m} (\tau^m_l)^2 \,.
\end{align*}

Combing Part~B and the sum of Part~A$_m$ with our original expression of the interference power, we  arrive at the familiar expression from Theorem~\ref{theo:iaRZF_DEofSINR}.


\bibliographystyle{IEEEtran}
\bibliography{IEEEabrv,JabrefDatabaseThesisFinal}

\begin{thebibliography}{10}
\providecommand{\url}[1]{#1}
\csname url@samestyle\endcsname
\providecommand{\newblock}{\relax}
\providecommand{\bibinfo}[2]{#2}
\providecommand{\BIBentrySTDinterwordspacing}{\spaceskip=0pt\relax}
\providecommand{\BIBentryALTinterwordstretchfactor}{4}
\providecommand{\BIBentryALTinterwordspacing}{\spaceskip=\fontdimen2\font plus
\BIBentryALTinterwordstretchfactor\fontdimen3\font minus
  \fontdimen4\font\relax}
\providecommand{\BIBforeignlanguage}[2]{{%
\expandafter\ifx\csname l@#1\endcsname\relax
\typeout{** WARNING: IEEEtran.bst: No hyphenation pattern has been}%
\typeout{** loaded for the language `#1'. Using the pattern for}%
\typeout{** the default language instead.}%
\else
\language=\csname l@#1\endcsname
\fi
#2}}
\providecommand{\BIBdecl}{\relax}
\BIBdecl

\bibitem{Bjornson2013d}
E.~Bj{\"{o}}rnson and E.~Jorswieck, ``{Optimal Resource Allocation in
  Coordinated Multi-Cell Systems},'' \emph{Foundations and Trends in
  Communications and Information Theory}, vol.~9, no. 2-3, pp. 113--381, 2013.

\bibitem{Jose2011b}
J.~Jose, A.~Ashikhmin, T.~Marzetta, and S.~Vishwanath, ``{Pilot Contamination
  and Precoding in Multi-Cell {TDD} Systems},'' \emph{IEEE Transactions on
  Communications}, vol.~10, no.~8, pp. 2640--2651, 2011.

\bibitem{Hoydis2013c}
J.~Hoydis, K.~Hosseini, S.~ten Brink, and M.~Debbah, ``{Making Smart Use of
  Excess Antennas: Massive MIMO, Small Cells, and TDD},'' \emph{{Bell Labs
  Technical Journal}}, vol.~18, no.~2, pp. 5--21, 2013.

\bibitem{CISCO2013}
Cisco, ``{Cisco Visual Networking Index: Global Mobile Data Traffic Forecast
  Update, 2012-2017},'' \emph{White Paper}, 2013.

\bibitem{1000xChallenge2013}
G.~Americas, ``{Meeting the 1000x Challenge: The Need for Spectrum, Technology
  and Policy Innovation},'' 4G Americas, Tech. Rep., October 2013.

\bibitem{Webb2007}
W.~Webb, \emph{{Wireless Communications: The Future}}.\hskip 1em plus 0.5em
  minus 0.4em\relax John Wiley \& Sons, 2007.

\bibitem{andrews2012femtocells}
J.~G. Andrews, H.~Claussen, M.~Dohler, S.~Rangan, and M.~C. Reed,
  ``{Femtocells: Past, Present, and Future},'' \emph{IEEE Journal on Selected
  Areas in Communications}, vol.~30, no.~3, pp. 497--508, 2012.

\bibitem{Hoydis2011c}
J.~Hoydis, M.~Kobayashi, and M.~Debbah, ``{Green Small-Cell Networks},''
  \emph{IEEE Vehicular Technology Magazine}, vol.~6, no.~1, pp. 37--43, Mar.
  2011.

\bibitem{Dai2004}
H.~Dai, A.~F. Molisch, and H.~V. Poor, ``{Downlink Capacity of
  Interference-limited {MIMO} Systems with Joint Detection},'' \emph{IEEE
  Transactions on Wireless Communications}, vol.~3, no.~2, pp. 442--453, 2004.

\bibitem{GesbertYu2010}
D.~Gesbert, S.~Hanly, H.~Huang, S.~Shamai~Shitz, O.~Simeone, and W.~Yu,
  ``{Multi-Cell {{MIMO}} Cooperative Networks: A New Look at Interference},''
  \emph{IEEE Journal on Selected Areas in Communications}, vol.~28, no.~9, pp.
  1380--1408, 2010.

\bibitem{Marzetta2010a}
T.~Marzetta, ``{Noncooperative Cellular Wireless with Unlimited Numbers of Base
  Station Antennas},'' \emph{IEEE Transactions on Wireless Communications},
  vol.~9, no.~11, pp. 3590--3600, November 2010.

\bibitem{Irmer2011}
R.~Irmer, H.~Droste, P.~Marsch, M.~Grieger, G.~Fettweis, S.~Brueck, H.-P.
  Mayer, L.~Thiele, and V.~Jungnickel, ``{Coordinated Multipoint: Concepts,
  Performance, and Field Trial Results},'' \emph{IEEE Communications Magazine},
  vol.~49, no.~2, pp. 102--111, 2011.

\bibitem{GesbertSalzer2007}
D.~Gesbert, M.~Kountouris, R.~W. Heath, C.-B. Chae, and T.~S\"alzer,
  ``{Shifting the {MIMO} Paradigm},'' \emph{IEEE Signal Processing Magazine},
  vol.~24, no.~5, pp. 36--46, 2007.

\bibitem{Peel2005a}
C.~Peel, B.~Hochwald, and A.~Swindlehurst, ``{A Vector-Perturbation Technique
  for Near-Capacity Multiantenna Multiuser Communication---Part {I}: Channel
  Inversion and Regularization},'' \emph{IEEE Transactions on Communications},
  vol.~53, no.~1, pp. 195--202, 2005.

\bibitem{Zakhour2010a}
R.~Zakhour and D.~Gesbert, ``{Distributed Multicell-{MISO} Precoding Using the
  Layered Virtual {SINR} Framework},'' \emph{IEEE Transactions on Wireless
  Communications}, vol.~9, no.~8, pp. 2444--2448, 2010.

\bibitem{Bjornson2009e}
E.~Bj{\"{o}}rnson and B.~Ottersten, ``{On the Principles of Multicell Precoding
  with Centralized and Distributed Cooperation},'' in \emph{Proc.~WCSP}, 2009.

\bibitem{dahrouj2010coordinated}
H.~Dahrouj and W.~Yu, ``{Coordinated Beamforming for the Multicell
  Multi-Antenna Wireless System},'' \emph{IEEE Transactions on Wireless
  Communications}, vol.~9, no.~5, pp. 1748--1759, 2010.

\bibitem{Bjornson2012c}
E.~Bj{\"{o}}rnson, M.~Bengtsson, and B.~Ottersten, ``{Pareto Characterization
  of the Multicell {MIMO} Performance Region With Simple Receivers},''
  \emph{IEEE Transactions on Signal Processing}, vol.~60, no.~8, pp.
  4464--4469, 2012.

\bibitem{Bjoernson2014}
E.~Bj\"ornson, M.~Bengtsson, and B.~Ottersten, ``{Optimal Multiuser Transmit
  Beamforming: A Difficult Problem with a Simple Solution Structure},''
  \emph{IEEE Signal Processing Magazine}, vol.~31, no.~4, pp. 142--148, 2014.

\bibitem{Wagner2012a}
S.~Wagner, R.~Couillet, M.~Debbah, and D.~Slock, ``{Large System Analysis of
  Linear Precoding in {MISO} Broadcast Channels with Limited Feedback},''
  \emph{IEEE Transactions on Information Theory}, vol.~58, no.~7, pp.
  4509--4537, July 2012.

\bibitem{Hoydis2013a}
J.~Hoydis, S.~ten Brink, and M.~Debbah, ``{Massive {MIMO} in the {UL/DL} of
  Cellular Networks: How Many Antennas Do We Need?}'' \emph{IEEE Journal on
  Selected Areas in Communications}, vol.~31, no.~2, pp. 160--171, Feb. 2013.

\bibitem{Wyner1994a}
A.~Wyner, ``{Shannon-Theoretic Approach to a {Gaussian} Cellular
  Multiple-Access Channel},'' \emph{IEEE Transactions on Information Theory},
  vol.~40, no.~6, pp. 1713--1727, 1994.

\bibitem{Xu2011a}
J.~Xu, J.~Zhang, and J.~Andrews, ``{On the Accuracy of the {Wyner} Model in
  Cellular Networks},'' \emph{IEEE Transactions on Wireless Communications},
  vol.~10, no.~9, pp. 3098--3109, 2011.

\bibitem{Couillet2011a}
R.~Couillet and M.~Debbah, \emph{{Random Matrix Methods for Wireless
  Communications}}.\hskip 1em plus 0.5em minus 0.4em\relax New York, NY, USA:
  Cambridge University Press, 2011, first Edition.

\bibitem{Bjoernson2014MassiveMaximal}
E.~Bj{\"o}rnson, E.~G. Larsson, and M.~Debbah, ``{Massive MIMO for Maximal
  Spectral Efficiency: How Many Users and Pilots Should Be Allocated?}''
  \emph{IEEE Transactions on Wireless Communications}, 2014, submitted.

\bibitem{Yang2013a}
H.~Yang and T.~Marzetta, ``{Total Energy Efficiency of Cellular Large Scale
  Antenna System Multiple Access Mobile Networks},'' in \emph{Proc. IEEE Online
  Conference on Green Communiactions (OnlineGreenComm)}, 2013.

\bibitem{bjornson2013massive}
E.~Bj{\"o}rnson, J.~Hoydis, M.~Kountouris, and M.~Debbah, ``{Massive {MIMO}
  Systems with Non-Ideal Hardware: Energy Efficiency, Estimation, and Capacity
  Limits},'' \emph{IEEE Transactions on Information Theory}, July 2013,
  arXiv:1307.2584.

\bibitem{Choi2014a}
J.~Choi, D.~Love, and P.~Bidigare, ``{Downlink Training Techniques for {FDD}
  Massive {MIMO} Systems: Open-Loop and Closed-Loop Training with Memory},''
  \emph{IEEE Journal of Selected Topics in Signal Processing}, Sept. 2013,
  submitted, arXiv:1309.7712.

\bibitem{Wang2006a}
C.~Wang and R.~Murch, ``{Adaptive Downlink Multi-User {MIMO} Wireless Systems
  for Correlated Channels with Imperfect {CSI}},'' \emph{IEEE Transactions on
  Wireless Communications}, vol.~5, no.~9, pp. 2435--2436, Sept. 2006.

\bibitem{Nosrat2011a}
B.~Nosrat-Makouei, J.~Andrews, and R.~Heath, ``{{MIMO} Interference Alignment
  Over Correlated Channels With Imperfect {CSI}},'' \emph{IEEE Transactions on
  Signal Processing}, vol.~59, no.~6, pp. 2783--2794, Jun. 2011.

\bibitem{Meyer2000}
C.~D. Meyer, \emph{{Matrix Analysis and Applied Linear Algebra}}.\hskip 1em
  plus 0.5em minus 0.4em\relax Siam, 2000, vol.~2,
  http://www.matrixanalysis.com/DownloadChapters.html.

\bibitem{SilversteinBai1995}
J.~W. Silverstein and Z.~D. Bai, ``{On the Empirical Distribution of
  Eigenvalues of a Class of Large Dimensional Random Matrices},'' \emph{Journal
  of Multivariate Analysis}, vol.~54, no.~2, pp. 175--192, 1995.

\bibitem{SIL98}
Z.~D. Bai and J.~W. Silverstein, ``{No Eigenvalues Outside the Support of the
  Limiting Spectral Distribution of Large Dimensional Sample Covariance
  Matrices},'' \emph{Annals of Probability}, vol.~26, no.~1, pp. 316--345, Jan.
  1998.

\end{thebibliography}

\end{document}